\newcommand{\g}{{g}} 
\newcommand{\Mk}{\mathbbmss{M}} 
\newcommand{\dc}{\mathcal{O}} 
\newcommand{\A}{\mathcal{A}} 
\newcommand{\al}{\alpha}
\newcommand{\Ga}{\Gamma}
\newcommand{\de}{\delta}
\newcommand{\De}{\Delta}
\newcommand{\eps}{\varepsilon}
\newcommand{\la}{\lambda}
\newcommand{\La}{\Lambda}
\newcommand{\si}{\sigma}
\newcommand{\ph}{\varphi}
\newcommand{\sst}[1]{\scriptscriptstyle{#1}}
\newcommand{\id}{\mathrm{id}} 
\newcommand{\NN}{\mathbbmss{N}} 
\newcommand{\R}{\mathbbmss{R}} 
\newcommand{\C}{\mathbbmss{C}} 
\newcommand{\ZZ}{\mathbbmss{Z}}
\newcommand{\supp}{\mathrm{supp}}
\newcommand{\WF}{\mathrm{WF}} 
\newcommand{\Bcal}{\mathcal {B}}
\newcommand{\Ccal}{\mathcal{E}}
\newcommand{\Dcal}{\mathcal{D}}
\newcommand{\Ecal}{\mathcal{E}} 
\newcommand{\Fcal}{\mathcal{F}}
\newcommand{\Tcal}{\mathcal{T}}
\begin{document}

\title[RG in Algebraic Quantum Field Theory]
{Perturbative Algebraic Quantum Field Theory and the Renormalization Groups}

\arxurl{0901.2038 [math-ph]} 

\author[Brunetti, D\"utsch, Fredenhagen]{R. Brunetti$^{(1)}$, M. D\"utsch$^{(2)}$,  K. Fredenhagen$^{(3)}$}

\address{$^{(1)}$ Dipartimento di Matematica, Universit\`a di Trento,\\
Via Sommarive 14, I-38050 Povo (TN), Italy\\
and I.N.F.N. sez. Trento\\
$^{(2)}$  Courant Research Center "Higher Order Structures\\ 
in Mathematics", University G\"ottingen,\\
Mathematisches Institut, Bunsenstr.~3-5, \\
D-37073 G\"ottingen, Germany\\
$^{(3)}$ II Inst. f. Theoretische Physik, Universit\"at Hamburg,\\
Luruper Chaussee 149, D-22761 Hamburg, Germany}  
\addressemail{brunetti@science.unitn.it, duetsch@physik.unizh.ch, klaus.fredenhagen@desy.de}

\begin{abstract}
A new formalism for the perturbative construction of algebraic quantum field theory is developed. The formalism allows the  treatment of low dimensional theories and of non-polynomial interactions. We discuss the connection between the St\"uckelberg-Petermann renormalization group which describes the freedom in the perturbative construction with the Wilsonian idea of theories at different scales . In particular we relate the approach to renormalization in terms of Polchinski's Flow Equation to the Epstein-Glaser method. We also show that the renormalization group in the sense of Gell-Mann-Low (which characterizes 
the behaviour of the theory under the change of all scales) is a 1-parametric subfamily of the St\"uckelberg-Petermann group and that this subfamily 
is in general only a cocycle. Since the algebraic structure of the  St\"uckelberg-Petermann group does not depend on global quantities,
this group can be formulated in the (algebraic) adiabatic limit without meeting any infrared divergencies. In particular we derive an 
algebraic version of the Callan-Symanzik equation and define the $\beta$-function in a state independent way.\\
\\
{\bf PACS}: 11.10.Cd, 11.10.Hi, 11.15.Bt 
\end{abstract}

\maketitle

\newpage

  \theoremstyle{plain}
  \newtheorem{df}{Definition}[section]
  \newtheorem{teo}[df]{Theorem}
  \newtheorem{prop}[df]{Proposition}
  \newtheorem{cor}[df]{Corollary}
  \newtheorem{lemma}[df]{Lemma}
  
  \theoremstyle{plain}
  \newtheorem*{Main}{Main Theorem}
  \newtheorem*{MainT}{Main Technical Theorem}

  \theoremstyle{definition}
  \newtheorem{oss}[df]{Remark}

 \theoremstyle{definition}
 \newtheorem{ass}{\underline{\textit{Assumption}}}[section]

 \tableofcontents

\section{Motivations and plan}
The locally covariant formulation of quantum field theory \cite{BFV,HW1} is based on the principle that the theory has to be built in terms of quantities that are uniquely determined by local properties of spacetime. This means in particular that concepts like vacuum states and particles should not enter the
formulation of the theory. Instead these concepts are relevant in the interpretation of the theory under suitable circumstances.

Historically, the main obstacle in performing a purely local construction was the important r\^{o}le the spectrum condition plays for finer properties of the theory, and it needed the insight of Radzikowski \cite{Rad} that the local information of the spectrum condition can be formulated in terms of a condition on the wave front sets (microlocal spectrum condition, see also \cite{BFK}). After this breakthrough, a thorough construction of renormalized perturbative quantum field theory on generic curved spacetimes could successfully be performed \cite{BF,HW1,HW2} in the framework of Algebraic Quantum Field Theory.

Nevertheless, the theory obtained looks still somewhat remote from more standard formulations of quantum field theory as one may find them in typical text books, and one may ask how ideas like the Renormalization Group show up in the locally covariant framework. A first answer to this latter question was already given 
in \cite{HW3}  (for a more explicit formulation in Minkowski space see \cite{DFret} and \cite{DFbros}), 
and an elaboration on this approach was the starting point of the present work. 

It turned out to be appropriate to revise the formulation in such a way that the dependence on
nonlocal features is eliminated completely. In older formulations a dependence on the choice of a Hadamard 2-point function (replacing the nongeneric vacuum) was used, and it had to be shown that the theory is actually independent of this choice. The present point of view is that the Hadamard functions are only used in order to characterize a topology. The topology is then shown to be independent of the Hadamard function used. All algebraic structures, however, are defined without recourse to a Hadamard function.

The plan of the paper is the following: The next section gives definitions and constructions of the main technical elements of our procedures, namely all kind of algebraic structures, as time-ordered products, for instance. It contains a simplification of the rigorous treatment of our framework which is postponed to the next section. A particularly pressed reader may skip all technicalities of the $3$d section and concentrate only on the main definitions of the Hadamard families of functions and local observables. The $4$th section deals with the St\"uckelberg-Bogoliubov-Epstein-Glaser approach to renormalization, but done in a novel manner. 
The next section is the core of our paper. It contains the definitions and comparison of three different versions of the renormalization group. The discussion is continued in the $6$th section where the algebraic adiabatic limit is discussed, via the introduction of a novel tool in algebraic quantum field theory termed generalized Lagrangians, and an algebraic form of the Callan-Symanzik equation is found. The last section deals with a couple of typical examples as $\varphi^4$ in $4$ dimensions, and $\varphi^3$ in $6$ dimensions. Here we compare with the textbook treatments (beta function) and find perfect agreement. Some technicalities are deferred to the appendices A, B and C.

\section{Definitions, initial constructions and outlook}\label{sec:defcon}
We look at the theory of a scalar field $\varphi$. 
For each spacetime $M$ of dimension $d\ge 2$ we consider the space of infinitely differentiable functions as the configuration space $\Ccal(M)\equiv C^\infty(M)$ (a Fr\'echet space). The observables of the theory are functionals on $\Ccal(M)$ which are infinitely often differentiable, such that the functional derivatives 
are test functions with compact support. A simple example is the functional $\Ccal(M) \to \C$, $\varphi\mapsto \int dx\varphi(x)f(x)$ with some test function $f\in\Dcal(M)\equiv C^\infty_0(M)$. The space of observables will be denoted by $\Fcal_0(M)$. We recall that the notion of differentiability on Fr\'echet spaces is well developed, a particularly nice introduction of which can be found in \cite{Ham}. In practice, in our treatment only the validity of the chain rule is important. Notationally, we shall use indifferently the following possibilities as equivalent writings of generic $n$th order derivatives
\[
D^n F(\varphi)(v^{\otimes n})\equiv \dfrac{\delta^n F}{\delta\varphi^n}(\varphi)(v^{\otimes n})\equiv  \left. \dfrac{d^n}{d\lambda^n}F(\varphi +\lambda v)\right |_{\lambda=0} \equiv \langle F^{(n)}(\varphi),v^{\otimes n}\rangle\ , 
\]
where $v\in\Ccal(M)$,
with the brackets denoting, from now on, either integration in the appropriate spaces, or sometimes the duality pairing of locally convex topological spaces, and the context should hopefully make precise as to which is which. Similar equivalent formulations can be adopted for the ``kernels''
\[
F^{(n)}(\varphi)(x_1,\dots,x_n)\equiv \dfrac{\delta^n F}{\delta\varphi^n}(\varphi)(x_1,\dots,x_n)\equiv\dfrac{\delta^n F}{\delta\varphi(x_1)\cdots\delta\varphi(x_n)}\ .
\]
We define the support of a functional $F\in\Fcal_0(M)$ as the set of points $x\in M$ such that $F$ depends on the behaviour of $\ph$ in every neighbourhood $U\in\mathcal{U}(x)$ of $x$,  
\begin{gather}\label{eq:support}
\supp(F)\doteq\{x\in M\, |\, \forall U\in\mathcal{U}(x)\exists \ph,\psi\in\Ecal(M),\ \supp\psi\subset U\notag\\
\text{ such that }F(\ph+\psi)\neq F(\ph)\}\ ,
\end{gather}
and we require it to be a compact set in $M$.

A virtue of the new approach is that we do not restrict ourselves to polynomial functionals. This is motivated, for example, by quantum gravity 
and by the existence of renormalizable models with non-polynomial interactions in $d=2$ dimensions.

We consider a linear hyperbolic differential operator for the scalar field 
with respect to the spacetime metric $g$ with signature $(+,-,\cdots,-)$,
\begin{equation}
P=\square_{\g}+m^2+\xi R \ ,\label{KGop}
\end{equation}
with the scalar curvature $R$ and real parameters $m^2$ and $\xi$. On a globally hyperbolic time oriented spacetime $P$ possesses unique retarded and advanced Green's functions $\De_R$ and 
$\De_A$, respectively. In terms of these Green's functions we can introduce two important structures on the space of formal power series in $\hbar$ with coefficients in the vector space $\Fcal_0(M)$. 

First we introduce a $\star$-product by a twist induced by the commutator function $\De=\De_R-\De_A$ (cf.~e.g.~\cite{DFsiena}),

\begin{equation}\label{star product}
F\star G\doteq \mathcal{M}\circ \exp({i\hbar \Gamma_\Delta})(F\otimes G) \ ,
\end{equation}
where $\mathcal{M}$ denotes pointwise multiplication,
\begin{equation}
\mathcal{M}(F\otimes G)(\varphi)\doteq F(\varphi)G(\varphi)\equiv (F\cdot G)(\varphi) \ ,
\end{equation}
and $\Gamma_\Delta$ is the functional differential operator
\begin{equation}
\Gamma_\Delta\doteq\frac{1}{2}
                  \int dx\, dy \De(x,y)\frac{\de}{\de\varphi(x)}\otimes\frac{\de}{\de\varphi(y)} \ .
\end{equation}
In this way we obtain an associative algebra where the fields satisfy the commutation relation
\begin{equation}
[\varphi(f),\varphi(g)]_{\star}=i\hbar \langle f,\De g\rangle \ , \quad f,g\in \Dcal(M)\ .
\end{equation}
Complex conjugation endows the algebra with an involution,
\begin{equation}
\overline{F\star G}=\overline{G}\star\overline{F}
\end{equation}  
since $\De$ is antisymmetric and real.

The algebra contains an ideal generated by elements of the form $\varphi(Pf)$, $f\in \Dcal(M)$. The quotient algebra is just the standard algebra of the free scalar field. It will turn out, however, that it is convenient to work with the original algebra $(\Fcal_0(M),\star)$ (off shell formalism).

The second structure we need is the time ordering operator $T$ associated with the $\star$-product. 
It is defined in terms of 
the Dirac propagator $\De_D=\tfrac{1}{2}\, (\De_R+\De_A)$ which was introduced by Dirac in his treatment of the classical interaction between a point charge and the electromagnetic field \cite{Dirac}. We set
\begin{equation}
TF\doteq\exp({i\hbar \Ga_{\De_D}})F
\end{equation}
with 
\begin{equation}
\Ga_{\De_D}\doteq\frac12\int dx\,dy \De_D(x,y)\frac{\de^2}{\de\varphi(x)\de\varphi(y)} \ .
\end{equation}
Formally, $T$ may be understood as the operator of convolution with the oscillating 
Gaussian measure with covariance $i\hbar\Delta_D$,
\begin{equation}\label{path integral}
TF(\varphi)=\int d\mu_{i\hbar\Delta_D}(\phi)F(\varphi-\phi) \ . 
\end{equation}
Its inverse $T^{-1}$ (the anti time ordering operator $\overline{T}$) is obtained by replacing $\De_D$ by $-\De_D$. It coincides, due to the reality of $\De_D$, with the complex conjugate operator
\begin{equation}
\overline{T}F=\overline{T\overline{F}}\ .
\end{equation} 

We then define the time ordered product $\mathcal{M}_T$ as the pointwise product  $\mathcal{M}$ transformed by $T$,
\begin{equation}\label{top}
\mathcal{M}_T\doteq T\circ \mathcal{M}\circ (T^{-1}\otimes T^{-1})
\end{equation}
or, in a more standard notation as a binary composition (where $\cdot$ denotes the pointwise product and $\cdot_T$ the time ordered product) 
\begin{equation}\label{eq:time} 
F\cdot_T G\doteq T(T^{-1}F \cdot T^{-1}G) \ .
\end{equation}
By comparing
\begin{equation}\label{eq:timep}
\varphi(x)\cdot_T\varphi(y)=\varphi(x)\cdot\varphi(y) + \frac{i\hbar}{2}(\De_R+\De_A)(x,y) 
\end{equation}
with
\begin{equation}\label{eq:starp}
\varphi(x)\star\varphi(y)=\varphi(x)\cdot\varphi(y) + \frac{i\hbar}{2}(\De_R-\De_A)(x,y) \ ,
\end{equation}
and using the support properties  of the Green's functions, we see that $\cdot_T$ is indeed the time ordered product with respect to the $\star$-product \eqref{star product}. 

We emphasize that the time ordered product is a well defined, associative and commutative product. For this it is important, as stressed before, not to pass to the quotient algebra, where the validity of the free field equation would be in conflict with the fact that the Dirac propagator does not solve the homogeneous Klein-Gordon equation. Indeed, the ideal generated by the field equation with respect to the $\star$-product (or, equivalently, with the pointwise product) is not an ideal with respect to the time ordered product, as may be seen from the Dyson-Schwinger type equation
\begin{equation}
F\cdot_T\ph(Pf)=F\ph(Pf) +i\hbar\langle F^{(1)},f\rangle\ .
\end{equation}

Note that the appearance of the Dirac propagator in the time ordered product is due to the choice of the commutator function $\Delta$ in the $\star$-product. If we had chosen the 
Wightman two-point function $\Delta_+=\tfrac{i}{2}\,\Delta+\Delta_1$ 
in the formula \eqref{star product} for the 
$\star$-product we would have obtained the Feynman propagator $\Delta_F=i\Delta_D+\Delta_1$ in the time ordered product. But these conventions use a notion of positive frequency which is absent on generic Lorentzian spacetimes. Moreover, the logarithmic singularities at $m^2=0$ obscure the scaling behaviour. 

We now have the means to introduce interactions in our framework. Namely, let $V$ be an arbitrary element of $\Fcal_0(M)$. Then
we define the formal $S$-matrix as the time ordered exponential
\begin{equation}\label{S-matrix}
S(V)\doteq T\circ \exp\circ\ T^{-1}(V)\equiv \exp_{\cdot_T}(V)\ .
\end{equation}
One may exhibit a factor $ig/\hbar$ in $V$. The $S$-matrix would then be a formal power series in the coupling constant $g$ and a Laurent series in 
$\hbar$ (see e.g.~\cite{DFloop}). We found it more convenient to incorporate such a factor into $V$, 
such that all expressions are formal power series in $\hbar$. The coefficients in every order in $\hbar$ are analytic functionals in $V$ which may be described in terms of their (convergent) power series expressions.
 
The nonlinear interactions $V\in \Fcal_0(M)$ are, in general, nonlocal. 
As a consequence the $S$-matrix defined above fails, in general, to be unitary for imaginary
interactions $V$. Once extended to more singular functionals, unitarity of the $S$-matrix can be restored for local interactions where it takes the form of a renormalization condition. 
Inspite of the non-unitarity, the $S$-matrix for non-local interactions is quite frequently used in quantum field theory, 
in particular when dealing with effective interactions as they appear e.g.~in the renormalization method of the 
flow equation (Section~\ref{sec:flow}).

Notice also that the presence of the inverse of time ordering in the formula \eqref{eq:time}, which would be absent in a path integral formulation on the basis of \eqref{path integral},  
remove the so-called tadpole terms.

We now want to extend the operations to more singular functionals, including in particular local functionals, characterized by the condition that their functional derivatives have compact support on the diagonal,
\begin{equation}\label{local}
\frac{\de^n F}{\de\varphi^n}(x_1,\dots, x_n) =0\ ,\quad \text{if } x_i\neq x_j \text{ for some pair } (i,j) \ ,
\end{equation}
and that their wavefront set is transversal to the tangent space of the diagonal (this may be understood as a microlocal version of translation invariance).

For the $\star$-product this can be done in the following way. Namely we choose a so-called Hadamard solution $H$ of the Klein-Gordon equation (actually, a symmetric one, see next Section) and transform the $\star$-product by the operation 
\begin{equation}
\al_H\doteq\exp({\hbar\Gamma_H}) \ ,\quad
\Gamma_H\doteq\frac12 \int dx\,dyH(x,y)\frac{\de^2}{\de\varphi(x)\de\varphi(y)} \ ,\label{def(alpha_H)}
\end{equation}
to an equivalent $\star$-product $\star_H$, defined by
\begin{equation}
F\star_H G\doteq\al_H(\al_H^{-1}(F)\star \al_H^{-1}(G)) \ .
\end{equation}
The new $\star$-product $\star_H$ is obtained from the original one by replacing $\frac{i}{2}\Delta$ by $\frac{i}{2}\Delta +H$. 

By the microlocal spectrum condition \cite{BFK,Rad} the wave front set of $\frac{i}{2}\Delta +H$ is such that the transformed product can 
now be uniquely extended by sequential continuity to a space $\Fcal(M)$ of
functionals whose derivatives are distributions with appropriate wave front sets. 
The relevant topology is the H\"ormander topology for all derivatives (see Section \ref{sec:local}).
One then defines the topology on $\Fcal_0(M)$ as the initial topology with respect 
to $\al_H$ and proves that this topology does not depend on the choice of the Hadamard 
solution $H$. The sequential completion $\A(M)$ of $\Fcal_0(M)$
can be equipped with a unique sequentially continuous $\star$-product. 
In other words, $\al_H^{-1}:\Fcal(M)\rightarrow \A (M)$ is a linear bijection and the 
$\star$-product in $\A (M)$ is defined by
$\al_H^{-1}(F)\star \al_H^{-1}(G)\doteq\al_H^{-1}(F\star_H G)$.  
Roughly speaking, the used topology is characterized by the property that 
the point splitting approximations
to nonlinear local fields, e.g.
\begin{equation}
\ph(x)\ph(y)-\hbar H(x,y)\equiv \al_H^{-1}(\ph(x)\ph(y)) \ ,
\end{equation}
converge in the coincidence limit $y\to x$ for all Hadamard functions $H$. 
In the next section this procedure will be described in more detail.

The time ordered product, however, is not continuous in the topology described 
above, as may be seen, e.g., from the fact that the powers of the Feynman like propagators $H_F\doteq i\Delta_D+H$ cannot be defined by using H\"ormander's criterion for the existence of products of distributions. Its (partial) extension amounts to the process of renormalization. Quite different recipes have been developed. These are
\begin{itemize}

\item BPHZ renormalization. It relies on an expansion of the $S$-matrix in terms of Feynman graphs, 
where the vertices with $n$ adjacent lines correspond to the $n$th functional derivatives of the potential and the lines to the Feynman propagator. 
On Minkowski space, the corresponding expression can be written as an integral on momentum space. By a clever, somewhat involved procedure (the famous Forest Formula of Zimmermann \cite{Zimmermann2}) the integrand is modified by subtraction of polynomials in the momenta. 
Recently, it was observed by Kreimer \cite{Kreimer} that this procedure may be understood as an antipode of a suitable Hopf algebra. A rigorous discussion can most easily be performed on an euclidean space (with the Feynman propagator replaced by the Green's function of the corresponding elliptic operator). A rigorous discussion on Minkowski space \cite{Zimmermann1} is somewhat involved due to the fact that the modified Feynman integrals do not converge absolutely. An extension to generic spacetimes has been tried but, to the best of our knowledge, did not yet lead to a complete construction.

\item Flow equation. The idea here is to interpolate between the pointwise and the time ordered product 
by introducing a cutoff $\Lambda$ and to study the flow of the effective potential (as a function of $\Lambda$)
in the sense of Wilson. Namely let $T_{\La}=\al_{h_\La -H}$, with a differentiable family of symmetric smooth functions $(h_{\La})_{\La\in\R_+}$ with $h_0=0$  (hence $\al_HT_0=\id$) and $h_{\La}\to H_F$ in the appropriate sense (see Section \ref{sec:local} and Subsection \ref{sec:flow}) as $\La\to\infty$, hence $\al_HT_{\La}\to \al_{H_F}$. 
By means of this family we define the regularized $S$-matrix as $S_\La= \exp_{\cdot_{T_\La}}$, 
which is invertible on the functionals we consider: $S_\La^{-1}=T_{\La}\circ\mathrm{log}\circ T_{\La}^{-1}$.
The interpolating family $V_\La$ of effective potentials at scales $\La$ is defined by the requirement that the cutoff theory
with interaction $V_\La$ is equal to the exact theory with the original local interaction $V$, $S_\La(V_\La)=S(V)$, or
explicitly
\begin{equation}
V_\La=S_{\La}^{-1}\circ S(V) \ .
\end{equation}
The effective potential $V_{\La}$ is generically non local and satisfies the flow equation 
\begin{equation}
\frac{d}{d\La}V_\La= -\frac{1}{2}\left (\frac{d}{d\La}\mathcal{M}_{T_\La}\right )(V_\La \otimes V_\La)\ ,
\end{equation}
with the interpolating time ordered products
\begin{equation}
\mathcal{M}_{T_\La}\doteq T_\La \circ \mathcal{M} \circ (T_{\La}^{-1}\otimes T_{\La}^{-1}) \ .
\end{equation}
This is  Polchinski's flow equation \cite{Polchinski} in the Wick ordered form \cite{Salmhofer1} (For a proof in our formalism see Section \ref{sec:flow}). 
Up to now it was almost exclusively used for euclidean field theory on euclidean space, where the approximate time ordering operation can be built by a momentum cutoff (see, e.g. \cite{Salmhofer2}; for Minkowski space see \cite{Kopper}). In principle there is no obstacle to perform the same construction on generic spacetimes, the only obstruction being that there seems to be no locally covariant choice of the cutoff which leads to the removal of all singularities. A partial removal can be obtained by Pauli-Villars regularization. 

\item Causal Perturbation Theory. This approach was developed by Epstein and Glaser \cite{EG} on the basis of ideas of St\"uckelberg \cite{Stuck} and Bogoliubov \cite{BS}. It applies to local functionals $V$.
It fully exploits the locality  properties of the interactions and is ideally suited for an extension to generic spacetimes \cite{BF,HW1}. Its basic idea is that the time ordered product of $n$ local functionals is, without any renormalization, up to (finite) local counter terms already determined by the time ordered product of less than $n$ local functionals. The freedom in the choice of local counter terms is exactly the freedom in the choice of renormalization conditions. No cutoff is needed in this approach. The renormalization group in the sense of St\"uckelberg-Petermann characterizes the freedom in the choice of time ordering prescriptions \cite{DFret}. 

\end{itemize}
  
It is the aim of the present paper to clarify the relation between the causal approach and the flow equation and in particular to analyze the different concepts which are denoted as renormalization group. It will turn out that one has to distinguish at least three different versions of the renormalization group:
\begin{itemize}
\item The renormalization group in the sense of St\"uckelberg-Petermann,
\item The renormalization group in the sense of Gell-Mann-Low,
\item The renormalization group in the sense of Wilson.
\end{itemize}
The renormalization group in the sense of St\"uckelberg-Petermann is formed by the family of all finite renormalizations. It is really a group. The renormalization group in the sense of Gell-Mann-Low characterizes the behaviour of the theory under the change of all scales. It is a group only in the massless case; in the massive case it is rather a cocycle. The renormalization group in the sense of Wilson refers to the dependence of the theory on a cutoff. It has no simple algebraic properties, but can be characterized in terms of Polchinski's flow equation. The relation to the Connes-Kreimer approach \cite{Connes1,Connes2} will be postponed to a future paper.
  
Since only the causal approach has been extended to generic spacetimes we restrict our treatment in the following to Minkowski space $\Mk$.
Even there, the proper treatment of the dependence on the mass term is not trivial. As the locally covariant approach suggests, all real values of the parameter $m^2$ should be allowed, in spite of the fact that a vacuum state can exist only for nonnegative values of $m^2$. Since the Green's functions of the Klein Gordon operator for $m^2<0$ are no longer tempered distributions (see e.g. \cite{Bert}), a discussion in terms of support properties in momentum space  is, in general, not possible. Here, the methods of Microlocal Analysis \cite{Hor} are particularly fruitful.

\section{Enlargement of the space of observables}\label{sec:local}
In order to include nontrivial local interactions we have to enlarge the space
$\Fcal_0(\Mk)$. We do this  by transforming the 
$\star$-product (\ref{star product}) into an equivalent one corresponding to normal ordering. The standard way of doing this is to use the transformation
\[
\al_{\De_1}\doteq\exp({\hbar\Ga_{\De_1}}) : 
\Fcal_0(\Mk)\rightarrow \Fcal_0(\Mk)
\]
with 
\begin{equation}
\Ga_{\De_1}\doteq\frac{1}{2} \int dx dy\, \De_1(x,y) \frac{\de^2}{\de\varphi(x)\de\varphi(y)}
\end{equation}
where $\De_1$ is the symmetric part of the 2-point function, and to set
\begin{equation}
F\star_{\De_1}G\doteq\al_{\De_1}(\al_{\De_1}^{-1}(F)\star \al_{\De_1}^{-1}(G))\ .
\end{equation}
This just produces the standard Wick ordering. 
It has the nice feature that the product can now be extended to more general functionals, in particular to composite fields, smeared with test functions. 

The disadvantage of this prescription is that, as a function of $m^2$, it is not smooth at $m^2=0$ 
(and is not defined at $m^2<0$ ($\le 0$ in 2 dimensions)). In the light of a generally covariant framework this is problematic. In particular the smooth behavior under scaling of all dimensionful parameters at zero was crucial for the renormalization method of \cite{HW2,HW3}. 

We therefore replace $(\De_{1\,m})_{m^2> 0}$
by a family of symmetric distributions  (Hadamard functions)
$H=(H_m)_{m^2\in\R}$, $H_m\in\Dcal^\prime(\Mk^2)$, such that

\begin{itemize}

\item $H_m$ is a distributional solution of the Klein-Gordon equation in both arguments;

\item $H_m$ is invariant under Poincar\'{e} transformations;

\item $H_m+i\De_m/2$ satisfies the microlocal spectrum condition \cite{Rad,BFK};

\item For each test function $f\in \Dcal(\Mk^2)$, $\langle H_m,f \rangle$ is a smooth function of $m^2$;

\item $H_m$ scales almost homogeneously, i.e.  $ \varrho^{d-2}H_{m/\varrho}(\varrho x,\varrho y)$ is a poly\-no\-mial in $\log\varrho$.
 
 \end{itemize}

For $m^2> 0$, $H_m$ differs from $\De_{1\,m}$ by a smooth Poincar\'{e} invariant bisolution of the 
Klein Gordon equation. There is a crucial difference in the scaling behavior of $H\equiv H^{(d)}$ for even and odd
dimensions $d$ of Minkowski space \cite{DFret}. 
\begin{itemize}

\item For $d$ odd $H$ is uniquely determined (and thus scales even homogeneously). 

\item In even dimensions homogeneous scaling is not compatible with 
smoothness in $m^2$. $H$ is not uniquely determined by the conditions above, but depends 
on an additional mass parameter $\mu>0$. 
One defines
\begin{equation}
v(x,y)\doteq \frac{1}{2}\,\mu\frac{\partial}{\partial\mu} H^{\mu}_m(x,y)\label{def(v)}
\end{equation}
which can be proved to be a smooth function.

In the standard literature smoothness in $m^2$ is not required; but even with that the Wightman 
2-point function cannot be used in the massless 2-dimensional theory, since 
$\Delta^{+\,(2)}_m$ is logarithmic divergent for $m\to 0$.
\end{itemize}
Derivations of the explicit expressions for $H$ and $v$
proving these statements are given in Appendix \ref{Hadamard}.

\subsection{Algebras of observables and smooth dependence on $m^2$}
The linear maps $\al_{H^{\mu}}$ deform for every value of $m^2$ the $\star$-products $\star_m$ into 
equivalent products $\star_{m,\mu}$ which are smooth in $m^2$, in the sense that 
\begin{equation}\label{smoothness in m}
\R\ni m^2\longrightarrow (F\star_{m,\mu}G)(\varphi)
\end{equation}
is smooth for all $F,G \in \Fcal_0(\Mk)$ and all $\varphi\in\Ccal(\Mk)$. 
 
We now enlarge the space of functionals $\Fcal_0(\Mk)$ to the space $\Fcal(\Mk)$ of 
functionals which are infinitely differentiable, such that the $n$th functional derivatives are distributions with compact support and wavefront sets in the following subset of the cotangent bundle of $\Mk^n$,
\begin{equation}
\Xi_n=\{(x_1,\dots, x_n, k_1,\dots, k_n)\ |\  (k_1,\ldots k_n)\not\in (\overline{V}_+^{\, n}\cup \overline{V}_-^{\, n})\} \ .
\end{equation}
We equip this space with the following topology. First we endow the space of distributions with  wavefront sets contained in $\Xi_n$ with the H\"ormander  topology. We then define the topology on $\Fcal(\Mk)$ as the initial topology for the maps 
\begin{equation}
F\longrightarrow\frac{\de^n F}{\de\ph^n}(\ph)\ , \quad n\in\NN_0 \ .
\end{equation}
The H\"ormander topology for the space $\mathcal{E}'_C(\Mk)$ of compactly supported distributions  $t$ with wavefront sets in a closed cone  $C$ in the cotangent space is defined in the following way. 
By the definition of the wave front set (see e.g. \cite{Hor}) every properly supported pseudodifferential operator $A$ containing the set $C$ in its characteristic set will map the distribution
$t$ into a smooth function.
The topology on 
$\mathcal{E}'_C(\Mk)$ is now the initial topology for the maps 
\begin{align}
t&\longrightarrow\langle t, f\rangle\ ,\quad f\in\Ecal(\Mk)\ ,
\label{topology1}\\
t&\longrightarrow At\in\Ecal(\Mk)\label{topology2}\ ,
\end{align}
for all  properly supported pseudodifferential operators $A$ with characteristics containing $C$.
The H\"ormander topology for an open cone as $\Xi_n$ is defined as the inductive limit for all closed cones contained in it.
  
With respect to this topology, $\Fcal(\Mk)$ is the sequential completion of $\Fcal_0(\Mk)$. The product $\star_{m,\mu}$ is sequentially continuous and can therefore be uniquely extended to the completion. The extended product depends smoothly on $m^2$ in the sense of (\ref{smoothness in m}).
The change of $\mu$ amounts to the transition to an equivalent product. Namely, the function
\begin{equation}
w_{m}^{\mu_1,\mu_2}=H_{m}^{\mu_1}-H_{m}^{\mu_2}
\end{equation}
is smooth. Therefore the linear isomorphism
\begin{equation}
\al_{w_{m}^{\mu_1,\mu_2}}\doteq\exp({\hbar\Ga_{w_{m}^{\mu_1,\mu_2}}})
\end{equation}
of $\Fcal_0(\Mk)$ which interpolates between the products $\star_{m,\mu_1}$ and $\star_{m,\mu_2}$ is a homeomorphism. It therefore extends to an isomorphism of $\Fcal(\Mk)$ and interpolates also the extensions of the products to this space.

In order to eliminate the dependence of the products on $\mu$ we now, as indicated in Section \ref{sec:defcon}, use the maps $\al_{H_{m}^{\mu}}$ to define an $m$-dependent, but $\mu$-independent topology on $\Fcal_0(\Mk)$ as the initial topology of these maps.
The sequential completion we denote by $\Fcal^{(m)}(\Mk)$. Elements  $F\in\Fcal^{(m)}(\Mk)$ may be identified 
(by setting $F_\mu\doteq\al_{H^{\mu}}(F)$) 
with families $(F_{\mu})_{\mu>0}$, $F_{\mu}\in\Fcal(\Mk)$, with the property
\begin{equation}
F_{\mu_1}=\al_{w_{m}^{\mu_1,\mu_2}}(F_{\mu_2}) \ .
\end{equation}

The advantage of this somewhat abstract construction is that  $m^2$ is the only scale in the algebra 
$\A^{(m)}(\Mk)=(\Fcal^{(m)}(\Mk),\star_m)$. The other possibility, namely to set $\mu^2=m^2$, would lead to singularities at $m^2=0$.     

We now define the following bundle of algebras,
\[
\Bcal = \bigsqcup_{m^2\in\R}\A^{(m)}(\Mk)\ .
\] 
Smooth sections $A=(A_m)_{m^2\in\R}$ of this bundle are, by definition, sections with the property that 
$\al_{H^\mu}(A)$, with 
\[
\al_{H^{\mu}}(A)_m=\al_{H_{m}^{\mu}}(A_m) \ ,\ m^2\in\R\ ,
\]
is a smooth function of $m^2$. Again, the property of being smooth is independent of the 
choice of $\mu$. The algebra of smooth sections is denoted by $\A(\Mk)$. $\A_0(\Mk)$ is the subalgebra of sections taking values in $\Fcal_0(\Mk)$.

\subsection{Local functionals and interactions}
Local functionals were briefly discussed before, see eq.\eqref{local}, and we want now to make the appropriate definitions and present results used later on.

A map $F:\Ecal(\Mk)\rightarrow\C$ is said to be a {\em local} functional  if it satisfies the following requirements;
\begin{enumerate}
\item $F$ satisfies the following {\em additivity} property:
\[
F(\varphi+\chi+\psi)=F(\varphi+\chi)-F(\chi)+F(\chi+\psi)\ ,
\]
if $\supp(\varphi)\cap\supp(\psi)=\emptyset$,
\item $F$ is infinitely differentiable;
\item $\WF(F^{(n)}(\varphi))\perp T\Delta_n\ $ where $\Delta_n\doteq\{(x_1,...,x_n)\in\Mk^n\,|\,x_1=...=x_n\}$.
\end{enumerate}
The space of local functionals is termed $\Fcal_{loc}(\Mk)$.

Now, the first consideration is whether additivity implies the support property stated in \eqref{local}. Indeed,
\begin{lemma}
Local functionals have the property that their $n$th order functional derivatives $F^{(n)}(\varphi)$ are supported on thin diagonals $\Delta_n$, $n\in\NN$.
\end{lemma}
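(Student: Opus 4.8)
The plan is to derive the support property from the additivity axiom~(1) together with differentiability~(2); the wave front condition~(3) plays no role here. The mechanism is that differentiating the additivity identity forces each kernel $F^{(n)}(\varphi)$ to annihilate every tensor product of test functions whose supports split into two disjoint ``blocks''.

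First I would record the differentiated form of additivity. Fix $\varphi\in\Ecal(\Mk)$, a partition $\{1,\dots,n\}=I\sqcup J$ into two nonempty blocks, and test functions $u_1,\dots,u_n\in\Dcal(\Mk)$ such that $\bigcup_{i\in I}\supp u_i$ is disjoint from $\bigcup_{j\in J}\supp u_j$. Applying axiom~(1) with background field $\varphi$ and with $\sum_{i\in I}t_i u_i$, $\sum_{j\in J}t_j u_j$ playing the roles of the two disjointly supported pieces (their supports stay disjoint for all real $t_k$) gives
\[
F\Big(\varphi+\sum_{k=1}^n t_k u_k\Big)=F\Big(\varphi+\sum_{i\in I}t_i u_i\Big)-F(\varphi)+F\Big(\varphi+\sum_{j\in J}t_j u_j\Big).
\]
Since $F$ is infinitely differentiable I would differentiate both sides in $t_1,\dots,t_n$ and set $t=0$. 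The left-hand side produces $\langle F^{(n)}(\varphi),u_1\otimes\cdots\otimes u_n\rangle$, while on the right-hand side every summand is independent of at least one entire block of the variables $t_k$, so the mixed derivative $\prod_k\partial_{t_k}$ kills it; hence this pairing vanishes for every such block-factorized family.

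Next I would pass from block-factorized tensors to supports. Let $(y_1,\dots,y_n)\in\Mk^n\setminus\Delta_n$, so the $y_k$ are not all equal. Choose one of them, $p=y_{k_0}$, and put $I=\{k:y_k=p\}$, $J=\{k:y_k\neq p\}$; both are nonempty. Since $\Mk=\R^d$ is metric, I can pick an open ball $A\ni p$ and an open set $B\supseteq\{y_k:k\in J\}$ with $A\cap B=\emptyset$, and hence a product neighbourhood $N=\prod_k W_k$ of $(y_1,\dots,y_n)$ with $W_k\subseteq A$ for $k\in I$ and $W_k\subseteq B$ for $k\in J$. Every family $u_k\in\Dcal(W_k)$ is then block-factorized in the sense of the previous step, so $F^{(n)}(\varphi)$ annihilates the algebraic tensor product $\bigotimes_k\Dcal(W_k)$; as this is dense in $\Dcal(N)$ in the test-function topology and $F^{(n)}(\varphi)$ is a (continuous) distribution, $F^{(n)}(\varphi)$ vanishes on the open set $N$. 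Letting $(y_1,\dots,y_n)$ range over $\Mk^n\setminus\Delta_n$, these sets cover it, so $\supp F^{(n)}(\varphi)\subseteq\Delta_n$; for $n=1$ there is nothing to show since $\Delta_1=\Mk$.

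The hard part will not be any calculation but two small structural points: arranging the two-block split so that it fits a product neighbourhood --- which is why I split off exactly the indices where $y_k$ equals one fixed value $p$ rather than an arbitrary pair --- and invoking the density of $\bigotimes_k\Dcal(W_k)$ in $\Dcal(N)$, the standard fact (via mollification, or Fourier expansion on a box) that turns ``vanishes on all block-factorized tensors'' into ``vanishes on an open set''. Notably no use is made of the wave front condition or of any structure of $\Mk$ beyond its being a metric space.
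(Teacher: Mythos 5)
Your proof is correct and follows essentially the same route as the paper's: differentiate the additivity identity in the auxiliary parameters $\lambda_k$ (your $t_k$) and observe that each resulting term misses at least one parameter, so the mixed derivative vanishes on disjointly supported families, forcing $\supp F^{(n)}(\varphi)\subseteq\Delta_n$. The only differences are cosmetic: the paper splits off just two test functions with disjoint supports (absorbing the rest into the background $\chi$) rather than your full two-block partition, and it leaves implicit the localization step (density of $\bigotimes_k\Dcal(W_k)$ in $\Dcal(\prod_k W_k)$) that you spell out.
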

\begin{proof}
By definition of functional derivative of $n$th order we have
\begin{equation}
\langle F^{(n)}(\varphi),\psi_1\otimes\cdots\otimes\psi_n\rangle=\left. \frac{d^n}{d\lambda_1\cdots d\lambda_n}\right\arrowvert_{\lambda_1=\cdots=\lambda_n=0} F\left (\varphi+\sum_{i=1}^{n}\lambda_i \psi_i\right )\ .\label{eq:definder}
\end{equation}
The support of $F^{(n)}(\varphi)$ is composed by $n$tuple of points $(x_1,\dots,x_n)\in\Mk^n$. Let us assume that in the $n$tuple one can find two points $x_j,x_k$  with $x_j\neq x_k$. Now, there exist two smooth functions $\psi_j, \psi_k$ such that $x_j\in\supp(\psi_j)$, $x_k\in\supp(\psi_k)$ and $\supp(\psi_j)\cap\supp(\psi_k)=\emptyset$. By use of the additivity condition in \eqref{eq:definder}, one sees that each term of the resulting sum would not contain all $\lambda$'s and the derivatives will all be zero. Hence the support can only be those $n$tuple of points $(x_1,\dots,x_n)$ for which $x_1=\cdots=x_n$.
\end{proof}

Another important property is the following;
\begin{lemma}\label{lemma:splittinglocfunct}
Any local functional $F$ can be written as a finite sum of local functionals of arbitrarily small supports.
\end{lemma}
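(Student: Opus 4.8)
The plan is to prove the statement by a partition‑of‑unity argument, reducing it to a two‑set splitting and then iterating. Since $\supp F$ is compact, it is enough to establish the following claim: given two open sets $O_1,O_2$ with $\supp F\subset O_1\cup O_2$, one can write $F=F_1+F_2$ with $F_1,F_2\in\Fcal_{loc}(\Mk)$ and $\supp F_i$ a compact subset of $O_i$. Granting this, for a finite open cover $\{O_i\}_{i=1}^N$ of $\supp F$ one would split off $O_1$ from $O_2\cup\dots\cup O_N$, then recurse on the remainder (whose support is compact); and taking the $O_i$ to be balls of radius $<\eps$ — finitely many of which cover the compact set $\supp F$ — yields the ``arbitrarily small supports'' assertion.

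For the two‑set case I would put $K_1\doteq\supp F\setminus O_2$, which is a compact subset of $O_1$, and choose $\chi\in\Dcal(\Mk)$ with $0\le\chi\le 1$, $\chi\equiv 1$ on a neighbourhood of $K_1$, and $\supp\chi\subset O_1$; then define $F_1(\ph)\doteq F(\chi\ph)$ and $F_2\doteq F-F_1$. One checks that $F_1\in\Fcal_{loc}(\Mk)$: it is infinitely differentiable as the composition of $F$ with the continuous linear map $\ph\mapsto\chi\ph$; its additivity follows by applying the additivity of $F$ to the triple $\chi\al,\chi\ph,\chi\be$, since the outer terms $\chi\al$ and $\chi\be$ have disjoint supports whenever $\al,\be$ do; and from $\langle F_1^{(n)}(\ph),\psi^{\otimes n}\rangle=\langle F^{(n)}(\chi\ph),(\chi\psi)^{\otimes n}\rangle$ one gets $F_1^{(n)}(\ph)=\chi^{\otimes n}\cdot F^{(n)}(\chi\ph)$, whose wave front set lies in $\WF(F^{(n)}(\chi\ph))\perp T\Delta_n$. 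Hence also $F_2=F-F_1\in\Fcal_{loc}(\Mk)$. The easy support facts are: if $x\notin\supp F$ and $U\in\mathcal{U}(x)$ is as in \eqref{eq:support}, then $\supp(\chi\psi)\subset U$ for $\supp\psi\subset U$, so $F_1(\ph+\psi)=F(\chi\ph+\chi\psi)=F(\chi\ph)=F_1(\ph)$ and likewise for $F_2$; thus $\supp F_1,\supp F_2\subset\supp F$. Moreover $F_1$ is constant near any $x$ with $\chi\equiv 0$ nearby, so $\supp F_1\subset\supp\chi\subset O_1$, and it is compact as a closed subset of the compact set $\supp F$.

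The one delicate step is to show $\supp F_2\subset O_2$; I expect this localisation of the ``remainder'' to be the heart of the matter, the rest being routine checks against the defining properties of $\Fcal_{loc}(\Mk)$. It suffices to prove that $F_2$ is constant near every point $x$ at which $\chi\equiv 1$ on some neighbourhood $W$, because such points contain $K_1=\supp F\setminus O_2$, so that together with $\supp F_2\subset\supp F$ this gives $\supp F_2\subset\supp F\cap O_2$, a compact subset of $O_2$. The obstacle is that $F_2(\ph+\psi)=F(\ph+\psi)-F(\chi\ph+\chi\psi)$ cannot be simplified by additivity directly, since $\ph$ itself is not decomposed into disjointly supported pieces. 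The remedy is an auxiliary cut‑off: choose an open $U$ with $x\in U$, $\overline U\subset W$, and $\eta\in\Dcal(\Mk)$ with $\eta\equiv 1$ on $U$ and $\supp\eta\subset W$, so that $\eta\chi=\eta$. For $\psi$ with $\supp\psi\subset U$, the functions $(1-\eta)\ph$ and $(1-\eta)\chi\ph$ both have support disjoint from $\supp\psi$, so applying additivity of $F$ to $(1-\eta)\ph+\eta\ph+\psi$ and to $(1-\eta)\chi\ph+\eta\ph+\psi$ (noting $(1-\eta)\chi\ph+\eta\ph=\chi\ph$) yields
\[
F(\ph+\psi)-F(\ph)=F(\eta\ph+\psi)-F(\eta\ph)=F(\chi\ph+\psi)-F(\chi\ph)\ ,
\]
whence $F_2(\ph+\psi)=F_2(\ph)$, as required. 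Iterating the two‑set splitting over a finite cover of $\supp F$ by $\eps$‑balls then finishes the argument.
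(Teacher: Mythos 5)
Your argument is correct, and it reaches the lemma by a different organisation than the paper. The paper's proof is a one-shot decomposition: cover $\supp F$ by balls $B_i$ of radius $\eps/4$, take a subordinate partition of unity $(\chi_i)$, and use additivity repeatedly to write $F=\sum_I s_I F_I$ with signs $s_I\in\{\pm1\}$, $F_I(\ph)=F(\ph\sum_{i\in I}\chi_i)$, the sum running over index sets $I$ whose balls pairwise intersect; the support of each $F_I$ is then read off immediately from $\supp F_I\subset\supp\bigl(\sum_{i\in I}\chi_i\bigr)$, and pairwise intersection of the $\eps/4$-balls bounds the diameter by $\eps$. You instead prove a binary splitting $F=F(\chi\,\cdot)+\bigl(F-F(\chi\,\cdot)\bigr)$ and iterate over a finite cover; the genuinely nontrivial step in your route, absent in the paper's formulation, is the localisation of the remainder $F_2=F-F(\chi\,\cdot)$ in $O_2$, which you settle correctly with the auxiliary cutoff $\eta$ satisfying $\eta\chi=\eta$ and two applications of additivity. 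Both proofs rest on the same mechanism (additivity plus multiplicative cutoffs of the configuration); the paper's version buys an explicit closed-form decomposition whose terms are manifestly local and small-supported, while leaving the combinatorics of the ``repeated use of additivity'' to the reader, whereas your two-set lemma is completely elementary to verify at each step but produces the decomposition only recursively. One phrasing should be repaired: $\supp F\cap O_2$ need not itself be compact (it is a compact set intersected with an open one); what you actually need and have is that $\supp F_2$ is a closed subset of the compact set $\supp F$, hence compact, and contained in $O_2$ — this is a wording slip, not a gap.
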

\begin{proof}(Cf. \cite{DFloop} for a similar argument)
Let $\epsilon>0$. Let $(B_i)_{i=1,\ldots,n}$ be a finite covering of $\supp F$ by balls of radius $\epsilon/4$ and let $(\chi_i)_{i=1,\ldots,n}$ be a subordinate partition of unity. By a repeated use of the additivity of $F$ we arrive at a decomposition of the form
\begin{equation}
F=\sum_I s_I F_I
\end{equation} 
with $s_I\in\{\pm 1\}$, $F_I(\varphi)=F(\varphi\sum_{i\in I}\chi_i)$ and where $I$ runs over all subsets of $\{1,\ldots,n\}$ such that $B_i\cap B_j\neq\emptyset$ for all $i,j\in I$.
From the definition of the support of a functional we immediately find $\supp F_I\subset \bigcup_{i\in I}B_i:=B_I$. Since any two points in $B_I$ have distance less than $\epsilon$, each $B_I$ is contained in a ball of radius $\epsilon$. 
\end{proof}
The previous lemma will find application in the next section and in Appendix B.
\vskip.3truecm
\noindent The possible interactions $A$ for a quantum field theory build a subspace $\A_{loc}(\Mk)$ of 
$\A(\Mk)$. It is characterized by the requirement that $\al_{H^{\mu}}(A)$ is a local functional for some, and hence for all $\mu$.

Functional derivatives on $\A(\Mk)$, as well as on $\A_{loc}(\Mk)$, can be introduced as linear maps from $\Ecal(\Mk)$ to $\A(\Mk)$ by
\begin{equation}\label{eq:Aderivative}
\langle \frac{\de}{\de\ph}A,\psi\rangle=\alpha_H^{-1}\langle\frac{\de}{\de\ph}\alpha_H A,\psi\rangle \ ,
\end{equation}
since the right hand side is independent of $H$.

\section{Causal Perturbation Theory -- the Epstein-Glaser Method }
While local interactions are more singular, they also have nice properties which one can exploit for a 
perturbative construction of interacting quantum field theories \cite{EG,BF,DFret,qap}.
We first collect some properties of the $S$-matrix defined in (\ref{S-matrix}). As much as we did in eq.(\ref{eq:support}), we associate to every $A\in\A(\Mk)$ a compact region (denoted as $\supp(A)$ by abuse of notation) as the set 
\[
\supp(A)\doteq\supp(\alpha_H(A))\ .
\]
Notice that $\supp(A)$ does not depend on the choice of $H$, since the  homeomorphisms $\alpha_w$ do not change the support of a functional.

\subsection{Renormalization and the Main Theorem}
We use the fact that for $A,B\in\A_0(\Mk)$ with $\supp(A)$ later than $\supp(B)$ the time ordered product coincides with the $\star$-product
\begin{equation}\label{time ordering}
A\cdot_T B = A\star B \ .
\end{equation}
This implies the following causality property of the $S$-matrix
\begin{quote}
\item[{\bf C1. Causality}.] $S(A+B)=S(A)\star S(B)\ $ if $\supp(A)$ is later than $\supp(B)$. 
\end{quote}
The causality property determines the derivatives $S^{(n)}$ of $S$ at the origin 
\[\left. S^{(n)}(0)(B^{\otimes n})\equiv S^{(n)}(B^{\otimes n})\equiv\frac{d^n}{d\la^n}S(\la B)\right |_{\la=0} \ ,\]
(i.e.~the higher order time ordered products) partially in terms of lower order derivatives 
namely
\begin{equation}
S^{(n)}(A^{\otimes k}\otimes B^{\otimes (n-k)})= S^{(k)}(A^{\otimes k})\star S^{(n-k)}(B^{\otimes (n-k)})\ .
\end{equation}
While on $\A_0(\Mk)$ this is an immediate consequence of the definition of the $S$-matrix and of (\ref{time ordering}), 
it is the key property by which an extension to local functionals can be made,
 i.e. $S:\A_{loc}(\Mk)\rightarrow \A(\Mk)$ can be defined. 
Namely, by Lemma~\ref{lemma:splittinglocfunct}, local functionals can be splitted into a sum of terms which are localized in smaller regions. Together with the multilinearity of the higher derivatives this allows the determination of the $n$th order in terms of the derivatives with order less than $n$
for all elements of the tensor product $\A_{loc}(\Mk)^{\otimes n}$ whose support is disjoint from the thin diagonal.
Here the support of $\sum(A_1^{i}\otimes \cdots \otimes A_n^{i})$ is defined as the union of the cartesian products of the supports of $A_k^{i}$.   
Together with the property 
\begin{quote}
\item[{\bf C2. Starting element}.] $S(0)=1$, $S^{(1)}=\id\ $,
\end{quote}
this fixes the higher derivatives of $S$ at the origin partially on local functionals. 

The $\star$-product and the time ordered product $\cdot_T$ on $\Fcal_0(\Mk)$ were defined in terms of 
functional differential operators. Therefore the $S$-matrix $S(V)$, $V\in\Fcal_0(\Mk)$ at the field configuration $\ph$ depends on $\ph$ only via the functional derivatives of $V$ at $\ph$. We require that a similar condition holds true also for the extension of $S$ to $\A_{loc}(\Mk)$.

Let $V\in\A_{loc}(\Mk)$. The Taylor expansion of $\al_H(V)$ at $\ph=\ph_0$ up to order $N$ is
\begin{equation}
\al_H(V)_{\ph_0}^{(N)}(\ph)=\sum_{n=0}^{N}\frac{1}{n!}\left\langle\frac{\delta^n\al_H(V)}{\delta\ph^n}(\ph_0),(\ph-\ph_0)^{\otimes n}\right\rangle \ .
\end{equation}
We impose the following condition:
\begin{quote}
\item[{\bf C3. $\ph$-Locality}.] $\al_H\circ S(V)(\ph_0)=\al_H\circ S\circ \al_H^{-1}(\al_H(V)_{\ph_0}^{(N)})(\ph_0)+O(\hbar^{N+1})$.
\end{quote}
The condition is independent of the choice of the Hadamard function $H$. 
For polynomial functionals $V$ the condition is, up to the information on the order in $\hbar$, empty.
The main profit of {\bf C3} is that, for the computation of a certain coefficient in the $\hbar$-expansion of
$\al_H\circ S(V)$, we may replace $\al_H(V)(\ph)$ by a {\it polynomial} in $\ph$.

The $\ph$-Locality of the extension allows a rather explicit construction. 
Namely, by Lemma 3.1, for a local interaction $V$ the $n$th functional derivative of 
$\al_{H}(V)$ is, for every field configuration $\ph$, a distribution with support on the thin diagonal $\Delta_n\subset \Mk^n$. As shown in \cite{BF}, the condition on the wave front set implies that such a distribution can be restricted to transversal surfaces where the restrictions have support in a single point. Hence the $n$th functional derivative has the form
\begin{equation}\label{eq:dfrest}
\frac{\de^n \al_{H}(V)}{\de \ph^n}(x_1,\ldots,x_n)=\sum_k V^{H,n}_{k}(x)p_k(\partial_{\text{rel}})\de(x_{\text{rel}})\ ,
\end{equation}
with the center of mass $x=\frac{1}{n}\sum x_i$, finitely many test functions  $V^{H,n}_{k}(x)$ (which depend on $\ph$) and a basis $(p_k)$ of homogeneous symmetric polynomials in the derivatives with respect to relative coordinates  $x_{\text{rel}}$. 

\vskip10pt
{\small
\noindent{\bf Example.} Let $F(\ph)=\frac12\int dx f(x)\ph(x)\partial\ph(x)\partial\ph(x)$, with a test function $f\in \Dcal(\Mk)$. The second functional derivative of $F$ is a symmetric distribution in two variables, characterized by the condition
\[
\left\langle \frac{\de^2 F}{\de\ph^2}, h\otimes h\right\rangle = \left.\frac{d^2}{d\lambda^2}\right|_{\lambda=0} F(\ph+\lambda h)\ ,\quad h\in\Ccal(\Mk)\ .
\]
We compute
\begin{align*}
\left.\frac{d^2}{d\lambda^2} \right|_{\lambda=0} F(\ph+\lambda h)
&=\int dx f(x) (2\partial\ph(x)\partial h(x) h(x) 
+ \partial h(x) \partial h(x) \ph(x))\\
&=\int dx_1 dx_2 \de(x_1 -x_2) f(x)
\Bigl (\partial\ph(x)(\partial h(x_1) h(x_2)+\partial h(x_2) h(x_1))\\
&\qquad\qquad\qquad\qquad\qquad\qquad\qquad+\ph(x)(\partial h(x_1)\partial h(x_2))\Bigr)\ ,
\end{align*}
where $x=(x_1+x_2)/2$ is the center of mass coordinate.
Formal integration by parts gives
\begin{align*}
&\left.\frac{d^2}{d\lambda^2} \right|_{\lambda=0} F(\ph+\lambda h)
=\int dx_1 dx_2  h(x_1) h(x_2) \Bigl (-(\partial_{x_1} +\partial_{x_2})\left(\de(x_1 -x_2) f(x)\partial\ph(x)\right)\\
&\quad\qquad\qquad\qquad\qquad\qquad\qquad\qquad\qquad\qquad\qquad 
+ \partial_{x_1}\partial_{x_2}\left( \de(x_1-x_2)f(x)\ph(x)\right)\Bigr)\ .
\end{align*}
Introducing relative coordinates $\xi=x_1-x_2$ yields
\begin{align*}
\frac{\de^2 F}{\de\ph^2}\left(x+\frac{1}{2}\xi, x-\frac{1}{2}\xi\right)
=\left (-\partial (f\partial \ph)(x)
+ \frac{1}{4}(\partial \partial (f\ph))(x)\right )\de(\xi)
 -(f\ph)(x)(\partial \partial \de)(\xi) 
\end{align*}
which is of the form of \eqref{eq:dfrest}.}
\vskip10pt
\noindent

From (\ref{eq:dfrest}) we obtain the following expansion of $\alpha_H\circ S^{(n)}$,
\begin{equation}\label{eq:sn}
\al_H\circ S^{(n)}(V^{\otimes n})(\ph_0)=\sum_{k_j,l_j} \int \bigl(\prod_j dx_j V^{H,k_j}_{l_j}(\ph_0)(x_j)\bigr) t_l^k(\ph_0)(x_1,\ldots,x_n)
\end{equation}
where $t_l^k(\ph_0)$ is a distribution which is given by
\begin{equation}\label{eq:tn}
t_l^k(\ph_0)(x_1,\ldots,x_n)=\al_H\circ S^{(n)}(A_{l_1}^{H,k_1}(x_1)\otimes\cdots\otimes A_{l_n}^{H,k_n}(x_n)) (\ph_0)
\end{equation}
with the balanced fields \cite{BOR} (normal ordered with respect to $H$, shifted by $\ph_0$)
\begin{equation}\label{eq:bf}
\left. A_l^{H,k}(x)=\al_H^{-1}\left(p_l(-\partial_{\text{rel}})\frac{(\ph-\ph_0)(x_1)\cdots(\ph-\ph_0)(x_k)}{k!}\right) \right|_{x_1=\cdots=x_k=x}  \ .
\end{equation}

To derive (\ref{eq:sn}) we insert
\begin{equation}  
V(\ph)=\sum_{k,l} \int dx\, V^{H,k}_l(\ph_0)(x)\, A_l^{H,k}(\ph)(x) 
\end{equation}  
into $\al_H\circ S^{(n)}(V^{\otimes n})(\ph)$ and use {\bf C3} as well as linearity of $S^{(n)}$:
\begin{align} 
\al_H\circ & S^{(n)}(V^{\otimes n})(\ph)\\
&=\sum_{k_j,l_j} \int dx_1\cdots  V^{H,k_1}_{l_1}(\ph_0)(x_1)\cdots  
\al_H\circ S^{(n)}(A_{l_1}^{H,k_1}(x_1)\otimes\cdots)(\ph)\ . 
\end{align} 
Setting $\ph=\ph_0$ it results (\ref{eq:sn}).

A convenient additional condition is that, loosely speaking, $S$ should have no explicit dependence on $\ph$. Using the definition in eq.\eqref{eq:Aderivative},
\begin{quote}
\item[{\bf C4. Field Independence}.]  $\langle\de S(V)/\de\ph,\psi\rangle=S^{(1)}(V)\langle\de V/\de\ph,\psi\rangle\ ,\ $ with $V\in\A_{loc}(\Mk)$.
\end{quote} 
For the action on $\A_0(\Mk)$ this is the case due to the fact that the differential operators 
$\Ga_\cdot$ in terms of which time ordering, $\star$-product and topology were defined do not depend on $\ph$. 
In the formulae \eqref{eq:sn}, \eqref{eq:tn} the distributions $t_l^k$ become therefore independent of $\ph_0$. Hence,
in the resulting expansion of $\al_H\circ S^{(n)}(V^{\otimes n})$ the field dependence is only in the
$V^{H,k_j}_{l_j}(x_j)$. 
By the conditions C3 and C4 the construction of time ordered products is reduced to the construction 
of the  $t_l^k(\ph =0)$, i.e.~to the construction of time ordered products of balanced fields at 
$\ph=0$ , hence, the methods and results of e.g.~\cite{DFret} can be applied. 

\vskip10pt
{\small
\noindent{\bf Example.} Let $\al_H(V)=\int dx \sum_k f_k(x) \ph(x)^k/k!$, $f\in\Dcal(\Mk)$. Then
\[
\frac{\de^n \al_H(V)}{\de\ph^n}(x_1,\dots,x_n)= \sum_{k\ge n} f_k(x_1) \frac{\ph(x_1)^{k-n}}{(k-n)!}\de(x_1-x_2)\cdots\de(x_{n-1}-x_n)
\]
and now formula (\ref{eq:sn}) is determined by the expression
\[
V^{H,k}_n(x) = f_k(x) \frac{\ph(x)^{k-n}}{(k-n)!}
\]
and where the formula (\ref{eq:tn}) takes the form
\[
t^{k_1,\dots,k_n}(x_1,\dots,x_n) = \al_H\circ
S^{(n)}\circ\al^{-1}_H \left (\frac{\ph(x_1)^{k_1}}{k_1!}\otimes\cdots\otimes\frac{\ph(x_n)^{k_n}}{k_n!}\right)(\ph=0)\ . 
\]

\noindent If one replaces $H$ by $\Delta_1$, $t^H$ becomes the vacuum expectation value of the time ordered product of Wick powers, hence one obtains the Wick expansion formula of Epstein-Glaser \cite{EG}.}
\vskip10pt
The result of the Epstein-Glaser Theory is that the derivatives $S^{(n)}$ of $S$ at 0 can be extended to the full tensor 
product but that the extension is not unique. The ambiguity is described by the St\"uckelberg-Petermann Renormalization Group 
$\mathcal{R}_0$ which is the group of analytic maps of $\A_{loc}(\Mk)[[\hbar]]$ into itself with the properties
\begin{align}
Z(0)   &=0  \\
Z^{(1)}(0)  &=\id  \\
Z        &= \id + O(\hbar) \\
Z(A+B+ C)&= Z(A+B)-Z(B)+Z(B+C),\text{if $\supp(A)\cap\supp(C)=\emptyset$}\label{Zloc1}\\
\text{$\varphi$-locality}& \text{ in the} \text{ sense of C3}\\
\de Z/\de\varphi &= 0\label{Zindep}
\end{align}
The property (\ref{Zindep}) implies 
that $Z$ preserves the localization region of the interaction,
\begin{equation}
\supp(Z(V))  =  \supp(V)\ , \qquad V\in\A_{loc}(\Mk)[[\hbar]]\ ,\label{Zloc2}
\end{equation}
as may be seen from 
\begin{equation}
\langle\frac{\de}{\de\ph}Z(V),\psi\rangle=Z^{(1)}(V)\langle\frac{\de}{\de\ph}V,\psi\rangle \ .
\end{equation}

The additivity \eqref{Zloc1} expresses locality of $Z$. If one sets $B=0$ in the relation and uses $Z(0)=0$ one 
obtains the condition previously adopted in \cite{DFret}. Actually, within perturbation theory, 
the two conditions are even equivalent (for a proof see Appendix B).

In any case, since the formalism 
here adopted is different from the one in the cited reference, we recall the Main Theorem and
sketch its proof. Roughly speaking the main statement of this theorem is that in terms of the $S$-matrix a change of the renormalization prescription 
can be absorbed in a renormalization $Z$ of the interaction, where $Z$ is an element of the St\"uckelberg-Petermann Renormalization Group
$\mathcal{R}_0$. Similarly to the notations used for the derivatives of the $S$-matrix we use the shorthand notation $Z^{(n)}\equiv Z^{(n)}(0)$.

\begin{teo}[Main Theorem of Renormalization]
Given two $S$-matrices $S$ and $\widehat{S}$ 
satisfying the conditions Causality, Starting Element, $\varphi$-locality, and 
Field Independence, there exists a unique $Z\in\mathcal{R}_0$ such that 
\begin{equation}
\widehat{S}=S\circ Z \ .\label{mainthm}
\end{equation}
Conversely, given an $S$-matrix $S$ satisfying the 
mentioned conditions and a $Z\in\mathcal{R}_0$, Eq. (\ref{mainthm})
defines a new $S$-matrix $\widehat{S}$ satisfying also these conditions.
\end{teo}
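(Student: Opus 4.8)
The plan is to prove the two implications separately; the ``conversely'' part is a direct verification, while the first part rests on an inductive construction of $Z$ whose locality is the real content.

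For the converse, assume $S$ satisfies the four conditions and $Z\in\mathcal{R}_0$, and set $\widehat S\doteq S\circ Z$. The starting-element condition for $\widehat S$ follows from $Z(0)=0$, $Z^{(1)}(0)=\id$ and the chain rule, and field independence follows from the chain rule together with $\de Z/\de\ph=0$ and field independence of $S$. For causality, ``$\supp(A)$ later than $\supp(B)$'' forces $\supp(A)\cap\supp(B)=\emptyset$, so \eqref{Zloc1} with vanishing middle argument gives $Z(A+B)=Z(A)+Z(B)$; by \eqref{Zloc2} the region $\supp(Z(A))=\supp(A)$ is still later than $\supp(Z(B))=\supp(B)$, and causality of $S$ yields $\widehat S(A+B)=S(Z(A))\star S(Z(B))=\widehat S(A)\star\widehat S(B)$. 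Finally $\varphi$-locality is preserved under composition: one expands $\al_H\circ S\circ Z(V)(\ph_0)$ first through C3 for $S$ and then feeds in the Taylor statement C3 for $Z$, checking that the two $N$-jets agree modulo $\hbar^{N+1}$.

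For the first part, work with formal power series in the interaction (replace $V$ by $\la V$) and write $S(V)=\sum_n\tfrac{1}{n!}S^{(n)}(V^{\otimes n})$, similarly $\widehat S$, and $Z(V)=\sum_n\tfrac{1}{n!}Z^{(n)}(V^{\otimes n})$. Expanding $\widehat S=S\circ Z$ by the chain rule, the order-$n$ term in $V$ reads
\[
\widehat S^{(n)}(V^{\otimes n})=Z^{(n)}(V^{\otimes n})+\bigl(\text{terms built from }S^{(k)},\ Z^{(j)}\ \text{with}\ j<n\bigr)\ ,
\]
the isolated $Z^{(n)}$ appearing because $S^{(1)}=\id$ (and $Z(V)=O(V)$). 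Setting $Z^{(1)}=\id$ and proceeding inductively, this equation \emph{defines} $Z^{(n)}$ as a symmetric multilinear map on $\A_{loc}(\Mk)^{\otimes n}$ with values a priori in $\A(\Mk)$; uniqueness is immediate from the same equation, since if $Z,Z'$ both satisfy $S\circ Z=S\circ Z'=\widehat S$ then $Z^{(n)}=Z'^{(n)}$ follows inductively (equivalently, $Z$ is the unique formal-power-series solution $W$ of $S\circ W=\widehat S$, which exists order by order because $S^{(1)}=\id$).

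It remains to verify that this $Z$ lies in $\mathcal{R}_0$, and the crux --- the step I expect to be the main obstacle --- is locality of the $Z^{(n)}$: that $Z^{(n)}(V_1\otimes\cdots\otimes V_n)$ is supported on the thin diagonal and is again a local functional. One argues inductively. Using Lemma~\ref{lemma:splittinglocfunct}, decompose each $V_i$ into a finite sum of local functionals of small support, so that it suffices to treat $Z^{(n)}(V_1\otimes\cdots\otimes V_n)$ with the $V_i$ localized in small balls; if these balls are not all clustered near a single point, one can split the index set into two nonempty groups $I\sqcup J$ with $\{\supp V_i\}_{i\in I}$ later than $\{\supp V_j\}_{j\in J}$ (hence mutually disjoint). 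On such a configuration, causality for $S$ and for $\widehat S$ factorizes both $S^{(n)}(V_1\otimes\cdots\otimes V_n)$ and $\widehat S^{(n)}(V_1\otimes\cdots\otimes V_n)$ into a $\star$-product of an $I$-factor and a $J$-factor; the inductive hypothesis identifies $\widehat S^{(|I|)}=(S\circ Z)^{(|I|)}$ and $\widehat S^{(|J|)}=(S\circ Z)^{(|J|)}$, and the additivity of the lower $Z^{(j)}$ shows that the ``terms built from $S^{(k)},Z^{(j)}$'' above reproduce precisely this factorization. Hence $Z^{(n)}(V_1\otimes\cdots\otimes V_n)$, being the difference of two expressions that coincide on every such configuration, vanishes there; reassembling the decomposition, $Z^{(n)}$ is supported on $\Delta_n$ --- which is exactly the additivity \eqref{Zloc1} --- and, combining with C3 and C4 for $S$, $\widehat S$ and the inductive hypothesis for the $Z^{(j)}$, its functional derivatives have the form \eqref{eq:dfrest} with wave front sets transversal to $T\Delta_n$. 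Thus $Z^{(n)}$ takes values in $\A_{loc}(\Mk)$ and $\de Z/\de\ph=0$. The remaining properties are then immediate: C3 for $S,\widehat S$ yields $\varphi$-locality of $Z$, and since at $\hbar=0$ the time-ordered product reduces to the pointwise product and carries no renormalization ambiguity, $S$ and $\widehat S$ agree there, so $Z=\id+O(\hbar)$. The genuinely delicate point throughout is the combinatorial matching of the chain-rule expansion of $S\circ Z$ against the causal support structure; everything else is the chain rule and the structural results of Section~\ref{sec:local}.
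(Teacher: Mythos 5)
Your proposal is correct and follows essentially the same route as the paper: an order-by-order induction in which $Z^{(n)}$ is obtained as the difference between $\widehat S^{(n)}$ and the $n$th derivative of the admissible $S$-matrix built from the already-constructed lower-order coefficients (the converse direction supplying its admissibility), with locality of that difference coming from causal factorization off the thin diagonal. You merely spell out the details that the paper leaves as ``hints'' (the converse verification, the causal-splitting argument via Lemma~\ref{lemma:splittinglocfunct}, uniqueness, and the $O(\hbar)$ property), so no substantive divergence or gap.
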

\begin{proof}
Since the last part is obvious, we provide hints for the first part by following \cite{DFret}.
So, let us assume that the first $n$ elements of the formal power series for $Z$ are given, i.e. $Z^{(k)}$, $k\le n$, and define a 
second sequence 
\begin{align}\label{Z-subdiag}
Z_n^{(k)} &\doteq
\begin{cases}
Z^{(k)}\ , & k\le n\ , \\
0\ , & k > n\ .
\end{cases}
\end{align}
The corresponding $Z_n$ is an element of $\mathcal{R}_0$. Hence, by the last part of the Theorem,
\[
\widehat{S}_n \doteq\ S \circ Z_n \ . 
\]
is an admissible $S$-matrix which coincides with $S$ in lower orders $k<n$. Hence
\begin{equation}
Z^{(n+1)} \doteq \widehat{S}^{(n+1)}-\widehat{S}_n^{(n+1)}  \label{Z-induction}
\end{equation}
is an element of $\A_{loc}(\Mk)[[\hbar]]$, which is of order $\hbar$, satisfies locality (\ref{Zloc1}) and field independence 
(\ref{Zindep}). Therefore, we may use (\ref{Z-induction}) to continue the inductive construction of $Z$. 
\end{proof}

The ambiguity described in the St\"uckelberg-Petermann Renormalization Group can be reduced by imposing further 
renormalization conditions. One of the conditions is 
\begin{quote}
\item[{\bf C5. Unitarity}.] $\overline{S}(-V)\star S(V)=1\ ,$
\end{quote}
where $\overline{S}$ is the anti time ordered exponential, $\overline{S}(V)=\overline{S(\overline{V})}$, hence $S(V)$ is unitary for imaginary interactions $V$. This condition can always be fulfilled. It restricts the renormalization group to elements $Z$ which satisfy the equation $\overline{Z}(-V)+Z(V)=0$.

\subsection{Symmetries}
In general, if a symmetry $g$ acts as an automorphism of  
$\A(\Mk)$, commutes with complex conjugation and leaves the set of localized elements invariant, 
it will transform $S$ to another $S$-matrix $\widehat{S}=g\circ S\circ  g^{-1}$ satisfying also conditions C1-C5. Therefore
its effect can be described by an element $Z(g) \in\mathcal{R}_0$, $\widehat{S}=S\circ Z(g)$. 
If the symmetries form a group $G$, one obtains in this way a cocycle in $\mathcal{R}_0$,
\begin{equation}
Z(gh)=Z(g)gZ(h) g^{-1} \ .
\end{equation}
Provided the cocycle is a coboundary, i.e.~there exists an element $Z\in\mathcal{R}_0$ such that
\begin{equation}
Z(g)=Z g Z^{-1}g^{-1} \quad\quad\forall g\in G \ ,
\end{equation}
the $S$-matrix $S\circ Z$ is invariant. 

In many cases the existence of a symmetric $S$-matrix just follows from the fact that the cohomology of the group in question is trivial. This holds in particular for amenable groups, where the trivializing element $Z$
can be obtained by integrating the cocycle over the group.
 
An important case, when the group is not amenable 
is that of the Poincar\'e group  $P^\uparrow_+$. Here the result applies that the cohomology is trivial if all finite dimensional 
representations of the group are completely reducible (see e.g.~Appendix D of \cite{DFret}).
The invariance under the Poincar\'e group can be imposed as a further condition:
\begin{quote}
\item[{\bf C6. Invariance}.] $S$ is Poincar\'e invariant.
\end{quote}

A crucial example where the cocycle can (and will) be nontrivial is the group of scaling transformations 
$\R_+$. 
Scaling transformations on generic spacetimes can be encoded in a scaling of the spacetime metric (see \cite{HW3}). If one restricts the formalism to Minkowski space as we are doing it here, it is more natural to scale the points in Minkowski space after fixing some origin. This leads to the following  action on field configurations $\ph\in\Ccal(\Mk)$ \begin{equation}
(\si_\rho \ph)(x)= \rho^{\frac{2-d}{2}}\ph(\rho^{-1} x)\label{def-sigma}
\end{equation}
with the spacetime dimension $d$. 
This induces an action on $\Fcal_0(\Mk)$ which is continuous with respect to the topology 
(\ref{topology1},\ref{topology2}).  Moreover, it transforms the products $\star_m$ into $\star_{\rho m}$, and induces a linear isomorphism between the completions $\Fcal^{(m)}(\Mk)$ and $\Fcal^{(\rho m)}(\Mk)$. It therefore gives rise to an action by automorphisms of $\A(\Mk)$ defined by
\begin{equation}
\si_{\rho}(A)_m=\si_{\rho}(A_{\rho^{-1}m}) \ .
\end{equation}

On the basis of these arguments, let 
\begin{equation}
\si_{\rho}\circ S\circ \si_{\rho}^{-1}=S \circ Z(\rho) \ .\label{scalingS}
\end{equation}
Then $Z(\rho)$ satisfies the cocycle condition
\begin{equation}\label{cocycle}
Z(\rho_1\rho_2)=Z(\rho_1)\sigma_{\rho_1} Z(\rho_2)\sigma_{\rho_1}^{-1} \ .
\end{equation}
The nontriviality of this cocycle is just the well known scaling anomaly. 
One may replace the condition of scale invariance which cannot be fulfilled in general by the condition of almost scale invariance. In terms of the definition (\ref{scalingS}) above $S$ is called almost scale invariant if
\begin{quote}
\item[{\bf C7. Scaling}.] 
\[
\left(\rho\frac{d}{d\rho}\right)^n Z(\rho)=O(\hbar^{n+1}) \ ,\ 
n\in\mathbb{N} \ .
\]
\end{quote}
%
\section{The Renormalization Groups}
The formalism so far developed is flexible enough to allow a
comparison among different formulations of the idea of  the renormalization group. Similar ideas like those exposed here are already present in the literature of quantum field theory (see, e.g.~\cite{DJPS} for the cocycle case), although mainly using the euclidean formalism and viewing quantum field theory as a problem in statistical mechanics. We emphasize that in our setting the comparison can be done directly in terms of the physical spacetime, allowing a possible extension of the techniques and results to situations beyond the control of the euclidean framework.  

\subsection{The Gell-Mann-Low cocycle}
Our main intention here is to study the effect of scaling on the renormalization group, which may now be restricted to the subgroup $\mathcal{R}\subset\mathcal{R}_0$ of all $Z$ leaving in addition the conditions C5, C6 and C7 invariant. They are those elements which 
fulfil $\overline{Z}(-V)+Z(V)=0$, are Poincar\'{e} invariant and satisfy the condition
\begin{equation}
\left(\rho\frac{d}{d\rho}\right)^n\sigma_\rho\circ Z \circ\sigma_\rho^{-1}=O(\hbar^{n+1})
\end{equation}
For $m=0$, this implies, together with the facts that $Z$ maps local fields into local fields and that local fields scale homogeneously, that $Z$ is even scale invariant. Smoothness in the mass now allows it to draw a similar conclusion in the case of nonzero masses. In even spacetime dimensions $d$,
 as was shown in \cite{DFret}, if one uses the Hadamard function $H^{\mu}$, the transformed renormalization group element
\begin{equation}
Z_{H^{\mu}}=\al_{H^\mu}\circ Z\circ \al_{H^\mu}^{-1}\label{Zmu}
\end{equation}
is actually scale invariant. If we exhibit the dependence on the mass $m$ this means
\[
\sigma_\rho\circ Z_{H^{\mu}}^{(m)}\circ\sigma_\rho^{-1}=Z_{H^{\mu}}^{(\rho m)}\ ,
\]
hence the parameter $\mu$ is not scaled.
Using the transformation properties of $H^\mu$ under scaling
\[
\al_{H^{\mu}}^{-1}\circ\sigma_\rho\circ\al_{H^\mu}=\al_{H^{\mu}}^{-1}\circ\al_{H^{\rho\mu}}\circ\sigma_\rho=\al_{H^{\rho\mu}-H^{\mu}}\circ\sigma_\rho \  ,
\]
we arrive at the explicit scale dependence of $Z\in\mathcal{R}$ 
\begin{equation}
\label{scaling}
\sigma_\rho\circ Z \circ\sigma_\rho^{-1}=\al_{v\log\rho^2}^{-1}\circ Z\circ \al_{v\log\rho^2} \ ,
\end{equation}
where $v$ is the smooth function in eq.~(\ref{def(v)}).

Unfortunately, the claim in \cite{DFret}, that $Z_{H^\mu}$ is independent of $\mu$, is true 
in general only in low orders in $m^2$ (depending on the dimension). $Z$, as defined above, is of course independent of $\mu$, but 
no longer scale invariant. 

We can now analyze the cocycle $Z(\rho)$ of renormalization transformations characterising the dependence of $S$ under scale transformations.

Using (\ref{scaling}) the cocycle relation (\ref{cocycle}) takes the form 
\begin{equation}
Z(\rho\tau)=Z(\rho)\circ \al_{v\log\rho^2}^{-1}\circ Z(\tau)\circ \al_{v\log\rho^2} \ .\label{GLM-cocycle}
\end{equation}
 The cocycle $(Z(\rho))_{\rho>0}$ may be decomposed into two 1-parameter groups such that one of them 
becomes trivial in the limit $m\to 0$ and the other one converges to $Z(\rho)$, 
\begin{equation}
Z(\rho)=\hat{Z}(\rho)\circ \al_{v\log\rho^2} \ .\label{1pargroup}
\end{equation}
The one parameter group $(\hat{Z}(\rho))_{\rho>0}$ was found by Hollands and Wald \cite{HW3}. 
It is, however, not a subgroup of the St\"uckelberg-Petermann Renormalization Group as defined 
above, since the linear term of $\hat{Z}(\rho)$ is not the identity.

The $\beta$-function of standard perturbation theory is closely related to the 
generator of the one-parameter group $(\hat{Z}(\rho))_{\rho>0}$, which we call the $\hat B$-function,
\[
\left. \hat B = \rho \frac{d}{d\rho}\hat{Z}(\rho)\right |_{\rho=1}=
\left. \rho \frac{d}{d\rho}Z (\rho)\right |_{\rho=1}-2\hbar\Gamma_v\ .
\]
The $\hat B$-function is analytic, since $Z(\rho)$ has this property. In the Taylor series
the first order term is given by $\Gamma_v$ and the higher order terms by derivatives of $Z(\rho)$: 
\begin{equation}
\hat B(V)=-2\hbar\Gamma_v\, V+\sum_{n=2}^\infty\frac{1}{n!}\,\hat B^{(n)}(V^{\otimes n})\ ,\label{expansionB}
\end{equation}
where
\begin{align*}
\hat B^{(n)}(0)(V^{\otimes n})\equiv \hat B^{(n)}(V^{\otimes n})&=\left. \frac{d^n}{d\lambda^n}\right |_{\lambda=0}\,\hat B(\lambda V)\\
&=\left. \rho \frac{d}{d\rho}\right |_{\rho=1}\,
\left. \frac{d^n}{d\lambda^n}\right |_{\lambda=0}\, Z(\rho)(\lambda V)
\end{align*}      
for $n\geq 2$. 

The action of the one-parameter group $(\hat{Z}(\rho))_{\rho>0}$ on $\A(\Mk)$ 
can now be obtained as a solution of the differential equation
\[
\rho \frac{d}{d\rho}\hat{Z}(\rho)= \hat B \circ \hat{Z}(\rho)\ .
\]

Let us discuss a simple example of a $\hat B$-function:

\vskip15pt
{\small\noindent
{\bf Example} Let $V=\frac{ig}{\hbar}\int dx f(x)\ph(x)^2$ in $d=4$ dimension, with $f\in\Dcal(\Mk)$. For this interaction, 
renormalization is necessary only for the so-called fish diagram. The undetermined term does not depend on $m$. 
Hence the computation of $Z(\rho)(V)$ can be performed at $m=0$, see (\ref{B-fisch}). It results
\begin{eqnarray*}
Z(\rho)(V) & = & V + ig^2\,\frac{\log\rho}{8\pi^2} \int dx f(x)^2\ , \\
\al_{-v\log\rho^2}(V) & = & V -ig\, \frac{m^2}{4(2\pi)^2} \,\log\rho^2 \int dx f(x)\ ,
\end{eqnarray*}
where we used $v(x,x)=m^2/4(2\pi)^2$ (\eqref{def(v)} and Appendix A), hence
\[
\hat B(V) = i \int dx \left( - g\, \frac{m^2}{2(2\pi)^2}\, f(x) + g^2\,\frac{f(x)^2}{8\pi^2}\right)\ .
\]
}
\vskip10pt
The Gell-Mann-Low cocycle $(Z(\rho))_{\rho>0}$ and the corresponding one parameter group $(\hat{Z}(\rho))_{\rho>0}$
depend on the chosen renormalization prescription $S$. The $\hat B$-functions belonging to different 
renormalization prescriptions are related as follows.
\begin{lemma}
Let $S_1$ and $S_2$ be two $S$-matrices and let $Z\in\mathcal{R}$ be the corresponding renormalization 
group transformation: $S_2=S_1\circ Z$. Let $(\hat{Z}_1(\rho))_{\rho>0}$ and $(\hat{Z}_2(\rho))_{\rho>0}$
be the pertinent one parameter groups (\ref{1pargroup}). Their generators $\hat B_1$ and $\hat B_2$, respectively, are related by
\begin{equation}
Z\circ \hat B_2= \hat B_1\circ Z\ .\label{relationB}
\end{equation}
To lowest non-trivial order this relation reads
\begin{equation}
\hat B_1^{(2)}-\hat B_2^{(2)}=4\hbar^2\,Z^{(2)}\circ(\Gamma_v\otimes\Gamma_v)+2\hbar\,\Gamma_v\circ Z^{(2)}\ .\label{relationB^2}
\end{equation}
In the massless case the r.h.s.~vanishes, i.e. $\hat B^{(2)}$ is universal.
\end{lemma}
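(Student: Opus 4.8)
The plan is to transport the relation $S_2=S_1\circ Z$ successively to the Gell-Mann--Low cocycles $Z_1(\rho),Z_2(\rho)$, then to the one-parameter groups $\hat{Z}_1(\rho),\hat{Z}_2(\rho)$ of (\ref{1pargroup}), and finally to their generators. First I would compute $\sigma_\rho\circ S_2\circ\sigma_\rho^{-1}$ in two ways: by (\ref{scalingS}) applied to $S_2$ it equals $S_2\circ Z_2(\rho)=S_1\circ Z\circ Z_2(\rho)$, while inserting $\sigma_\rho^{-1}\sigma_\rho=\id$ between $S_1$ and $Z$ and using (\ref{scalingS}) for $S_1$ it equals $S_1\circ Z_1(\rho)\circ\sigma_\rho\circ Z\circ\sigma_\rho^{-1}$. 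Since the scaling transformation $\sigma_\rho$ preserves $\mathcal{R}_0$, both $Z\circ Z_2(\rho)$ and $Z_1(\rho)\circ\sigma_\rho\circ Z\circ\sigma_\rho^{-1}$ lie in $\mathcal{R}_0$, so the uniqueness part of the Main Theorem gives
\[ Z\circ Z_2(\rho)=Z_1(\rho)\circ\sigma_\rho\circ Z\circ\sigma_\rho^{-1}\ . \]
Now I would feed in the scale dependence (\ref{scaling}) of $Z\in\mathcal{R}$, namely $\sigma_\rho\circ Z\circ\sigma_\rho^{-1}=\al_{v\log\rho^2}^{-1}\circ Z\circ\al_{v\log\rho^2}$, together with the decomposition $Z_i(\rho)=\hat{Z}_i(\rho)\circ\al_{v\log\rho^2}$; after cancelling the common right factor $\al_{v\log\rho^2}$ this collapses to the intertwining identity
\[ \hat{Z}_1(\rho)\circ Z=Z\circ\hat{Z}_2(\rho)\ ,\qquad \rho>0\ . \]

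Second, I would differentiate this identity with $\rho\frac{d}{d\rho}$ at $\rho=1$. Using $\hat{Z}_i(1)=\id$ and the flow equation $\rho\frac{d}{d\rho}\hat{Z}_i(\rho)=\hat{B}_i\circ\hat{Z}_i(\rho)$, whose value at $\rho=1$ is exactly $\hat{B}_i$, both sides reduce to (\ref{relationB}): $Z\circ\hat{B}_2=\hat{B}_1\circ Z$. This is the first assertion of the lemma.

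Third, for the lowest-order form (\ref{relationB^2}) I would insert the power-series expansions $Z=\id+\tfrac12 Z^{(2)}((\cdot)^{\otimes2})+O(V^3)$ (using $Z(0)=0$, $Z^{(1)}=\id$) and $\hat{B}_i(V)=-2\hbar\,\Gamma_v V+\tfrac12\hat{B}_i^{(2)}(V^{\otimes2})+O(V^3)$ from (\ref{expansionB}) into both sides of $Z\circ\hat{B}_2=\hat{B}_1\circ Z$ and collect the terms of second order in $V$. The linear piece $-2\hbar\Gamma_v V$ of $\hat{B}_2$ fed into the quadratic part $\tfrac12 Z^{(2)}$ of $Z$ produces $4\hbar^2 Z^{(2)}\circ(\Gamma_v\otimes\Gamma_v)$; the quadratic part $\tfrac12 Z^{(2)}$ of $Z$ acted on by the linear piece $-2\hbar\Gamma_v$ of $\hat{B}_1$ produces $2\hbar\,\Gamma_v\circ Z^{(2)}$; what is left over is $\tfrac12\hat{B}_2^{(2)}$ on one side and $\tfrac12\hat{B}_1^{(2)}$ on the other, and rearranging gives (\ref{relationB^2}). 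Finally, in the massless case $v\equiv0$ (equivalently $H$ scales homogeneously, which is automatic for odd $d$), so $\Gamma_v=0$, the right-hand side of (\ref{relationB^2}) vanishes, and $\hat{B}_1^{(2)}=\hat{B}_2^{(2)}$ is universal.

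The step I expect to be the main obstacle is the first one. One has to be sure that the Main Theorem's uniqueness is applicable, which means checking that $\sigma_\rho\circ Z\circ\sigma_\rho^{-1}$ really is an element of $\mathcal{R}_0$ --- i.e.\ that $\sigma_\rho$ respects the additivity, $\varphi$-locality and field-independence conditions defining $\mathcal{R}_0$ --- and then that exactly the scaling identity (\ref{scaling}) together with the decomposition (\ref{1pargroup}) make everything collapse to $\hat{Z}_1(\rho)\circ Z=Z\circ\hat{Z}_2(\rho)$. The only other delicate point is the differentiation at $\rho=1$, which uses the chain rule for the nonlinear maps $\hat{Z}_i(\rho)$ and $Z$ on the relevant Fr\'echet spaces; once it is in place, the second-order expansion and the massless reduction are routine.
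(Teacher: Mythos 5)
Your proposal is correct and takes essentially the same route as the paper's own proof: apply a scale transformation to $S_2=S_1\circ Z$, use the scaling behaviour of $Z$ together with the decomposition $Z_i(\rho)=\hat Z_i(\rho)\circ\al_{v\log\rho^2}$ to arrive at the intertwining relation $Z\circ\hat Z_2(\rho)=\hat Z_1(\rho)\circ Z$, differentiate at $\rho=1$, and then expand to second order and set $v=0$ for the massless case. The only differences are cosmetic: you obtain the intertwining relation by comparing two elements of $\mathcal{R}_0$ via the uniqueness part of the Main Theorem where the paper simply cancels the (invertible) $S_1$, and you spell out the cancellation of the common factor $\al_{v\log\rho^2}$ which the paper leaves implicit.
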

\begin{proof}
Applying a scale transformation to  $S_2=S_1\circ Z$ and using (\ref{scalingS}) and (\ref{scaling}) we obtain
$S_2\circ\hat Z_2(\rho)=S_1\circ \hat Z_1(\rho)\circ Z\,$, from which we conclude
\[
Z\circ\hat Z_2(\rho)=\hat Z_1(\rho)\circ Z\ .
\]
Application of $\rho \frac{d}{d\rho}|_{\rho=1}$ yields immediately the assertion (\ref{relationB}). Inserting 
the lowest order terms of the Taylor expansion of $Z$ and $\hat B_1,\,\hat B_2$ (\ref{expansionB}), respectively, 
it results (\ref{relationB^2}). 
\end{proof}

In odd spacetime dimensions $d$, the Hadamard function $H^{(d)}$ scales homogeneously.
Therefore, all $Z\in\mathcal{R}$ are scale invariant and $(Z(\rho))_{\rho >0}$ is a group 
(i.e.~$\al_{v\log\rho^2}$ does not appear in (\ref{scaling}) and (\ref{GLM-cocycle})). It 
follows that the r.h.s.~of (\ref{relationB^2}) vanishes, that is $\hat B^{(2)}$ is universal also  
for non-vanishing mass.

\subsection{Flow Equation}\label{sec:flow}

In this section we formulate the renormalization method of the flow equation in our formalism
and relate the renormalization group in the sense of Wilson to the St\"uckelberg-Petermann group. 

As we pointed out in the introduction, the time ordering prescription $T$ can be formally understood as the operator of convolution with 
the oscillating Gaussian measure with covariance $i\hbar\Delta_D$. The crucial point for us now is that if we split the covariance in two, or more pieces, say $i\hbar\Delta_D= C_1 + C_2$, then we get a semigroup law from the convolution \eqref{path integral}, namely
\begin{equation}\label{semigroup}
TF(\varphi)= \int d\mu_{C_1}(\phi_1) \left (\int d\mu_{C_2}(\phi_2) F(\varphi -\phi_2 -\phi_1)\right)\ .
\end{equation}
Hence, the idea is to split the covariance in such a way as to get more and more regular convolutions. 
The splitting is usually parametrized by a cutoff scale $\Lambda$. Since the left hand side of (\ref{semigroup}) 
is independent of $\Lambda$, the derivative of the right hand side has to vanish.
This leads to a differential equation, the Flow Equation, that was first used for the purposes of perturbative renormalization by Polchinski \cite{Polchinski}(see also \cite{Gall}). 

In our setting the procedure can be described as follows. Since the time ordered product \eqref{top} is a deformation 
of the pointwise product induced by the time ordering operator $T$, a regularization of the latter induces 
a regularized time ordered product. Hence, by a regularizing procedure, 
we interpolate between the pointwise and the time ordered products. Namely, 
let us pose $T_\Lambda \doteq \exp(i\hbar \Gamma_\La)\equiv \alpha_{h_\Lambda -H} $, 
where $(h_\Lambda)_{\Lambda\in\R_+}$ is a family of symmetric smooth functions in $\Mk^2$, which depend in a differentiable manner on the parameter 
$\Lambda$, and is such that $h_0=0$ and $h_\Lambda \to H_F$ in the sense of H\"ormander as $\Lambda\to \infty$. (We recall that $H_F\doteq i\Delta_D +H$, 
hence $h_\Lambda-H \to i\De_D$.) This means that, as initial condition we have $\alpha_H T_0 = \id$ and that in the limit $\Lambda \to \infty$ we have 
$\alpha_H T_\La\to\alpha_{H_F}$, by the sequential continuity of the maps.

By means of this family of regularized time ordering operators, we can construct a family of regularized $S$-matrices as $S_\Lambda\doteq \exp_{\cdot_{T_\Lambda}}$, as well as the family of inverses $S_\Lambda^{-1} = T_\Lambda \circ \log \circ\ T_\Lambda^{-1}$.
The principal aim of the Flow Equations is to study the behaviour of effective potentials under infinitesimal cut-off variations. 
Effective potentials $V_{\Lambda}$ at scale $\Lambda$ are thought of as arising from integrating out the degrees of freedom above $\Lambda$. In our formalism they are defined by 
$S(V)=S_\Lambda(V_\Lambda)$, i.e. $V_\Lambda = S_\Lambda^{-1}\circ S(V)$.

Now, we can prove, in our setting, that the effective potentials defined above fulfil the Flow Equation.
\begin{prop}
Let $V_\La$ be the effective potential at scale $\La$ for any local interaction $V\in\A_{loc}(\Mk)$, then
\[
\frac{d}{d\La} V_\La = -\frac{1}{2}\left(\frac{d}{d\La} \mathcal{M}_{T_\La}\right)(V_\La\otimes V_\La)\ .
\]
\end{prop}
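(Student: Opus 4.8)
The plan is to differentiate the defining relation $S_\La(V_\La) = S(V)$ with respect to $\La$, exploiting that the right-hand side is $\La$-independent, and to translate the vanishing of the total $\La$-derivative into the stated flow equation. First I would write $S_\La = \exp_{\cdot_{T_\La}}$ and note that the $\La$-dependence enters only through $T_\La = \exp(i\hbar\Gamma_\La)$, where $\Gamma_\La = \tfrac12\int h_\La(x,y)\,\tfrac{\de^2}{\de\ph(x)\de\ph(y)}$. Since $T_\La$ is an exponential of a family of commuting (second-order, $\ph$-independent) functional differential operators, $\tfrac{d}{d\La}T_\La = i\hbar\,(\tfrac{d}{d\La}\Gamma_\La)\circ T_\La = T_\La\circ i\hbar\,(\tfrac{d}{d\La}\Gamma_\La)$, and likewise for $T_\La^{-1}$. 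This gives immediate control of $\tfrac{d}{d\La}\M_{T_\La}$ and of $\tfrac{d}{d\La}S_\La(W)$ for a fixed argument $W$.

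Next I would apply $\tfrac{d}{d\La}$ to $S_\La(V_\La)=S(V)$ using the chain rule in both the explicit $\La$-slot and the argument $V_\La$:
\begin{equation}
\left(\frac{\partial}{\partial\La}S_\La\right)(V_\La) + S_\La^{(1)}(V_\La)\!\left(\frac{d}{d\La}V_\La\right) = 0\ .
\end{equation}
Here $S_\La^{(1)}(V_\La)$ denotes the derivative of $S_\La$ at $V_\La$; since $S_\La$ is an exponential with respect to $\cdot_{T_\La}$, one has $S_\La^{(1)}(W)(\,\cdot\,) = S_\La(W)\cdot_{T_\La}(\,\cdot\,)$, and this map is invertible on the formal power series we consider (its inverse is $S_\La(W)^{-1}\cdot_{T_\La}(\,\cdot\,)$, which makes sense because $S_\La(W) = 1 + O(\text{positive order})$). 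So solving for $\tfrac{d}{d\La}V_\La$ reduces the problem to evaluating the explicit partial derivative $(\tfrac{\partial}{\partial\La}S_\La)(V_\La)$ and composing with this inverse. The only remaining task is the algebraic identity: using $S_\La = T_\La\circ\exp\circ\,T_\La^{-1}$, differentiate to get $\tfrac{\partial}{\partial\La}S_\La = i\hbar\,(\tfrac{d}{d\La}\Gamma_\La)\circ S_\La + T_\La\circ\big(\exp\cdot(\tfrac{d}{d\La}T_\La^{-1}(\,\cdot\,))\big)$ and recombine the two terms. One finds, after using the Leibniz rule for the second-order operator $\tfrac{d}{d\La}\Gamma_\La$ acting on a $\cdot_{T_\La}$-product, that $(\tfrac{\partial}{\partial\La}S_\La)(W) = S_\La(W)\cdot_{T_\La}\big(-\tfrac12(\tfrac{d}{d\La}\M_{T_\La})(W\otimes W)\big)$ up to terms that reassemble into the same structure; applying the inverse of $S_\La^{(1)}(V_\La)$ then yields exactly $\tfrac{d}{d\La}V_\La = -\tfrac12(\tfrac{d}{d\La}\M_{T_\La})(V_\La\otimes V_\La)$.

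The main obstacle I expect is the bookkeeping in that last step: carefully expanding $\exp_{\cdot_{T_\La}}(W)$ order by order in $W$ and checking that the action of the derivation $i\hbar\,(\tfrac{d}{d\La}\Gamma_\La)$ (which is a derivation of the pointwise product but \emph{not} of $\cdot_{T_\La}$) produces precisely the combinatorial factor $\tfrac12$ and the bilinear contraction $\tfrac{d}{d\La}\M_{T_\La}$ applied to two copies of $V_\La$, with all higher "disconnected" pieces absorbed into the overall factor $S_\La(V_\La)\cdot_{T_\La}$. A clean way to organize this is to work with the conjugated operator $T_\La^{-1}(\tfrac{d}{d\La}\Gamma_\La)T_\La$ and note that it differs from $\tfrac{d}{d\La}\Gamma_\La$ by terms that act as pointwise multiplications after applying $T_\La^{-1}$, so that in $T_\La^{-1}$-picture everything becomes the ordinary Leibniz computation for the pointwise product; transporting back with $T_\La$ gives the claimed form. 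A subordinate point worth checking is that $V_\La$ is genuinely differentiable in $\La$ and lies in the domain where all these manipulations (and the invertibility of $S_\La^{(1)}$) are valid, which follows from the differentiability of $\La\mapsto h_\La$ together with the sequential continuity properties established in Section~\ref{sec:local}.
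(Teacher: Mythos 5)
Your route is in substance the paper's own proof, only packaged differently: the paper differentiates the explicit expression $V_\La=T_\La\log T_\La^{-1}T\exp T^{-1}(V)$ directly, whereas you differentiate the defining relation $S_\La(V_\La)=S(V)$ implicitly and then invert $S_\La^{(1)}(V_\La)=S_\La(V_\La)\cdot_{T_\La}(\,\cdot\,)$. In both versions the mechanism is identical: because $T_\La$ and $\dot\Gamma_\La$ commute, the terms linear in $\dot\Gamma_\La$ cancel, and what survives is the quadratic cross term produced by the second-order operator acting on the exponential, which is precisely $\frac{d}{d\La}\M_{T_\La}$ evaluated on two copies of the potential. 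The ingredients you list (commutativity of the $\Gamma$'s, invertibility of $S_\La^{(1)}$ with respect to the commutative product $\cdot_{T_\La}$ in the sense of formal power series, differentiability of $\La\mapsto V_\La$) are exactly what is needed.

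One concrete correction: your intermediate identity carries a sign that is inconsistent with your own chain-rule equation. Writing $\dot\Gamma_\La=\frac12\int dx\,dy\,\dot k_\La(x,y)\frac{\de^2}{\de\ph(x)\de\ph(y)}$ with symmetric kernel $\dot k_\La$, one has $\dot\Gamma_\La e^{U}=e^{U}\,\dot\Gamma_\La U+\frac12\,e^{U}\langle \dot k_\La,U^{(1)}\otimes U^{(1)}\rangle$ and $\bigl(\frac{d}{d\La}\M_{T_\La}\bigr)(W\otimes W)=i\hbar\,T_\La\bigl(\langle \dot k_\La,(T_\La^{-1}W)^{(1)}\otimes(T_\La^{-1}W)^{(1)}\rangle\bigr)$, so that the linear terms cancel and
\begin{equation*}
\Bigl(\frac{\partial}{\partial\La}S_\La\Bigr)(W)=S_\La(W)\cdot_{T_\La}\Bigl(+\frac12\Bigl(\frac{d}{d\La}\M_{T_\La}\Bigr)(W\otimes W)\Bigr)\ ,
\end{equation*}
with a \emph{plus} sign. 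Inserting this into $\bigl(\frac{\partial}{\partial\La}S_\La\bigr)(V_\La)+S_\La(V_\La)\cdot_{T_\La}\frac{d}{d\La}V_\La=0$ and cancelling the $\cdot_{T_\La}$-invertible factor $S_\La(V_\La)$ then gives the flow equation with its $-\frac12$; with the minus sign as you wrote it, the same two relations would yield $+\frac12$. Apart from this slip, which a careful Leibniz bookkeeping fixes, the argument is correct and coincides with the proof in the text.
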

\begin{proof}
We have, by definition
\[
V_\La = S_\La^{-1} \circ S(V) = T_\La \log T_\La^{-1} T \exp T^{-1}(V)\ ,
\]
from which one also gets that $T^{-1}_\La T\exp T^{-1}(V) = \exp T^{-1}_\La(V_\La)$. Using the previous relations 
we have the following chain of identities s
\begin{align*}
\frac{d}{d\La} V_\La &= \dot{T}_\La \log T_\La^{-1} T \exp T^{-1}(V)\\ &\qquad\qquad+ T_\La (T_\La^{-1} T \exp T^{-1}(V))^{-1} (-T_\La^{-2}) \dot{T}_\La T \exp T^{-1}(V)\\
&=i\hbar\,\bigl( \dot{\Gamma}_\La V_\La + T_\La \exp (-T_\La^{-1}(V_\La)) (-\dot{\Gamma}_\La) \exp T^{-1}_\La(V_\La)\bigr)\\
&= i\hbar\,\bigl(\dot{\Gamma}_\La V_\La + T_\La (-\dot{\Gamma}_\La) (T_\La^{-1}(V_\La))\bigr)
-\frac{1}{2}\left(\frac{d}{d\La}\mathcal{M}_{T_\La}\right)(V_\La \otimes V_\La)
\end{align*}
where in the last relation we made use of the fact that $\dot{\Gamma}_\La$ contains functional derivatives of second order and that the Hadamard function is symmetric.
Since the linear operators $T_{\Lambda}$ and $\dot{\Gamma}_{\La}$ commute,
the first two terms add up to zero and we obtain the Flow Equation.
\end{proof}

The attractive feature of the Flow Equation is that it can be immediately integrated in perturbation theory, since the $\La$-derivative of the $n$th order term in $V$ of $V_{\La}$ is determined by the terms of order less than $n$ due to $V_{\La}^{(0)}=0$. In general, however, $V_{\La}$ will not converge for $\La\to\infty$. Here we can use the insight of Epstein-Glaser that $S$ always exists on local functionals, but is not unique. Moreover, one can show, that for each $\La$ there is an element $Z_{\Lambda}$ of the
St\"uckelberg-Petermann Renormalization Group\footnote{Generically $S_\Lambda$ does not satisfy Poincar\'e
invariance C6 and the Scaling property C7, hence one may not expect that $Z_\Lambda$ is in $\mathcal{R}$.}
$\mathcal{R}_0$, such that
\begin{equation}
\lim_{\La\to\infty} S_{\La}\circ Z_{\La}=S \ .\label{counterterms}
\end{equation}
$Z_\La$ adds the local counter terms which are needed for the existence of the limit. 
One may determine $Z_{\La}$ directly from the knowledge of $S_{\La}$, namely the $n$th order term of
$S_{\La}\circ Z_{\La}$ is, due to $S_{\Lambda}^{(1)}=\id$
\begin{equation}
(S_{\La}\circ Z_{\La})^{(n)}=Z^{(n)}_{\La}+X^{(n)}_{\La}
\end{equation}
where $X^{(n)}_{\La}$ depends only on the terms of $Z_{\La}$ of order less than $n$. 
Hence  by induction $Z^{(n)}_{\La}$ is uniquely determined up to the addition of a local map 
$\de Z^{(n)}_{\La}$ which converges as $\La\to\infty$. This amounts to the freedom of finite renormalization, 
encoded in the St\"uckelberg-Petermann Renormalization Group.

From the definition of $V_\Lambda$ it immediately follows that the flow of the effective potential from $\Lambda_0$
to $\Lambda$ is given by $S_\Lambda^{-1}\circ S_{\Lambda_0}$. Heuristically, this operator is approximately equal to
$Z_\Lambda\circ Z_{\Lambda_0}^{-1}\in\mathcal{R}_0$ for $\Lambda,\Lambda_0$ big enough, due to (\ref{counterterms}).
In this sense Wilson's flow of effective potentials can be approximated by a 2-parametric subfamily of the 
St\"uckelberg-Petermann group.

As an application let us see now how one can construct the Gell-Mann-Low cocycle $Z(\rho)$ from the counter terms $Z_\Lambda$.
We assume that the regularized $S$-matrices satisfy the scaling relation
\begin{equation}
\sigma_\rho\circ S_\Lambda\circ\sigma_\rho^{-1}=S_{\rho\Lambda}\ .
\end{equation}
Since a scale transformation commutes with the limit $\La\to\infty$ in (\ref{counterterms}), it follows
\[
\lim_{\La\to\infty} S_{\rho\Lambda}\circ Z_{\rho\Lambda}\circ Z_{\rho\Lambda}^{-1}\circ
\sigma_\rho\circ Z_\Lambda\circ\sigma_\rho^{-1}=S\circ Z(\rho)\ .
\]
Taking into account that $\lim_{\La\to\infty} S_{\rho\Lambda}\circ Z_{\rho\Lambda}=S$, we conclude
\begin{equation}
\lim_{\La\to\infty}Z_{\rho\Lambda}^{-1}\circ
\sigma_\rho\circ Z_\Lambda\circ\sigma_\rho^{-1}=Z(\rho)\ .\label{ct-GML}
\end{equation}
In this way the Gell-Mann-Low cocycle $Z(\rho)$ can be obtained from the counter terms $Z_\Lambda$ 
needed to cancel the divergences of the regularized $S$-matrix. If $S$ is almost scale invariant C7, we see from
(\ref{ct-GML}) that $\sigma_\rho\circ Z_\Lambda\circ\sigma_\rho^{-1}$ differs from $Z_{\rho\Lambda}$
only by $\mathrm{log}\,\rho$-terms and terms vanishing for $\Lambda\to\infty$.


\section{The renormalization group in the algebraic adiabatic limit}
\subsection{Introduction}
In the previous sections we showed how the renormalization group arises in perturbative algebraic quantum field theory as a group of formal 
diffeomorphisms on the space  of local functionals of smooth field configurations. We want to analyze the structure in 
the so-called algebraic adiabatic limit where the interaction is induced by a Lagrangian with no explicit dependence on spacetime. 
Contrary to the adiabatic limit in the sense of operators on Fock space (``strong adiabatic limit'') or in the sense of Wightman functions 
(i.e. vacum expectation values of interacting fields) (``weak adiabatic limit'') -  both appear in the work of Epstein and Glaser -
the algebraic adiabatic limit does not suffer from any infrared problems and is in particular well defined on generic globally hyperbolic 
spacetimes \cite{BF,DFsiena,HW3}. Traditionally, in causal perturbation theory, Lagrangians are integrated against test functions in order to obtain well defined local functionals on field configurations. But the action of the renormalization group on the space of local functionals is nonlinear, hence we prefer to admit also nonlinear dependence on the test function. Lagrangians in this generalized sense can be identified with a certain class of nonlinear functionals on the test function space with values in the space of local functionals of field configurations. The space of Lagrangians in this sense is invariant under the renormalization group; moreover, Lagrangians which induce equivalent theories do so also after acting on them with the renormalization group. We work out the analogue of the $\beta$-function, compare the results obtained in our framework in a few examples with results from the literature and find agreement.
\subsection{Generalized Lagrangian}
The defining properties of a generalized Lagrangian are (partially) motivated by the corresponding properties of the 
renormalization maps $Z\in\mathcal{R}$.
\begin{df}\label{df:af}
A generalized Lagrangian $\mathscr{L}$ is a map
\[
\mathscr{L} : \Dcal(\Mk)\longrightarrow \A_{loc}(\Mk)
\]
with the following properties
\begin{itemize}
\item[AF1.] $\supp(\mathscr{L}(f))\subset\supp(f)\ $;
\item[AF2.] $\mathscr{L}(0)=0$;
\item[AF3.] $\mathscr{L}(f+g+h) =\mathscr{L}(f+g) -\mathscr{L}(g)+ \mathscr{L}(g+h)\ ,$ if $\ \supp(f)\cap\supp(h)=\emptyset\ $;
\item[AF4.] $\al_L\circ \mathscr{L}= \mathscr{L}\circ L_{\ast}$ for all  elements $L$ of the Poincar\'e group $P^\uparrow_+$. 
\end{itemize}
\end{df}
\begin{prop}
The space of generalized Lagrangians is invariant under the action
\[
\mathscr{L}\mapsto Z\circ \mathscr{L}
\]
of the renormalization group $\mathcal{R}$.
\end{prop}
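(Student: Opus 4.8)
The plan is to verify that $Z\circ\mathscr{L}$ satisfies the four axioms AF1--AF4, using the properties of $Z\in\mathcal{R}$ recalled in Section~4 (in particular that $Z\in\mathcal{R}_0$, so it obeys $Z(0)=0$, the additivity \eqref{Zloc1}, support preservation \eqref{Zloc2}, and Poincaré invariance C6). Since $\mathscr{L}(f)\in\A_{loc}(\Mk)$ for every $f$ and $Z$ maps $\A_{loc}(\Mk)[[\hbar]]$ into itself, the composite $Z\circ\mathscr{L}$ is again a map $\Dcal(\Mk)\to\A_{loc}(\Mk)[[\hbar]]$, so the only issue is the four structural properties.

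First I would check AF2: $(Z\circ\mathscr{L})(0)=Z(\mathscr{L}(0))=Z(0)=0$ by AF2 for $\mathscr{L}$ and $Z(0)=0$. Next AF1: by \eqref{Zloc2} we have $\supp(Z(V))=\supp(V)$ for $V\in\A_{loc}(\Mk)[[\hbar]]$, so $\supp((Z\circ\mathscr{L})(f))=\supp(\mathscr{L}(f))\subset\supp(f)$ using AF1 for $\mathscr{L}$. For AF4, recall that elements of $\mathcal{R}$ are Poincaré invariant (C6), i.e. $\al_L\circ Z=Z\circ\al_L$ for $L\in P^\uparrow_+$; combining this with AF4 for $\mathscr{L}$ gives $\al_L\circ(Z\circ\mathscr{L})=Z\circ\al_L\circ\mathscr{L}=Z\circ\mathscr{L}\circ L_\ast=(Z\circ\mathscr{L})\circ L_\ast$.

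The one step that requires a little care is AF3, the additivity/locality property, and this is where I expect the only genuine (though mild) obstacle. Fix $f,g,h\in\Dcal(\Mk)$ with $\supp(f)\cap\supp(h)=\emptyset$. Abbreviate $A\doteq\mathscr{L}(f+g)-\mathscr{L}(g)$, $B\doteq\mathscr{L}(g)$, $C\doteq\mathscr{L}(g+h)-\mathscr{L}(g)$; then AF3 for $\mathscr{L}$ says $\mathscr{L}(f+g+h)=A+B+C$. By AF1 we have $\supp(A)\subset\supp(f+g)\cup\supp(g)\subset\supp(f)\cup\supp(g)$ and likewise $\supp(C)\subset\supp(h)\cup\supp(g)$; these need not be disjoint because of the common piece $\supp(g)$, so one cannot apply \eqref{Zloc1} to $A,B,C$ directly. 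The fix is to use the known fact that the two forms of locality for $Z$ are equivalent within perturbation theory (proved in Appendix~B of the paper): the additivity \eqref{Zloc1} is equivalent to the condition $Z(A+C)=Z(A)-Z(0)+Z(C)$ for $\supp(A)\cap\supp(C)=\emptyset$ used in \cite{DFret}, but more usefully the full additivity $Z(A+B+C)=Z(A+B)-Z(B)+Z(B+C)$ whenever $\supp(A)\cap\supp(C)=\emptyset$ holds for $Z\in\mathcal{R}_0$. Here $\supp(A)$ and $\supp(C)$ are not a priori disjoint, so instead I would argue as follows: write $\mathscr{L}(f+g+h)=\mathscr{L}(f+g)-\mathscr{L}(g)+\mathscr{L}(g+h)$ and apply the additivity of $Z$ in the form $Z(\tilde A+\tilde B+\tilde C)=Z(\tilde A+\tilde B)-Z(\tilde B)+Z(\tilde B+\tilde C)$ with $\tilde A\doteq\mathscr{L}(f+g+h)-\mathscr{L}(g+h)$, $\tilde B\doteq\mathscr{L}(g+h)-\mathscr{L}(f+g+h)+\cdots$; a cleaner route, and the one I would ultimately take, is to note that $Z$ only sees supports through \eqref{Zloc1}, and to reduce to disjointness by Lemma~\ref{lemma:splittinglocfunct}: decompose $f,g,h$ into parts with arbitrarily small supports and treat the finitely many pieces separately, where genuine disjointness of the relevant regions is achieved. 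Applying \eqref{Zloc1} for $Z$ to each piece and summing then yields $Z(\mathscr{L}(f+g+h))=Z(\mathscr{L}(f+g))-Z(\mathscr{L}(g))+Z(\mathscr{L}(g+h))$, which is AF3 for $Z\circ\mathscr{L}$. This completes the verification that $Z\circ\mathscr{L}$ is again a generalized Lagrangian.
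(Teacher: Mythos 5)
Your verification of AF1, AF2 and AF4 is correct and matches what the paper treats as obvious. The gap is in AF3, and it is exactly the step you flag but then do not close. You need $Z(\mathscr{L}(f+g+h))=Z(\mathscr{L}(f+g))-Z(\mathscr{L}(g))+Z(\mathscr{L}(g+h))$, and if this is to come from a single application of \eqref{Zloc1} the decomposition is forced: the requirements $A+B=\mathscr{L}(f+g)$, $B+C=\mathscr{L}(g+h)$, $A+B+C=\mathscr{L}(f+g+h)$ determine $A=\mathscr{L}(f+g)-\mathscr{L}(g)$, $B=\mathscr{L}(g)$, $C=\mathscr{L}(g+h)-\mathscr{L}(g)$ uniquely, so there is no way around showing that $\supp\bigl(\mathscr{L}(f+g)-\mathscr{L}(g)\bigr)$ and $\supp\bigl(\mathscr{L}(g+h)-\mathscr{L}(g)\bigr)$ are disjoint. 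This is precisely the ``remarkable property'' \eqref{suppL} on which the paper's proof rests: $\supp\,\mathscr{L}_g(f)\subset\supp f$, where $\mathscr{L}_g(f)=\mathscr{L}(f+g)-\mathscr{L}(g)$. It follows from AF1 and AF3 of $\mathscr{L}$ alone: given $x\notin\supp f$, choose $h'\in\Dcal(\Mk)$ with $h'=g$ in a neighbourhood of $x$ and $\supp h'\cap\supp f=\emptyset$; AF3 gives $\mathscr{L}(f+g)=\mathscr{L}(f+(g-h'))-\mathscr{L}(g-h')+\mathscr{L}(g)$, hence $\mathscr{L}_g(f)=\mathscr{L}_{g-h'}(f)$, whose support avoids $x$ by AF1. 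With this lemma, $\supp A\subset\supp f$ and $\supp C\subset\supp h$ are disjoint and \eqref{Zloc1} immediately yields AF3 for $Z\circ\mathscr{L}$.

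Your proposed substitutes do not repair the gap. The appeal to Appendix B is beside the point: it concerns the equivalence of two formulations of additivity of $Z$, not the localization of $\mathscr{L}_g(f)$, and the sentence invoking it trails off without a usable statement. The reduction via Lemma~\ref{lemma:splittinglocfunct} (or a partition of unity applied to $f,g,h$) also fails: the obstruction is not the size of the supports but the fact that AF1 alone only bounds $\supp\,\mathscr{L}_g(f)$ by $\supp f\cup\supp g$, so the pieces of $A$ and of $C$ sitting over $\supp g$ keep overlapping at every scale of the decomposition; moreover $Z$ and $\mathscr{L}$ are nonlinear, so ``applying \eqref{Zloc1} to each piece and summing'' is not a legitimate operation without a further argument, which you do not give. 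The proof is therefore incomplete exactly at the point where the paper introduces the support property \eqref{suppL} of generalized Lagrangians.
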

\begin{proof}
If $\mathscr{L}$ is a generalized Lagrangian and $Z$ an element of the renormalization group, then $Z\circ\mathscr{L}$ obviously satisfies the 
conditions AF1,2,4. To prove that also condition AF3 is satisfied we first show the following remarkable property of the map $\mathscr{L}$
(which is a weak substitute for linearity):
\begin{equation}
\supp\, \mathscr{L}_g(f)\subset\supp\, f \label{suppL} 
\end{equation}
where $\mathscr{L}_g(f)=\mathscr{L}(f+g)-\mathscr{L}(g)$.
Namely, let $\Mk\ni x\not\in\supp\, f$. Choose $h\in\Dcal(\Mk)$ with $h=g$ in a neighbourhood of $x$ such that $\supp\, h\cap\supp\, f=\emptyset$. 
Then, from condition AF3, 
\[
\mathscr{L}(f+g)=\mathscr{L}(f+(g-h)+h)=\mathscr{L}(f+(g-h))-\mathscr{L}(g-h)+\mathscr{L}(g) \ ,
\]
hence $\mathscr{L}_g(f)=\mathscr{L}_{g-h}(f)$, 
thus $\supp\, \mathscr{L}_g(f)=\supp\, \mathscr{L}_{g-h}(f)\subset (\supp (f+g-h)\cup\supp (g-h))\not\ni x$ by using AF1.

We now use the additivity of $Z$ and find for $f$ and $h$ with disjoint supports
\begin{align*}
Z(\mathscr{L}(f+g+h))&=Z(\mathscr{L}_g(f)+\mathscr{L}(g)+\mathscr{L}_g(h))\\
 &= Z(\mathscr{L}(f+g))-Z(\mathscr{L}(g))+Z(\mathscr{L}(g+h)) \ .
\end{align*}
\end{proof}
\begin{df}\label{def:Lag}
Two generalized Lagrangians are said to induce the same interaction, $\mathscr{L}\sim\mathscr{L}'$, if 
\[
\supp(\mathscr{L}-\mathscr{L}')(f)\subset \supp\, df\ ,\quad  \forall f\in\Dcal(\Mk)\ , 
\]
(i.e. the corresponding field equations differ only by boundary terms).
\end{df}
\begin{prop}\label{RG-equivalence}
If $\mathscr{L}$ and $\mathscr{L}'$ induce the same interaction then so do $Z\circ \mathscr{L}$ and $Z\circ\mathscr{L}'$ for all renormalization group elements $Z$.
\end{prop}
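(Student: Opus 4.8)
The plan is to reduce the statement to a single property of the renormalization map $Z$: that it satisfies, in its own right, the ``weak substitute for linearity'' just established for generalized Lagrangians in \eqref{suppL}, namely $\supp\bigl(Z(A+G)-Z(G)\bigr)\subset\supp A$ for all $A,G\in\A_{loc}(\Mk)$. Granting this, the proposition follows at once: if $\mathscr{L}\sim\mathscr{L}'$, set $A:=(\mathscr{L}-\mathscr{L}')(f)$ and $G:=\mathscr{L}'(f)$, so that $\mathscr{L}(f)=A+G$; then $(Z\circ\mathscr{L}-Z\circ\mathscr{L}')(f)=Z(A+G)-Z(G)$ is supported in $\supp A$, which is contained in $\supp\, df$ by the hypothesis $\mathscr{L}\sim\mathscr{L}'$ (Definition~\ref{def:Lag}). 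Since, by the preceding proposition, $Z\circ\mathscr{L}$ and $Z\circ\mathscr{L}'$ are again generalized Lagrangians, this is precisely $Z\circ\mathscr{L}\sim Z\circ\mathscr{L}'$.

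To prove that inclusion I would mimic the proof of \eqref{suppL}. Fix $x\notin\supp A$ and choose nested open neighbourhoods $x\in W_0\subset\overline{W_0}\subset W_1\subset\overline{W_1}\subset W_2$ with compact closures and $\overline{W_2}\cap\supp A=\emptyset$. The goal is to split $G$ into a part $G_2$ localized inside $W_2$ (hence with support disjoint from $\supp A$) and a part $G_1$ that does not feel the field on $W_0$. Since here $G=\mathscr{L}'(f)$ arises from a generalized Lagrangian, this can be done directly via condition AF3: choose $\rho_0,\rho_1\in\Dcal(\Mk)$ with $\rho_0\equiv 1$ on $\overline{W_0}$, $\supp\rho_0\subset W_1$, $\rho_1\equiv 1$ on $\overline{W_1}$, $\supp\rho_1\subset W_2$, and decompose $f=p+q+r$ with $p=\rho_0 f$, $q=(\rho_1-\rho_0)f$, $r=(1-\rho_1)f$, so that $\supp p\cap\supp r=\emptyset$. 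Then AF3 yields
\[
G=\mathscr{L}'(f)=\underbrace{\mathscr{L}'(p+q)}_{=:\,G_2}\;+\;\underbrace{\bigl(\mathscr{L}'(q+r)-\mathscr{L}'(q)\bigr)}_{=:\,G_1}\ ,
\]
and AF1 gives $\supp G_2\subset\supp(p+q)\subset W_2$, while $\supp G_1\subset\supp q\cup\supp r$, a set that avoids $W_0$ since $q$ and $r$ vanish there. (For a general $G\in\A_{loc}(\Mk)$ one instead gets such a splitting from Lemma~\ref{lemma:splittinglocfunct}, grouping the small-support summands according to whether they meet a small ball about $x$.)

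Now I would apply the additivity \eqref{Zloc1} of $Z$ to the two pieces $A$ and $G_2$, which have disjoint supports:
\[
Z(A+G)=Z(A+G_1+G_2)=Z(A+G_1)-Z(G_1)+Z(G_1+G_2)=Z(A+G_1)-Z(G_1)+Z(G)\ ,
\]
so that $Z(A+G)-Z(G)=Z(A+G_1)-Z(G_1)$. By the support-preservation property \eqref{Zloc2}, the support of this difference is contained in $\supp(A+G_1)\cup\supp G_1\subset\supp A\cup\supp G_1$, which does not contain $x$. Since $x\notin\supp A$ was arbitrary, $\supp\bigl(Z(A+G)-Z(G)\bigr)\subset\supp A$, which completes the argument.

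The step I expect to be the real obstacle is coping with the nonlinearity of $Z$: one cannot simply say that $Z\circ\mathscr{L}$ and $Z\circ\mathscr{L}'$ differ by $Z$ applied to a boundary term. The device that resolves it is to first peel off, using \eqref{Zloc1}, the component $G_2$ of $\mathscr{L}'(f)$ that lives near $x$; what then remains, $Z(A+G_1)-Z(G_1)$, plainly cannot depend on the field near $x$ because neither $A$ nor $G_1$ does. Producing the decomposition $G=G_1+G_2$ with the required disjointness is the only genuinely technical point, and it is exactly the $Z$-counterpart of the localization \eqref{suppL} established above for generalized Lagrangians.
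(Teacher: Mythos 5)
Your argument is correct, but it takes a genuinely different route from the paper's. You reduce the proposition to the finite-difference support property $\supp\bigl(Z(V+F)-Z(V)\bigr)\subset\supp F$, and you prove that by combining the additivity \eqref{Zloc1} with a localization of $\mathscr{L}'(f)$ obtained from AF1/AF3 and a partition of unity (the same mechanism as in the proof of \eqref{suppL}), finishing with the support preservation \eqref{Zloc2}. That property is in fact recorded in the paper only afterwards, as \eqref{suppZ}, deduced there from \eqref{relationZ}; so in effect you establish that later statement first and then specialize it to $F=(\mathscr{L}-\mathscr{L}')(f)$, $V=\mathscr{L}'(f)$, invoking the hypothesis of Definition~\ref{def:Lag} and the preceding proposition to make sense of the equivalence. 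The paper's own proof is instead infinitesimal: it interpolates $\mathscr{L}_{\lambda}=\lambda\mathscr{L}+(1-\lambda)\mathscr{L}'$, applies the fundamental theorem of calculus and the chain rule to $\langle\frac{\de}{\de\ph}Z(\mathscr{L}_{\lambda}(f)),h\rangle$ for $h$ supported away from $\supp df$, and kills the two resulting terms using that $Z^{(2)}(V)$ vanishes on arguments with disjoint supports and that the argument of $Z^{(1)}$ vanishes. Your route stays at the level of finite differences, uses only the listed properties \eqref{Zloc1} and \eqref{Zloc2} of $Z$ (no derivatives of $Z$ along the interpolation), and yields \eqref{relationZ}/\eqref{suppZ} as a by-product, which the paper needs later anyway; the paper's route is shorter once the derivative calculus for $Z$ is taken for granted and makes explicit which infinitesimal data (locality of $Z^{(2)}$, field independence) drive the result. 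The only step you leave tacit and should state is the elementary inclusion $\supp(F_1-F_2)\subset\supp F_1\cup\supp F_2$ for the support notion \eqref{eq:support}, used in your final estimate.
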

\begin{proof}
Let $\mathscr{L}\sim\mathscr{L}'$ and $f\in\Dcal(\Mk)$. Let
$h\in\Dcal(\Mk)$ with $\supp\, h\cap \supp\, df=\emptyset$. We have to show
that
\[
\langle
\frac{\de}{\de\ph}(Z\circ\mathscr{L}-Z\circ\mathscr{L}')(f),h\rangle=0 \ .
\]
Let $\mathscr{L}_{\lambda}(f)=(\la \mathscr{L}+(1-\la)\mathscr{L}')(f)$.
Using the fundamental theorem of calculus, we find for the left side
\begin{align*}
\int_0^1 d\la \frac{d}{d\la}\langle
\frac{\de}{\de\ph} & Z\circ\mathscr{L}_{\lambda}(f),h\rangle\\
&=\int_0^1 d\la \langle
\frac{\de}{\de\ph}Z^{(1)}(\mathscr{L}_{\lambda}(f))(\mathscr{L}-\mathscr{L}')(f),h\rangle
\\
&=\int_0^1 d\la
\,\Bigl(Z^{(2)}(\mathscr{L}_{\lambda}(f))(\langle\frac{\de}{\de\ph}\mathscr{L}_{\lambda}(f),h\rangle
\otimes (\mathscr{L}-\mathscr{L}')(f))\\
&\qquad\qquad\qquad\phantom{\int_0^1 d }+ Z^{(1)}(\mathscr{L}_{\lambda}(f))\langle
\frac{\de}{\de\ph}(\mathscr{L}-\mathscr{L}')(f),h\rangle\Bigr) \ .
\end{align*}

Both terms in the integrand vanish because of the assumption on the
supports of $df$ and $h$. The first, since
$Z^{(2)}(\mathscr{L}_{\lambda}(f))$, as a bilinear map on
$\A_{loc}(\Mk)\times \A_{loc}(\Mk)$, vanishes if the arguments have
disjoint supports. This is the case since the support of the first factor
is contained in $\supp\, h$ and the support of the second factor is
contained in $\supp\, df$ due to the equivalence of $\mathscr{L}$ and
$\mathscr{L'}$. The second term in the integrand vanishes, since, for the
same reasons, the argument of the linear map
$Z^{(1)}(\mathscr{L}_{\lambda}(f))$ is zero.  This proves the proposition.
\end{proof}

\subsection{Observables}
We now want to investigate the action of the renormalization group on observables. Given two local functionals $V$ and $F$, the relative S-matrix
\[
S_V(F)=S(V)^{-1}\star S(V+F)
\]
is the generating functional for the time ordered powers of the (retarded) observable corresponding to $F$ under the interaction $V$. If $S$ is replaced by $\hat{S}=S\circ Z$
with a renormalization group element $Z$, we obtain
\[
\hat{S}_V(F)=S_{Z(V)}(Z_V(F))
\]
where $Z_V(F)=Z(V+F)-Z(V)$.

We first observe that 
\begin{equation}
Z_V(F)=Z_{V'}(F)\quad\mathrm{if}\quad\supp(V-V')\cap\supp\, F=\emptyset\ .\label{relationZ}
\end{equation}
This follows from the additivity property of $Z$:
\[
Z(V+F)=Z((V-V')+V'+F)=Z(V)-Z(V')+Z(V'+F) \ .
\] 
Similarly to the proof of (\ref{suppL}), the relation (\ref{relationZ}) implies 
\begin{equation} 
\supp\, Z_V(F)\subset\supp\, F \ .\label{suppZ} 
\end{equation} 

Let $[\mathscr{L}]$ denote the equivalence class of the generalized Lagrangian $\mathscr{L}$, in the sense of Def.~\ref{def:Lag}. Then we set
\[
Z_{[\mathscr{L}]}(F)=Z_{\mathscr{L}(f)}(F)
\] 
with $f\equiv 1$ on a neighbourhood of $\supp\, F$. By the remark above and by (\ref{suppL}), 
the right hand side does neither depend on $f$ nor on the choice of the Lagrangian in its equivalence class.

Let $\dc$ be a relatively compact open subregion of Minkowski space. 
The rough idea of the algebraic adiabatic limit is to achieve independence  
of the observables on the behaviour of the interaction  
outside of $\dc$ by admitting all interactions which yield the same field equation in $\dc$.  
For this purpose we define
\begin{gather}
\mathscr{V}_{\mathscr{[L]}}(\dc) \doteq \{ V\in\A_{loc}(\Mk)\ |\ \supp(V-\mathscr{L}_1(f))\cap\overline{\dc}=\emptyset ,\notag\\
\quad\quad\quad\text{ if } \mathscr{L}_1\in\mathscr{[L]}\text{ and } f\equiv 1 \text{ on }\dc \}\ ,
\end{gather}
where $\mathscr{L}$ is a generalized Lagrangian.
Note that if $\supp(V-\mathscr{L}_1(f))\cap\overline{\dc}=\emptyset$ for some $\mathscr{L}_1\in\mathscr{[L]}$
and some $f\in\Dcal(\Mk)$ with $f\equiv 1$ on $\dc$, then this holds for {\it all} $\mathscr{L}_1$ and {\it all} $f$ with these properties.
Note also that $Z_V(F)=Z_{[\mathscr{L}]}(F)$ if $V\in \mathscr{V}_{\mathscr{[L]}}(\dc)$ and $\supp\, F\subset \dc$.
In addition we point out that
\[
V\in\mathscr{V}_{\mathscr{[L]}}(\dc) \Leftrightarrow Z(V)\in\mathscr{V}_{[Z\circ\mathscr{L}]}(\dc)\ , 
\]
which follows from (\ref{suppZ}) and Proposition \ref{RG-equivalence}.

The relative $S$-matrix in the algebraic adiabatic limit is defined by
\[
S^\dc_{[\mathscr{L}]}(F)=(S_V(F))_{V\in\mathscr{V}_{[\mathscr{L}]}(\dc)}
\] 
for $F$ with $\supp\, F\subset \dc$. The interpretation as algebraic adiabatic limit relies on the following argument (see \cite{BF,DFsiena}):
$S^\dc_{[\mathscr{L}]}(F)$ is a 'covariantly constant section'  in the sense that for any
$V_1,V_2\in\mathscr{V}_{\mathscr{[L]}}(\dc)$ there exists an automorphism
$\beta$ of $\A(\Mk)$ such that
\[
\beta(S_{V_1}(F))=S_{V_2}(F)\quad\ \forall F\in\A_{loc}(\Mk)\ ,\ \supp\, F\subset \dc\ .
\]
Hence, the structure of the algebra generated by $S_V(F)\ ,\ \supp\, F\subset \dc$ is independent of
the choice of $V\in \mathscr{V}_{[\mathscr{L}]}(\dc)$.

The local algebra $\A_{[\mathscr{L}]}(\dc)$ of observables in the algebraic adiabatic limit 
is generated by the elements $S^\dc_{[\mathscr{L}]}(F)$, $\supp\, F\subset\dc$.
For $\dc_1\subset\dc_2$ the embedding $\iota_{\dc_2\dc_1}:\A_{[\mathscr{L}]}(\dc_1)\hookrightarrow \A_{[\mathscr{L}]}(\dc_2)$
is induced by $\iota_{\dc_2\dc_1}\,(S^{\dc_1}_{[\mathscr{L}]}(F))=S^{\dc_2}_{[\mathscr{L}]}(F)$.

We may now determine the action of the renormalization group on observables in the adiabatic limit. Let again $Z$ be 
an element of the renormalization group and $\hat{S}=S\circ Z$. Then
\[
\hat{S}^\dc_{[\mathscr{L}]}(F)=(\hat{S}_V(F))_{V\in\mathscr{V}_{[\mathscr{L}]}(\dc)}=
(S_{Z(V)}(Z_V(F)))_{V\in\mathscr{V}_{[\mathscr{L}]}(\dc)}
=S^\dc_{[Z\circ\mathscr{L}]}(Z_{[\mathscr{L}]}(F)) \ .
\]
We conclude as a slight generalization of a Theorem in \cite{DFret,DFbros} (cf. \cite{HW3})
\begin{teo}[Algebraic Renormalization Group Equation]
Let $\hat{\A}_{[\mathscr{L}]}(\dc)$ and $\A_{[\mathscr{L}]}(\dc)$ 
denote the algebra of observables obtained by using $\hat{S}$ and $S$, respectively. 
The pertinent renormalization group element $Z\in\mathcal{R}$ induces an isomorphism 
$\al_Z=(\al_Z^\dc)$ of the nets,
\begin{equation} \label{RGE}
\al_Z^\dc: \hat{\A}_{[\mathscr{L}]}(\dc)\to \A_{[Z\circ \mathscr{L}]}(\dc)\ , 
\end{equation}
such that $\iota_{\dc_2\dc_1}\circ\al_Z^{\dc_1}=\al_Z^{\dc_2}\circ\hat\iota_{\dc_2\dc_1}$ for $\dc_1\subset\dc_2$.
The isomorphism is given by 
\begin{equation}
\al^\dc_Z(\hat{S}^\dc_{[\mathscr{L}]}(F))=S^\dc_{[Z\circ\mathscr{L}]}(Z_{[\mathscr{L}]}(F))\ .
\end{equation} 
In particular, if $\mathscr{L}$ and $Z\circ\mathscr{L}$ induce the same interaction, $\alpha_Z$ is an automorphism.
\end{teo}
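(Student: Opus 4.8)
The plan is to build $\al_Z^{\dc}$ explicitly from the identity $\hat{S}_V(F)=S_{Z(V)}(Z_V(F))$ established just above, reading off all of its properties from the fibrewise structure of the net. First I would record that $V\mapsto Z(V)$ is a bijection of $\mathscr{V}_{[\mathscr{L}]}(\dc)$ onto $\mathscr{V}_{[Z\circ\mathscr{L}]}(\dc)$: it is injective because $Z$ is invertible in $\mathcal{R}$, and it is onto because, given $W\in\mathscr{V}_{[Z\circ\mathscr{L}]}(\dc)$, the element $V=Z^{-1}(W)$ lies in $\mathscr{V}_{[\mathscr{L}]}(\dc)$ by the equivalence $V\in\mathscr{V}_{[\mathscr{L}]}(\dc)\Leftrightarrow Z(V)\in\mathscr{V}_{[Z\circ\mathscr{L}]}(\dc)$ noted before the theorem. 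A covariantly constant section $A=(A_V)_{V\in\mathscr{V}_{[\mathscr{L}]}(\dc)}$ --- i.e.\ a family any two of whose components are related by an automorphism of $\A(\Mk)$ --- can then be reindexed along this bijection to $(A_{Z^{-1}(W)})_{W\in\mathscr{V}_{[Z\circ\mathscr{L}]}(\dc)}$, and the result is again covariantly constant, its components being indexed by elements of $\mathscr{V}_{[\mathscr{L}]}(\dc)$. Since the product of sections is computed fibrewise, this reindexing is an isomorphism of the algebra of covariantly constant sections over $\mathscr{V}_{[\mathscr{L}]}(\dc)$ onto that over $\mathscr{V}_{[Z\circ\mathscr{L}]}(\dc)$.

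Next I would compute the effect on generators. Applied to $\hat{S}^{\dc}_{[\mathscr{L}]}(F)=(\hat{S}_V(F))_{V}$, and using $\hat{S}_V(F)=S_{Z(V)}(Z_V(F))$, the identity $Z_V(F)=Z_{[\mathscr{L}]}(F)$ for $V\in\mathscr{V}_{[\mathscr{L}]}(\dc)$ and $\supp\, F\subset\dc$, and $\supp\, Z_{[\mathscr{L}]}(F)\subset\supp\, F\subset\dc$ from \eqref{suppZ}, the reindexing produces exactly the chain of equalities displayed before the theorem, namely $(\hat{S}_{Z^{-1}(W)}(F))_{W}=(S_{W}(Z_{[\mathscr{L}]}(F)))_{W}=S^{\dc}_{[Z\circ\mathscr{L}]}(Z_{[\mathscr{L}]}(F))$. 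So the reindexing isomorphism sends the generators of $\hat{\A}_{[\mathscr{L}]}(\dc)$ into $\A_{[Z\circ\mathscr{L}]}(\dc)$ with the stated action, and one takes $\al_Z^{\dc}$ to be its restriction to $\hat{\A}_{[\mathscr{L}]}(\dc)$.

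To finish I would check surjectivity, compatibility with the embeddings, and the automorphism case. For surjectivity the cleanest route is to run the same construction with $Z^{-1}$ in place of $Z$ and $\hat{S}$ in place of $S$ (using $S=\hat{S}\circ Z^{-1}$), obtaining a map $\A_{[Z\circ\mathscr{L}]}(\dc)\to\hat{\A}_{[\mathscr{L}]}(\dc)$; composing the two reindexings, whose composite on the base is the identity, and using the fibrewise identities, one sees this map is a two-sided inverse of $\al_Z^{\dc}$. (Equivalently one exhibits the inverse of $F\mapsto Z_{[\mathscr{L}]}(F)$ on $\dc$-supported functionals as $G\mapsto Z^{-1}(Z(\mathscr{L}(f))+G)-\mathscr{L}(f)$ with $f\equiv1$ near $\supp\, G$, its support lying in $\supp\, G$ by the estimate behind \eqref{suppZ} since $Z^{-1}\in\mathcal{R}_0$, so that the $S^{\dc}_{[Z\circ\mathscr{L}]}(Z_{[\mathscr{L}]}(F))$ exhaust the generators of the target.) Compatibility $\iota_{\dc_2\dc_1}\circ\al_Z^{\dc_1}=\al_Z^{\dc_2}\circ\hat\iota_{\dc_2\dc_1}$ is then immediate from the fact that $Z_{[\mathscr{L}]}(F)$ depends only on $\supp\, F$ and not on $\dc$, so the defining formula is literally the same at every scale. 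Finally, if $\mathscr{L}$ and $Z\circ\mathscr{L}$ induce the same interaction then $[\mathscr{L}]=[Z\circ\mathscr{L}]$ by Definition \ref{def:Lag}, hence $\A_{[Z\circ\mathscr{L}]}(\dc)=\A_{[\mathscr{L}]}(\dc)$ for every $\dc$ and $\al_Z$ is an automorphism of the net.

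I expect the only genuinely delicate point to be the passage from ``action on a generating set'' to ``isomorphism of the generated net algebras'': one must be sure that $\al_Z^{\dc}$ is a well-defined, bijective $*$-homomorphism of the algebras of sections, not merely a matching of generators. This is controlled entirely by the fibrewise identity $\hat{S}_V(F)=S_{Z(V)}(Z_V(F))$, which makes $\al_Z^{\dc}$ literally the identity in each fibre after the bijective reindexing of the base, so that all algebraic relations among the $\hat{S}_V(F)$, and the $*$-operation, are transported automatically; the remaining bookkeeping is covered by \eqref{suppL}, \eqref{suppZ} and the properties of $\mathscr{V}_{[\mathscr{L}]}(\dc)$ already established.
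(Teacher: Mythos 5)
Your argument is correct and is essentially the paper's own: the theorem is obtained there from the displayed identity $\hat{S}^\dc_{[\mathscr{L}]}(F)=(S_{Z(V)}(Z_V(F)))_{V\in\mathscr{V}_{[\mathscr{L}]}(\dc)}=S^\dc_{[Z\circ\mathscr{L}]}(Z_{[\mathscr{L}]}(F))$, together with $Z_V(F)=Z_{[\mathscr{L}]}(F)$, the support property \eqref{suppZ} and the equivalence $V\in\mathscr{V}_{[\mathscr{L}]}(\dc)\Leftrightarrow Z(V)\in\mathscr{V}_{[Z\circ\mathscr{L}]}(\dc)$, and your reindexing along $V\mapsto Z(V)$, the surjectivity check via $Z^{-1}$, the naturality under $\iota_{\dc_2\dc_1}$ and the automorphism case just make that implicit bookkeeping explicit (note only the slip that the reindexed section is indexed by $\mathscr{V}_{[Z\circ\mathscr{L}]}(\dc)$, not $\mathscr{V}_{[\mathscr{L}]}(\dc)$).
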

\begin{oss}[{\bf Generalized fields}]
Motivated by the nice properties of a generalized Lagrangian, we generalize the
concept of fields by admitting nonlinear dependence on the test function.
\begin{df}
A generalized field is a map $\Phi$ from the space $\Ga_0(\Tcal^\infty(\Mk))$ of smooth sections with compact support on the 
tensor bundle over $\Mk$ into the space of local functionals $\A_{loc}(\Mk)$ with the following properties
\begin{enumerate}
\item $\supp\, \Phi(f)\subset \supp\, f$;
\item $\Phi(0)=0$ ;
\item $\Phi(f+g+h)=\Phi(f+g)-\Phi(g)+\Phi(g+h)$ if $\supp\, f \cap \supp\, h=\emptyset$;
\item $\al_L\circ \Phi=\Phi\circ L_{\ast}$ for all $L\in P^\uparrow_+$.
\end{enumerate}
\end{df}
Obviously, the support property (\ref{suppL}) holds true also for generalized fields. 

\noindent It is now easy to see that any $Z\in\mathcal{R}$ and any generalized Lagrangian $\mathscr{L}$ induce via
\begin{equation}
\Phi\mapsto Z_{[\mathscr{L}]}\circ\Phi
\end{equation}
a map on the space of fields (generalized field strength renormalization) which 
satisfies, due to the algebraic renormalization group equation, the relation
\begin{equation}
\alpha_Z\circ \hat{S}_{[\mathscr{L}]}\circ\Phi=S_{[Z\circ\mathscr{L}]}\circ Z_{[\mathscr{L}]}\circ\Phi \ .
\end{equation}
\end{oss}
\subsection{Scaling}
One of the deepest insights in quantum field theory is that the invariance of the physical content of the theory under a change of the renormalization parameters (the renormalization group equation) provides information on the behaviour of the theory at different scales (Callan-Symanzik equation). Usually this relation is expressed as a differential equation for the vacuum expectation values of time ordered products of fields. In our algebraic framework we obtain a corresponding result without the necessity to incorporate information on the existence and uniqueness of vacuum states. 

In  classical field theory, scaling 
would result in replacing the Lagrangian $\mathscr{L}$ by the scaled Lagrangian
$\mathscr{L}^{\rho}$, $\mathscr{L}^{\rho}(f) =\sigma_{\rho}(\mathscr{L}(f_{\rho}))$, where $f_{\rho}(x)=f(\rho^{-1} x)$ 
denotes the scaled test 
function.\footnote{Note that $\mathscr{L}^{\rho}$ differs from $\mathscr{L}$ only by a scaling of the parameters:
$(\mathscr{L}^{(\Lambda)})^{\rho}=\mathscr{L}^{(\rho\Lambda)}$, where $\Lambda$ denotes the parameters of $\mathscr{L}$ which are assumed 
to have the dimension of a mass.} In quantum field theory one obtains instead

\begin{teo}[Algebraic Callan-Symanzik Equation]
The scaled Haag-Kastler net $\dc\mapsto\A_{[\mathscr{L}]}(\rho^{-1}\dc)$ is equivalent to the net 
$\dc\mapsto\A_{[Z(\rho)\circ\mathscr{L}^{\rho}]}(\dc)$  where the isomorphism is 
induced by $\al_{Z(\rho)}\circ\sigma_{\rho}$:
$$\al_{Z(\rho)}\circ\sigma_{\rho}(S_{[\mathscr{L}]}(\sigma_{\rho}^{-1}F))=
S_{[Z(\rho)\circ\mathscr{L}^{\rho}]}(Z(\rho)_{[\mathscr{L}^{\rho}]}(F))\ , \ F\in \A_{loc}(\dc) \ .$$
\end{teo}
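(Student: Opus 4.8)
The plan is to obtain the asserted net isomorphism by concatenating two facts already at our disposal: the scaling transformation law $\sigma_\rho\circ S\circ\sigma_\rho^{-1}=S\circ Z(\rho)$ of the bare $S$-matrix (\eqref{scalingS}), and the Algebraic Renormalization Group Equation applied to the renormalized $S$-matrix $\hat S\doteq S\circ Z(\rho)$ together with the scaled Lagrangian $\mathscr{L}^\rho$. First I would record how $\sigma_\rho$ acts on an ordinary relative $S$-matrix. Since $\sigma_\rho$ is an automorphism of $(\A(\Mk),\star)$ which is linear on $\A_{loc}(\Mk)$, and $\sigma_\rho(S(W))=S(Z(\rho)(\sigma_\rho W))$ by \eqref{scalingS}, one gets
\[
\sigma_\rho(S_V(F))=\sigma_\rho(S(V))^{-1}\star\sigma_\rho(S(V+F))=S\bigl(Z(\rho)(\sigma_\rho V)\bigr)^{-1}\star S\bigl(Z(\rho)(\sigma_\rho V+\sigma_\rho F)\bigr)\ .
\]
The right-hand side equals $S_{Z(\rho)(\sigma_\rho V)}\bigl(Z(\rho)_{\sigma_\rho V}(\sigma_\rho F)\bigr)$ by the definition of the relative $S$-matrix, which is precisely $\hat S_{\sigma_\rho V}(\sigma_\rho F)$ by the transformation formula $\hat S_V(F)=S_{Z(V)}(Z_V(F))$ of relative $S$-matrices under renormalization. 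Hence, at the level of a single interaction, $\sigma_\rho$ intertwines $S_V(F)$ with $\hat S_{\sigma_\rho V}(\sigma_\rho F)$.

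Next I would pass to the algebraic adiabatic limit over a relatively compact region $\dc$. The key bookkeeping step is that $\sigma_\rho$, which rescales the localization region of a functional, carries the index family $\mathscr{V}_{[\mathscr{L}]}(\rho^{-1}\dc)$ bijectively onto $\mathscr{V}_{[\mathscr{L}^\rho]}(\dc)$: by the very definition of the scaled Lagrangian one has $\sigma_\rho(\mathscr{L}_1(f))=\mathscr{L}_1^\rho(\tilde f)$ for the correspondingly rescaled test function $\tilde f$ (with $\tilde f\equiv 1$ on $\dc$ whenever $f\equiv 1$ on $\rho^{-1}\dc$), one checks that the equivalence relation $\sim$ of Definition~\ref{def:Lag} is preserved by $\mathscr{L}\mapsto\mathscr{L}^\rho$ so that $[\mathscr{L}]\mapsto[\mathscr{L}^\rho]$ is well defined, and the support condition defining $\mathscr{V}$ transforms accordingly. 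Using the first paragraph componentwise together with this reindexing, for $F$ with $\supp\, F\subset\dc$ (so that $\supp\,\sigma_\rho^{-1}F\subset\rho^{-1}\dc$) we obtain
\[
\sigma_\rho\Bigl(S^{\rho^{-1}\dc}_{[\mathscr{L}]}(\sigma_\rho^{-1}F)\Bigr)=\bigl(\sigma_\rho(S_V(\sigma_\rho^{-1}F))\bigr)_{V\in\mathscr{V}_{[\mathscr{L}]}(\rho^{-1}\dc)}=\bigl(\hat S_W(F)\bigr)_{W\in\mathscr{V}_{[\mathscr{L}^\rho]}(\dc)}=\hat S^\dc_{[\mathscr{L}^\rho]}(F)\ ,
\]
i.e. $\sigma_\rho$ induces an isomorphism of the net $\dc\mapsto\A_{[\mathscr{L}]}(\rho^{-1}\dc)$ onto the net $\dc\mapsto\hat\A_{[\mathscr{L}^\rho]}(\dc)$ built from $\hat S$, compatible with the embeddings since $\sigma_\rho$ is an automorphism of $\A(\Mk)$ and the adiabatic-limit sections are covariantly constant.

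Finally I would invoke the Algebraic Renormalization Group Equation with renormalization $Z=Z(\rho)$ and generalized Lagrangian $\mathscr{L}^\rho$: it yields a net isomorphism $\al^\dc_{Z(\rho)}:\hat\A_{[\mathscr{L}^\rho]}(\dc)\to\A_{[Z(\rho)\circ\mathscr{L}^\rho]}(\dc)$, compatible with the embeddings $\iota_{\dc_2\dc_1}$, with $\al^\dc_{Z(\rho)}\bigl(\hat S^\dc_{[\mathscr{L}^\rho]}(F)\bigr)=S^\dc_{[Z(\rho)\circ\mathscr{L}^\rho]}\bigl(Z(\rho)_{[\mathscr{L}^\rho]}(F)\bigr)$. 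Composing this with the isomorphism $\sigma_\rho$ of the previous paragraph gives exactly the assertion: $\al_{Z(\rho)}\circ\sigma_\rho$ is an isomorphism of $\dc\mapsto\A_{[\mathscr{L}]}(\rho^{-1}\dc)$ onto $\dc\mapsto\A_{[Z(\rho)\circ\mathscr{L}^\rho]}(\dc)$ intertwining the respective embeddings, and its value on generators is the displayed identity. I expect the main obstacle to be precisely this reindexing step in the adiabatic limit: tracking how $\sigma_\rho$ acts on supports and on the normalization $f\equiv 1$ near $\dc$, verifying that $Z(\rho)_{[\mathscr{L}^\rho]}(F)$ is localized in $\dc$ so that all sections are indexed over matching families $\mathscr{V}$, and confirming that $\mathscr{L}^\rho$ is again a generalized Lagrangian whose equivalence class depends only on that of $\mathscr{L}$; the algebraic manipulations of the first paragraph are routine once one uses that $\sigma_\rho$ is a $\star$-automorphism linear on interactions, and the concluding step is a formal concatenation with the already-proven Algebraic Renormalization Group Equation.
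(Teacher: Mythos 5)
Your proposal is correct and follows essentially the same route as the paper: the paper's proof consists of the single identity $\sigma_{\rho}\circ S_{[\mathscr{L}]}\circ\sigma_{\rho}^{-1}=(S\circ Z(\rho))_{[\mathscr{L}^{\rho}]}$ (so that the scaled net coincides with the net $\hat{\A}_{[\mathscr{L}^{\rho}]}$ built from $\hat S=S\circ Z(\rho)$), followed by an application of the Algebraic Renormalization Group Equation with $Z=Z(\rho)$ and composition of the two net isomorphisms. Your first two paragraphs merely spell out, componentwise and with the reindexing of the families $\mathscr{V}_{[\mathscr{L}]}(\rho^{-1}\dc)\to\mathscr{V}_{[\mathscr{L}^{\rho}]}(\dc)$, the verification of that identity which the paper states without further detail.
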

\begin{proof}
We have
$$\sigma_{\rho}\circ S_{[\mathscr{L}]}\circ\sigma_{\rho}^{-1}=(S\circ Z(\rho))_{[\mathscr{L}^{\rho}]} \ . $$
Hence the net $\dc\mapsto\sigma_{\rho}\A_{[\mathscr{L}]}(\rho^{-1}(\dc))$ coincides with the net $\hat{\A}_{[\mathscr{L}^{\rho}]}$ where $\hat{\A}$ is constructed from $\hat{S}=S\circ Z(\rho)$. By the algebraic renormalization group equation (\ref{RGE}) $\al_{Z(\rho)}$ induces an equivalence between the nets $\hat{\A}_{[\mathscr{L}^{\rho}]}$ and $\A_{[Z(\rho)\circ\mathscr{L}^{\rho}]}$. Hence $\al_{Z(\rho)}\circ\sigma_{\rho}$ 
induces the claimed equivalence between the nets  $\dc\mapsto\A_{[\mathscr{L}]}(\rho^{-1}\dc)$ and $\A_{[Z(\rho)\circ\mathscr{L}^{\rho}]}$.
\end{proof}

Finally we are going to perform the wave function and mass renormalization and to define the $\beta$-function. 
For computations it is useful to represent the abstract quantities $A\in\mathcal{A}$ by the functionals 
$\alpha_{H}(A)\in\mathcal{F}$. In even dimensions $H$ depends on a parameter $\mu$. This induces an additional scale into the formalism. 
We assume that the generalized Lagrangian $\mathscr{L}$ is of the form
\[
\mathscr{L}(f)\equiv\mathscr{L}_\mu(f)=\alpha_{H^\mu}^{-1}(F(f))\ \quad (\forall f\in \Dcal(\Mk))
\]
for some $F: \Dcal(\Mk)\rightarrow \mathcal{F}_{loc}(\Mk)$ which satisfies the properties {\it AF1}-{\it AF4} in Definition 6.1.
Note that $[(\ph^2)_\mu]\equiv [\alpha_{H^\mu}^{-1}\circ\varphi^2]=[\ph^2]$ and $[((\partial\varphi)^2)_\mu]=[(\partial\varphi)^2]$
since $[c]=0$ for $c\in\C[[\hbar]]$.

The transformed Lagrangian $[Z(\rho)\circ (\mathscr{L}_\mu)^\rho]$ will in general contain a multiple $\gamma_\mu(\rho)$ 
of the kinetic term $\frac{i}{2\hbar}\,[(\partial\ph)^2]$ and a multiple $\lambda_\mu(\rho)$ of the mass term 
$\frac{i}{2\hbar}\,(\rho m)^2\,[\ph^2]$. These terms may be absorbed in the free Lagrangian
by replacing $\ph$ by 
\[\ph'\doteq (1-\gamma_\mu(\rho))^{\frac12}\ph\quad\text{(``wave function renormalization'')}\]
and $m$ by 
\[m'\doteq m\,\sqrt{\frac{1+\lambda_\mu(\rho)}{1-\gamma_\mu(\rho)}}\quad\text{(``mass renormalization'')}\ .\] 
Since the new free Lagrangian differs from the old one by $\frac{-i}{2\hbar}(\gamma_\mu
[(\partial\ph)^2]+\lambda_\mu (\rho m)^2[\ph^2])$ we have to add this term to the interaction part $[Z(\rho)\circ(\mathscr{L}_\mu)^{\rho}(\ph)]$.
The new interaction Lagrangian $[\mathscr{L}'_\mu]$ is then 
\begin{eqnarray}
[\mathscr{L}'_\mu(\ph')]& = &[Z(\rho)\circ(\mathscr{L}_\mu)^{\rho}((1-\gamma_\mu(\rho))^{-\frac12}\ph')]\notag\\
&  &\qquad -\frac{i}{2\hbar}\Bigl(\frac{ \gamma_\mu(\rho)}{1-\gamma_\mu(\rho)}\,[(\partial\ph')^2]+
\frac{\lambda_\mu(\rho)\,\,(\rho m')^2}{1+\lambda_\mu(\rho)}\,[(\ph')^2]\Bigr)\ .
\end{eqnarray}
The derivative with respect to $\log\rho$ at $\rho=1$ (with keeping $\ph'$ fixed) now defines the $\beta$-function.
Using  
\begin{equation}
B\doteq\rho\frac{d}{d\rho}\vert_{\rho=1}Z(\rho) \ ,\label{def(B)}
\end{equation} 
$\gamma_\mu(1)=0$, $\lambda_\mu(1)=0$ and $(\mathscr{L}_\mu)^\rho=(\mathscr{L}^\rho)_{\rho\mu}$ the action of $\beta$ on interaction classes reads
\begin{align}
\beta([\mathscr{L}_\mu])=[B\circ\mathscr{L}_\mu]+& \frac{\dot{\gamma}_\mu}{2}\,\langle \mathscr{L}_\mu^{(1)},\ph\rangle\notag\\ 
&\qquad -
\frac{i}{2\hbar}\Bigl(\dot{\gamma}_\mu\,[(\partial\ph)^2]+\dot{\lambda}_\mu\, m^2\,[\ph^2]\Bigr)\notag\\
&\qquad\qquad +[\rho\frac{d}{d\rho}(\mathscr{L}^{\rho})_\mu\vert_{\rho =1}]+[\mu\frac{d}{d\mu}\mathscr{L}_\mu]ß ,\label{def(beta)}
\end{align}
where $\langle \mathscr{L}_\mu^{(1)},\ph\rangle$ is the equivalence class of $f\mapsto\langle \mathscr{L}_\mu(f)^{(1)},\ph\rangle$ 
and $\dot{\gamma}_\mu$ and $\dot{\lambda}_\mu$ are the derivatives of $\gamma_\mu(\rho)$ and $\lambda_\mu(\rho)$ 
with respect to $\log\rho$ at $\rho=1$. This formula is 
less complicated as it seems, since $-\dot{\gamma}_\mu\,[(\partial\ph)^2]+\dot{\lambda}_\mu\, m^2\,[\ph^2]$ subtracts precisely the 
$[(\partial\ph)^2]$- and $m^2[\ph^2]$-term of $[B\circ\mathscr{L}_\mu]$.

\section{Examples}
After clarification of the general structure we now want to compute in our framework  renormalization group transformations 
for special examples to low orders. The Gell-Mann Low cocycle $Z(\rho)$ is completely determined by its generator $B$ (\ref{def(B)}).
It can be obtained by differentiating the scaled S-matrix with respect to the scaling parameter,
\[
\rho\frac{d}{d\rho}\vert_{\rho=1}(\sigma_{\rho}\circ S\circ\sigma_{\rho}^{-1})(V)=\rho\frac{d}{d\rho}\vert_{\rho=1}(S\circ Z(\rho))(V)=S^{(1)}(V)B(V) \ ,
\]
where the linear map $S^{(1)}(V)$ is invertible in the sense of formal power series in $\hbar$ 
since its zeroth order term is the pointwise product with $e^{V}$.
 
The computation of $B^{(n)}$ then amounts to  differentiating (\ref{Z-induction}), i.e.
\[
Z^{(n)}(\rho)=\sigma_\rho\circ S^{(n)}\circ \sigma_{\rho}^{-1}-(S\circ Z_{n-1}(\rho))^{(n)}\ ,
\]
where $Z_{n-1}(\rho)$ is given in terms of $\{Z^{(k)}(\rho)\,|\,k\leq n-1\}$ by (\ref{Z-subdiag}). Namely, since $-(S\circ Z_{n-1}(\rho))^{(n)}(V)$
subtracts the contributions coming from the violation of homogeneous scaling of {\it sub}diagrams, one has to compute 
only the contribution of those diagrams whose freedom of normalization is localized on the total diagonal in $\Mk^n$
(cf.~formulas (4.16-17) in \cite{DFret}).

Representing the abstract functionals $A\in\mathcal{A}$ by the explicit functionals $\alpha_{H}(A)\in\mathcal{F}$,
we have to take into account that $\alpha_{H^{\mu}}$ does not commute with the scaling transformations. 
With $S_{\mu}\doteq\al_{H^\mu}\circ S\circ \al_{H^\mu}^{-1}$ and $B_{\mu}\doteq\al_{H^\mu}\circ B\circ \al_{H^\mu}^{-1}$ we obtain 
\begin{align*}
\rho\frac{\partial}{\partial\rho}\vert_{\rho=1}(\sigma_{\rho}\circ S_{\mu}\circ\sigma_{\rho}^{-1})(V)-
\mu\frac{\partial}{\partial\mu}S_{\mu}(V)&= \rho\frac{d}{d\rho}\vert_{\rho=1}(\sigma_{\rho}\circ S_{\rho^{-1}\mu}\circ\sigma_{\rho}^{-1})(V)\\
&=
S_{\mu}^{(1)}(V)B_{\mu}(V) 
\end{align*}
for $V\in\mathcal{F}_{loc}$.
Again, $B^{(n)}_{\mu}(V)$ is obtained directly by omitting the contributions  
coming from the inhomogeneous scaling of subdiagrams.

In terms of the power series expansion of $S$ with respect to $V$, the $\mu$-derivative of the 
$n$th order $S_{\mu}^{(n)}$ can be computed by using
$\mu\frac{\partial}{\partial\mu}\alpha_{H^\mu}=2\hbar\,\Gamma_v\circ \alpha_{H^\mu}$ (\ref{def(alpha_H)},\ref{def(v)}) and 
$\frac{\delta S^{(n)}_\mu}{\delta\varphi}=0$:
\begin{eqnarray}
\mu\frac{\partial}{\partial\mu}S_{\mu}^{(n)}&=&\mu\frac{\partial}{\partial\mu}\,\alpha_{H^\mu}\circ S^{(n)}\circ (\alpha_{H^\mu}^{-1})^{\otimes n}\notag\\
&=&2\hbar\, \Bigl(\Gamma_v\circ  S_\mu^{(n)}- S_\mu^{(n)}\circ\sum(\mathrm{id}\otimes...\otimes\Gamma_v\otimes...\otimes\mathrm{id})\Bigr)\notag\\
&=&2\hbar\,\, S_{\mu}^{(n)}\circ  \sum_{i\ne j}\Gamma_v^{ij}	
\end{eqnarray}
with $\Gamma_v^{ij}\doteq\frac12\int dxdy\, v(x,y)\frac{\delta^2}{\delta\varphi_i(x)\delta\varphi_j(y)}$ as a 
functional differential operator on $\mathcal{F}(\Mk)^{\otimes n}$.

Let 
\[
B_\mu(V)=\sum_{k=2}^{\infty}\frac{1}{k!}\,B_\mu^{(k)}(V^{\otimes k}) \ ,\ V\in\mathcal{F}_\mathrm{loc}(\Mk)
\]
be the power series expansion of $B_{\mu}$. 
Due to $\frac{\delta\, Z(\rho)}{\delta\varphi}=0$ we have the commutation relation 
\[
\frac{1}{n!}\frac{\delta^n}{\delta\ph^n}\circ B_\mu(V)=\sum_{k=1}^\infty\sum_{n_1+\ldots +n_k=n}\frac{1}{k!}\,
B_{\mu}^{(k)}\circ \left(\frac{1}{n_1!}\frac{\delta^{n_1}}{\delta\ph^{n_1}}\otimes\cdots\otimes\frac{1}{n_k!}\frac{\delta^{n_k}}{\delta\ph^{n_k}}\right)(V^{\otimes k})\ .
\] 
The Taylor expansion of $B_\mu(V)$ around $\ph=0$ is therefore determined by the values of $B_{\mu}^{(k)}(V^{(n_1)}\otimes\cdots\otimes V^{(n_k)})$ at $\ph=0$.  
To obtain these values we have to compute the violation of homogeneous scaling of the 
corresponding renormalized time ordered products at $\ph=0$. Mostly this is done by using 
momentum space techniques. A rigorous computation in our framework is, however, easier in configuration space which also has the advantage to simplify the extension to curved spacetimes. Our method is related to differential renormalization 
\cite{FJL}. Further useful renormalization procedures in $x$-space are dimensional regularization \cite{BG}, 
different kinds of analytic renormalization \cite{Guett,HYM}
and in case of 2-point functions a method relying on the K\"allen-Lehmann representation \cite{DFret}.

In order to simplify the formulas that will appear in the next two subsections, we use the notations $\phi^n \doteq \varphi^n/n!$, $n\in\NN$,  and $(\partial\phi)^2\doteq \partial_\nu\varphi\partial^{\nu}\varphi/2$.
\subsection{$\beta$-function for the $\varphi^3$-interaction in 6 dimensions}
As a first example we discuss the $\phi^3$ interaction in 6 dimensions. Let 
\[
\mathscr{L}_\mu(f)=\frac{ig}{\hbar}\al_{H^{\mu}}^{-1}\int dx\, f(x)\ {\phi(x)^3}
\] 
be a generalized Lagrangian with a coupling constant $g\in\R$ . The orbit under the renormalization group is contained in the subspace generated by $\phi^3$, $\phi^2$, $(\partial\phi)^2$, $\phi$ and 1. Since we are interested only in equivalence classes of Lagrangians we may ignore the constant terms. Moreover, we may also ignore the linear terms, since they do not influence the action of the renormalization group on the other terms.  
In second order in the Lagrangian the only contribution comes from $\phi^3$ and yields a term of second order in $\phi$. This contribution is therefore determined 
by 
\begin{eqnarray*}
B_\mu^{(2)}\left({\phi^3(x_1)}\otimes{\phi^3(x_2)}\right)& = & \,\left.B_\mu^{(2)}\left({\phi^2(x_1)}\otimes{\phi^2(x_2)}\right)\right\vert_{\phi=0}\,\,\phi(x_1)\phi(x_2)\ ,\\
\left.B_\mu^{(2)}\left({\phi^2(x_1)}\otimes{\phi^2(x_2)}\right)\right\vert_{\phi=0}& = & \rho\frac{d}{d\rho}\vert_{\rho=1}\rho^8 
t^{\rho^{-1}\mu}_{\rho^{-1} m}(\phi^2,\phi^2)(\rho (x_1 -x_2))\ ,\end{eqnarray*}
where $t^{\mu}_m(\phi^2,\phi^2)$ is an extension of $(\hbar H_{F,m}^{\mu})^2/2$ (to be considered as a distribution on $\mathcal{D}(\Mk^2\setminus \Delta_2))$ 
to an everywhere defined distribution with the same scaling degree. Moreover, exhibiting the dependence on the mass, we write 
$$(H_{F,m}^{\mu})^2/2=D_F^2/2+m^2 D_F\cdot\frac{\partial}{\partial m^2}H_{F,m}^{\mu}(m=0)+R^{\mu}_m$$ where the scaling degree of $R^{\mu}_m$ is smaller 
than the dimension, such that there is a unique extension with the same scaling degree. Since $\rho^8 R^{\rho^{-1}\mu}_{\rho^{-1}m}(\rho x)$ (with $x\equiv x_1 -x_2$) is independent from $\rho$, only the two first terms contribute to $B_{\mu}$.

The extension $t^{\mu}_m$ can be written in the form
\[
t^{\mu}_m=t_0+m^2t_1+R^{\mu}_m
\]     
where $t_0,t_1$ are extensions of $\hbar^2 D_F^2/2$ and $\hbar^2 D_F\cdot\frac{\partial}{\partial m^2}H^{\mu}_F(m=0)$, respectively, which scale almost homogeneously.

From Appendix A (\ref{Feynmanprop-6}), we get $D_F(x)=\frac{1}{4\pi^3(x^2-i\epsilon)^2}$ and 
$\frac{\partial}{\partial m^2}H^{\mu}_F(m=0)=\frac{1}{2^4\pi^3(x^2-i\epsilon)}$. We can now use the general results of Appendix C and conclude
\[
\rho\frac{d}{d\rho}|_{\rho=1}(\rho^8 t_0(\rho x)+m^2\rho^6 t_1(\rho x))|_{\rho=1}
=\hbar^2(a_0\square\delta(x)+a_1m^2\delta(x)) 
\]
where  $a_0=\frac{i}{2^7\cdot 3\pi^3}$ and $a_1=\frac{i}{2^6\pi^3}$. 

We now turn to terms of 3rd order in the Lagrangian. There is no term of second order in $\phi$, 
and there is exactly one term in third order in $\phi$, corresponding to the triangle diagram. We have to calculate
\[
B^{(3)}_\mu\left(\otimes_{j=1}^3{\phi^3(x_j)}\right)=
\,\left.B^{(3)}_\mu\left(\otimes_{j=1}^3 {\phi^2(x_j)}\right)\right\vert_{\phi=0}\,\,\phi(x_1)\phi(x_2)\phi(x_3) \ .
\] 
Again the $\mu$-dependent part is regular and can thus be absorbed in the $\rho$ derivative. We obtain 
\[
B^{(3)}_\mu\left(\otimes_{j=1}^3{\phi^2(x_j)}\right)\vert_{\phi=0}=
\rho\frac{d}{d\rho}|_{\rho=1}\rho^{12} t^{\rho^{-1}\mu}_{\rho^{-1}m}(\phi^2,\phi^2,\phi^2)(\rho x,\rho y) 
\]
where $x\equiv x_1-x_2\,,\, y\equiv x_2-x_3$.
$t_m^{\mu}(\phi^2,\phi^2,\phi^2)(x,y)$ is an almost homogeneous extension of $\hbar^3 H_F^{\mu}(x)H_F^{\mu}(y)H_F^{\mu}(x+y)$. In the Taylor expansion 
with respect to the mass only the leading term has a nonunique extension which can give rise to a contribution to $B_\mu$. Hence, 
we have to determine the distribution
\[
B^{(3)}_\mu\left(\otimes_{j=1}^3{\phi^2(x_j)}\right)\vert_{\phi=0}=\hbar^3(\partial_{x,\mu} x^\mu+\partial_{y,\mu}y^\mu)D_F(x)D_F(y)D_F(x+y)\ . 
\]
As before this is the divergence of a distribution with scaling degree less than the dimension of space. 
It must have the form $\hbar^3 a_2\delta(x)\delta(y)$ with $a_2\in\C$. 

To compute this number, we use the explicit form of the Feynman propagator $D_F$ and the method of Feynman parameters:
\[
\frac{1}{b_1b_2...b_6}=5!\int_0^\infty\frac{dz_1dz_2...dz_6\,\delta(1-(z_1+z_2+...+z_6))}{(b_1z_1+b_2z_2+...+b_6z_6)^6}\ .
\]
With that we find
\[
a_2\delta(z)=\lim_{\epsilon\downarrow 0}(4\pi^3)^{-3}5! 
\int_{\alpha,\beta,\gamma>0,\alpha+\beta+\gamma=1}d\alpha d\beta\, \alpha\beta\gamma 
\,\frac{\partial}{\partial z^j}z^j\,(\langle z,Gz\rangle-i\epsilon)^{-6}
\]  
where $z=(x,y)$ and $\langle z,Gz\rangle=\alpha x^2+\beta y^2+\gamma (x+y)^2$.

Up to a linear coordinate transformation on $\R^{12}$ which brings the quadratic form $G$ into the standard form on $\R^{2,10}$, the distribution $(\langle z,Gz\rangle-i\epsilon)^{-6}$ is of the form treated in Appendix C. We conclude that
\[
\frac{\partial}{\partial z^j}z^j\,(\langle z,Gz\rangle-i\epsilon)^{-6}=|S^{11}|\,|\det G|^{-\frac12}\delta(z)
\]
Thus we find
\[a_2=(4\pi^3)^{-3}5!\cdot\frac{1}{10}\frac{\pi^6}{6}I= \frac{1}{2^5\pi^3}I\]
with the integral over the Feynman parameters
\[I=\int_{\alpha,\beta,\gamma>0,\alpha+\beta+\gamma=1}d\alpha d\beta\, \alpha\beta\gamma\,|\det{G}|^{-\frac12}\ .\]
The determinant of $G$ is $(\alpha\beta+\beta\gamma+\gamma\alpha)^6$. To compute this integral we substitute $\alpha=\lambda\kappa$, $\beta=(1-\lambda)\kappa$ with $\lambda,\kappa\in(0,1)$.
We find
\begin{multline*}
I=\int_0^1d\lambda\int_0^1d\kappa \, \frac{\lambda(1-\lambda) \kappa^3(1-\kappa)}{(\lambda(1-\lambda)\kappa^2+\kappa(1-\kappa))^3}\\
=\int_0^1d\lambda\int_0^1d\kappa\,\frac{\lambda(1-\lambda)(1-\kappa)}{(\lambda(1-\lambda)\kappa+(1-\kappa))^3}
\ . 
\end{multline*}
The integral over $\kappa$ turns out to be independent of $\lambda\in(0,1)$ and has the value $\frac12$. Thus we finally obtain
\[a_2=\frac{1}{2^6\pi^3} \ .\]
We arrive at the action of $B_{\mu}$ on the interaction classes up to third order
\[
[B_\mu\circ g{\phi^3}]=\left[-\hbar^2\frac{ig^2}{3\cdot 2^7\pi^3}{(\partial\phi)^2}+
\hbar^2\,m^2\,\frac{ig^2}{2^6\pi^3}{\phi^2}+\hbar^3\frac{g^3}{2^6\pi^3}{\phi^3}\right] \ .
\]
Using the formula (\ref{def(beta)}) for the $\beta$-function and exhibiting the factor $\frac{i}{\hbar}$ in the interaction, we get in lowest 
nontrivial order (rewriting everything in terms of the original notation for fields)
\[
\frac{\hbar}{i}\beta\left(\frac{i}{\hbar}g\frac{[(\ph^3)_\mu]}{3!}\right)= -\frac{3\,\hbar g^3}{2^8\pi^3}\frac{[(\ph^3)_\mu]}{3!}
\]
from which we read off the coupling constant renormalization. Our result agrees with formula (3.4.64)
in \cite{Muta}; the negative sign exhibits asymptotic freedom. 
\subsection{Renormalization group flow for the $\varphi^4$-interaction in 4 dimensions}
As another example we study the $\phi^4$ interaction in 4 dimensions. We may restrict ourselves to renormalizations 
which respect the symmetry $\phi\to -\phi$. The orbit in the renormalization group is contained in the subspace 
generated by $\phi^4,\phi^2,(\partial\phi)^2$ and 1, where the constant terms can again be ignored. 
We are going to determine the renormalization goup flow {\it completely},
i.e.~we will compute $B_{\mu}(\frac{i}{\hbar}(g\,[\phi^4]+am^2\,[\phi^2]+b[(\partial\phi)^2]))$.

In second order in the interaction the
following terms occur
\begin{eqnarray*}
B_{\mu}^{(2)}\left({\phi(x)^4}\otimes{\phi(y)^4}\right)&=&
\left.B_{\mu}^{(2)}\left({\phi(x)^3}\otimes{\phi(y)^3}\right)\right|_{\phi=0}\phi(x)\phi(y)\\
&&\qquad+\left .B_{\mu}^{(2)}\left({\phi(x)^2}\otimes{\phi(y)^2}\right)\right|_{\phi=0}{\phi(x)^2}{\phi(y)^2} \ ,
\end{eqnarray*}
\[
B_{\mu}^{(2)}\left({\phi(x)^4}\otimes{\phi(y)^2}\right)=
\left.B_{\mu}^{(2)}\left({\phi(x)^2}\otimes{\phi(y)^2}\right)\right|_{\phi=0}{\phi(x)^2} \ ,
\]
\[
B_{\mu}^{(2)}\left({\phi(x)^4}\otimes{(\partial\phi(y))^2}\right)=
\left.B_{\mu}^{(2)}\left({\phi(x)^2}\otimes{(\partial\phi(y))^2}\right)\right|_{\phi=0}{\phi(x)^2} \ .
\]
The computation of the term $\left.B_{\mu}^{(2)}\left({\phi(x)^2}\otimes{\phi(y)^2}\right)\right|_{\phi=0}$ 
proceeds as in the case of $\phi^3_6$ (taking now the explicit expression for $D_F$ from (\ref{Feynmanprop-4})) and yields
\begin{equation}
\left.B_{\mu}^{(2)}\left({\phi(x)^2}\otimes{\phi(y)^2}\right)\right|_{\phi=0}=
\frac{-i\hbar^2}{2^4\pi^2}\delta(x-y) \ .\label{B-fisch}
\end{equation}
The term $\left.B_{\mu}^{(2)}\left({\phi(x)^2}\otimes{(\partial\phi(y))^2}\right)\right|_{\phi=0}$
is of the form
\[
\left.B_{\mu}^{(2)}\left({\phi(x)^2}\otimes{(\partial\phi(y))^2}\right)\right|_{\phi=0}=
\hbar^2\left(a_1\square\delta(x-y)+b_1m^2\delta(x-y)\right) \ .
\]
Inserting this form into the expression for $B_{\mu}^{(2)}\left({\phi(x)^4}\otimes{(\partial\phi(y))^2}\right)$, one observes that the term proportional to $\square\delta$ produces a total derivative in the interaction Lagrangian and may therefore be ignored. The term of order $m^2$ arises from
\[
b_1\delta(x)=\partial_{\nu}\left(x^{\nu}(\partial_{\lambda}D_F)(x)\partial^{\lambda}\frac{\partial}{\partial m^2}H_{F,\mu}(m^2=0)(x)\right) \ .
\] 
In $d=4$ dimensions we have 
\[
\partial^{\lambda}\frac{\partial}{\partial m^2}H_{F,\mu}(m^2=0)(x)=\frac{1}{2^3\pi^2}\frac{x^{\lambda}}{x^2-i\eps}
\]
and 
\[
\partial_{\lambda}D_F(x)=\frac{1}{2\pi^2}\frac{x_{\lambda}}{(x^2-i\eps)^2} 
\]
(both formulas can be read off from (\ref{Feynmanprop-4})), thus
\[
\partial_{\lambda}D_F(x)\partial^{\lambda}\frac{\partial}{\partial m^2}H_{F,\mu}(m^2=0)(x)=\frac{1}{2^4\pi^4}\frac{1}{(x^2-i\eps)^2}
\]
and we are back to the case treated in Appendix C. We obtain
\[
b_1=\frac{-i}{2^3\pi^2} \ .
\]
The most interesting case is the contribution of $\left.B_{\mu}^{(2)}\left({\phi(x)^3}\otimes{\phi(y)^3}\right)\right|_{\phi=0}$. We have
\[
\left.B_{\mu}^{(2)}\left({\phi(x)^3}\otimes{\phi(y)^3}\right)\right|_{\phi=0}=\rho\frac{d}{d\rho}t_{\rho^{-1}m}^{\rho^{-1}\mu}(\phi^3,\phi^3)(\rho(x-y)) 
\]
where $t_m^{\mu}(\phi^3,\phi^3)$ is an extension of
$\frac{(\hbar\, H_{F,m}^{\mu})^3}{3!}$ whose $\mu$-dependence is given in terms of $t_m^{\mu}(\phi^2,\phi^2)$,
\[
\mu\frac{\partial}{\partial\mu}t_m^{\mu}(\phi^3,\phi^3)(x,y)=2\hbar\,v(x-y)t_m^{\mu}(\phi^2,\phi^2)(x,y)
\]
 
We compute for $x\ne 0$
\[
\frac{(H^{\mu}_{F,m})^3(x)}{3!}=\frac{D_{F}^3(x)}{3!}+ m^2\frac{D_F^2(x)}{2}\frac{\partial}{\partial m^2}H_{F,m}^{\mu}(m^2=0)(x)+R_m^{\mu}(x)
\]
where $R$ is regular and scales homogeneously. The scale dependence of the extensions $t_0$  of $D_{F}^3/3!$ is of the form treated in Appendix C and is given by
\[
c(t_0)\delta=\frac{i}{2^9\cdot 3 \pi^4}\square\delta
\]

The next leading term in $m^2$ is still singular. Its extension $m^2 t_1^{\mu}$ is $\mu$-dependent, where the $\mu$-dependence as shown above is determined from the $m^2=0$ contribution $t_2$ of $t_m^{\mu}(\phi^2,\phi^2)$,
\[
\mu\frac{\partial}{\partial\mu}t_1^{\mu}=2\hbar\,\frac{\partial\,v}{\partial m^2}\vert_{m^2=0}\,t_2 \ .
\]
We now compute the partial $\rho$-derivative (with fixed $\mu$) of $t_1^{\mu}$.
This is the divergence of $x^{\nu} t_1^{\mu}$ which is the unique extension of 
$\hbar^3\,x^{\nu}\frac{D_F^2}{2}\left(\frac{\partial}{\partial m^2}H_{F,m}^{\mu}(m^2=0)(x)\right)$. 

From Appendix A (\ref{Feynmanprop-4}) we get
\[\frac{\partial}{\partial m^2}H_{F,m}^{\mu}(m^2=0)(x)=\frac{1}{2^4\pi^2}\log(-\mu^2(x^2-i\eps))+F(0)\ ,\quad\> F(0)=\frac{2\,C-1}{4\pi}.
\]

We use
\[\partial^{\nu}\frac{\log(-\mu^2(x^2-i\epsilon))}{x^2-i\epsilon}=2x^{\nu}\frac{1-\log(-\mu^2(x^2-i\epsilon))}{(x^2-i\epsilon)^2}\]
and the fact that
\[
-\square \frac{\log(-\mu^2(x^2-i\epsilon)}{x^2-i\epsilon}
\]
is an extension of $4\cdot(4\pi^2)^2D_F(x)^2$. $t_2$ is an extension of $\frac{\hbar^2}{2}\,D_F^2$ which we may parametrize by a real parameter $\tau$,
\[
t_2(x)=-\frac{\hbar^2}{2^7\,\pi^4}\,\square\frac{\log(-\tau^2(x^2-i\epsilon))}{x^2-i\epsilon}\ .
\] 
We also use $\frac{\partial\,v}{\partial m^2}\vert_{m^2=0}=\frac{1}{2^4\,\pi^2}$ which results from (\ref{def(v)},\ref{Feynmanprop-4}). In 
$(\partial_{\nu}x^{\nu}-\mu\frac{\partial}{\partial\mu}) t_1^{\mu}$
the two extensions cancel up to a multiple of a $\delta$-function, and we finally obtain
\begin{eqnarray*}
B_{\mu}^{(2)}\left({\phi(x)^3}\otimes{\phi(y)^3}\right)|_{\phi=0}&=& i\hbar^3\,\Bigl(\frac{1}{2^9\cdot 3\,\pi^4}\,\square\delta(x-y)\Bigr.\\
&-&\Bigl. \left(\frac{1}{2^8\,\pi^4}\,(1+\log\frac{\mu^2}{\tau^2})+\frac{F(0)}{2^4\,\pi^2}\right)\,m^2\,\delta(x-y) \Bigr)\ .
\end{eqnarray*}
We conclude that the action of $B_{\mu}$ on interaction classes is to second order given by
\begin{align*}
&\left[B^{(2)}_{\mu}\circ\left(\left(g\,{\phi^4}+a\,m^2\,{\phi^2}+b\,{(\partial\phi)^2}\right)^{\otimes 2}\right)\right]\\
&\quad=-\frac{i\,\hbar^2\,g^2\,3}{2^4\,\pi^2}\,{[\phi^4]}-\frac{i\,\hbar^3\,g^2}{3\cdot 2^9\,\pi^4}\,{[(\partial\phi)^2]}\\
&\qquad\quad -i\left(\frac{\hbar^2\,a\,g}{2^4\,\pi^2}+\frac{\hbar^2\,b\,g}{2^3\,\pi^2}+\hbar^3\,g^2\,
\left(\frac{1}{2^8\,\pi^4}\,(1+\log\frac{\mu^2}{\tau^2})+\frac{F(0)}{2^4\,\pi^2}\right)\right)\,m^2\,{[\phi^2]}\ .
\end{align*}

In the argument of the $\beta$-function we omit the $[(\partial\phi)^2]$- and $m^2\,[\phi^2]$-term, 
since $\beta$ is defined after having absorbed these terms by the wave function and mass renormalization. Using
$[\mu\frac{d}{d\mu}(\phi^4)_\mu]=-2\hbar\,[\Gamma_v((\phi^4)_\mu)]=-2\hbar\,[\Gamma_v\phi^4]=-\frac{\hbar}{16\pi^2}m^2[\phi^2]$
(see (\ref{def(v)},\ref{Feynmanprop-4})) we obtain 
\begin{equation*}
\frac{\hbar}{i}\beta\left(\frac{ig}{\hbar}{[(\phi^4)_\mu]}\right)
=\frac{g^2\,\hbar\,3}{2^4\,\pi^2}\,{[(\phi^4)_\mu]}-\frac{g\,\hbar}{2^4\,\pi^2}
\,m^2\,{[\phi^2]}+...\ \ .
\end{equation*}
where the dots stand for terms of third or higher orders in $g$.
From
\[
\frac{\dot{\gamma}_\mu}{2}\,\langle {g}(\phi^4)_\mu^{(1)},\phi\rangle=\frac{\dot{\gamma}_\mu\,g}{2}
\,\left(4\,[(\phi^4)_\mu] +2\hbar\,[\Gamma_{H^\mu}\phi^4]\right)
\]
we get a further $[\phi^2]$-term, which is of order $g^3$. Note that if we apply the definition (\ref{def(beta)}) of the $\beta$-function
to an interaction which is $\sim [\phi^4]$, we get a result which is  $\sim [\phi^4]$.

\section{Conclusions}
Quantum field theory on generic curved backgrounds requires a revision of the standard methods of perturbative quantum field theory; in particular the dependence on the choice of a distinguished state (the ''vacuum'') would introduce an unwanted nonlocal feature and has to be avoided in order to remain in agreement with the principle of general covariance. This program could be successfully performed \cite{BF,HW1,HW2} by the use of the following ingredients:
\begin{itemize}
\item Algebras of observables are directly constructed without a detour via expectation values in distinguished states.
\item Techniques of microlocal analysis replace momentum space techniques.
\item Dimension full parameters (e.g. the mass) are treated as expansion parameters. 
\end{itemize}  
In the present paper we analyzed the consequences of this approach for standard quantum field theory and compared our formalism with other formalisms, in particular with the method of renormalization by the flow equation \cite{Polchinski,Salmhofer2}. 

The independence of the formalism on the choice of the mass required the replacement of the vacuum two point function by a so-called Hadamard function which differs from it by a smooth function of position. In even dimensions the Hadamard function depends on an additional mass parameter, whereas in odd dimensions it is unique. In particular it also exists in 2 spacetime dimensions.

The (off shell) observables were represented as functionals on a space of smooth field configurations. Several algebraic structures on the space of observables were introduced: The (classical) product by pointwise multiplication, the (quantum) product as a $*$-product involving the Hadamard function and the time ordered product by which the interacting theory was constructed inside the (off shell) algebra of the free theory. Since all these algebraic structures involve only the functional derivatives of the observables with respect to the field, the formalism is not restricted to polynomial functionals, as long as one remains in the realm of formal power series.

Renormalization in this framework consists in an extension of the time ordered product to the more singular local functionals for which we gave a new intrinsic definition. 
The extension can be done by the methods of the St\"uckelberg-Bogoliubov-Epstein-Glaser approach \cite{EG}, but is not unique. The nonuniqueness is described by a group of transformations on the space of local functionals, which is the renormalization group in the sense of St\"uckelberg and Petermann \cite{SP}.  

A comparison with the method of flow equations can be reached by approximating the Hadamard function by a more regular family of functions labeled by a regularization parameter $\Lambda$. The effective potential as a function of $\Lambda$ can be defined and is shown to satisfy the flow equation. As a consequence of the existence of extensions in the sense of the Epstein-Glaser method it immediately follows that there exist appropriate counter terms which guarantee the convergence of the effective potentials. The somewhat cumbersome estimates in the flow equation method which up to now complicated a generalization of the method to generic Lorentzian spacetimes (see \cite{Kopper} for Minkowski space) are not required. Moreover, the choice of regularization is completely arbitrary. One may, in particular, make a specific choice of a parameter such that the effective potential becomes a 
meromorphic function of it, with a possible pole at the removal of the cutoff. This might lead to explicit choices of extensions as e.g. minimal subtraction for dimensional regularization. 

Of particular interest is the behaviour of the theory under scaling. We found a purely algebraic analogue of the Callan-Symanzik equation, much in the spirit of the Buchholz-Verch approach to an intrinsic renormalization group within axiomatic algebraic quantum field theory \cite{BV}, but technically quite different.
As a matter of fact our analysis is completely free of any dependence on the mass of the theory. In standard perturbation theory a corresponding observation was made in the context of dimensional regularization by Collins \cite{Collins}, but there the reasons for this effect remained mysterious.

The Epstein-Glaser method relies on coupling constants which are test functions with compact support. This avoids all infrared problems during the construction but leads to the problem of the adiabatic limit in which the test functions approach constant functions.
In general, all infrared problems now could reappear. But exploiting the method of the algebraic adiabatic limit \cite{BF} the construction of the algebra of observables can be done directly, and in this paper we show how this method can also be used to define the renormalization group and the beta function in the adiabatic limit. In particular, the beta function turns out to be state independent (and, as noted previously \cite{HW3}, independent of the topological features of spacetime). 

We compared our findings with the standard definition of the beta function, present explicit calculations within our framework for $\ph^3_6$ and $\ph^4_4$ and obtain agreement with the literature.

We did not yet enter a detailed comparison of our method with the BPHZ method and its modern version in terms of the Connes-Kreimer theory \cite{Connes1,Connes2}. In spite of the fact that both methods are known to be equivalent, the involved combinatorics of renormalization is quite different, and understanding the relations requires additional work. We hope to return to this problem in a future publication.     

\begin{appendix}
\section{Determination of the Hadamard function $H$}\label{Hadamard}
The Wightman 2-point function $\Delta^+$ in $d\ge2$ dimensions for $m^2>0$ can be expressed
in terms of modified Bessel functions
\begin{equation}
\Delta^+_m(x)=(2\pi)^{-\frac{d}{2}}\, m^{d/2-1}|x^2|^{\frac{2-d}{4}}K_{d/2-1}(\sqrt{m^2|x^2|}) \label{Delta^+}
\end{equation}
for spacelike arguments $x$.
To obtain the Hadamard function, we have to add a smooth Lorentz invariant solution of the Klein-Gordon equation such that the sum is a smooth function of 
$m^2$. Each Lorentz invariant solution $F$ is for spacelike arguments of the 
form $F(x)=|x^2|^{\frac{2-d}{4}}G(\sqrt{m^2|x^2|})$ where $G$ satisfies the modified Bessel equation of order $d/2-1$,
\begin{equation}
G''(y)+\frac{1}{y}G'(y)-\left(1+\frac{(\frac{d}{2}-1)^2}{y^2}\right)G(y)=0\ .\label{Bessel}
\end{equation} 
The solutions of this differential equation are linear combinations either of $I_{\frac{d}{2}-1}(y)$ and
$I_{1-\frac{d}{2}}(y)$ (if $d$ is odd) or of $I_{\frac{d}{2}-1}(y)$ and $K_{\frac{d}{2}-1}(y)$ (if $d$ is even).
In both cases smoothness in $x$ at $x=0$ then implies that $F$ is a multiple of 
$|x^2|^{\frac{2-d}{4}}I_{\frac{d}{2}-1}(\sqrt{m^2|x^2|})$.
Modified Bessel functions of noninteger order $\nu$ satisfy the relation
\begin{equation}
K_{\nu}=\frac{\pi}{2\sin\nu\pi}(I_{-\nu}-I_{\nu}) \ .\label{K=I-I}
\end{equation} 
With that, in odd dimensions $d$, we obtain the unique Hadamard function
\begin{equation}\label{Hodd}
H_m(x)=\frac{1}{4\sin(\frac{d}{2}-1)\pi}\,(2\pi)^{\frac{2-d}{2}}\,m^{d/2-1}|x^2|^{\frac{2-d}{4}}
I_{1-d/2}(\sqrt{m^2|x^2|})
\end{equation}
for $ x^2<0$. Namely, since $I_{\nu}(y)$ is of the form $y^{\nu}F(y^2)$ with an entire analytic 
function $F$, $H_m$ is a smooth function of $m^2$.

In even dimension $d$ we introduce a parameter $\mu$ with the dimension of a 
mass and consider the family of functions
\begin{eqnarray}\label{Hoddz}
H_m^{\mu,z}(x)&=&\frac{\mu^{-z}}{4\sin(\frac{d+z}{2}-1)\pi}\,(2\pi)^{\frac{2-(d+z)}{2}}\notag\\
&\cdot& m^{(d+z)/2-1}|x^2|^{\frac{2-(d+z)}{4}}
I_{1-(d+z)/2}(\sqrt{m^2|x^2|})
\end{eqnarray} 
where $x^2<0$ is assumed. The factor $\mu^{-z}$ is needed in order that $H_m^{\mu,z}(x)$ has the dimension $[m^{d-2}]$ (as required for a
2-point function in $d$ dimensions). 
For the same reasons as for $H_m$ (\ref{Hodd}), this 'dimensionally regularized Hadamard function'
is smooth in $m^2$ and differs from $\Delta^{+\,(d+z)}_m(x)$ (i.e.~(\ref{Delta^+}) with $d$ replaced by $d+z$) by a smooth 
Lorentz invariant function which 'solves the Klein-Gordon equation in $(d+z)$-dimensions' (i.e.~it solves (\ref{Bessel}) for $(d+z)$).
By using (\ref{K=I-I}) we express $I_{1-(d+z)/2}$ in terms of $I_{(d+z)/2-1}$ and $K_{(d+z)/2-1}$. The limit $z\to 0$
exists for the $K_{(d+z)/2-1}$-term and gives $\Delta^+_m(x)$. But the $I_{(d+z)/2-1}$-term 
is meromorphic in $z$ with a simple pole at $z=0$. Since the residuum is smooth in $m^2$ and, with respect to $x$, a smooth 
Lorentz invariant solution of the Klein-Gordon equation\footnote{Note that the residuum of $H_m^{\mu,z}$ at $z=0$
is $\sim |x^2|^{\frac{2-d}{4}}I_{1-d/2}(\sqrt{m^2|x^2|})$ and, hence, not smooth in $x$.}, we may subtract the pole term. Taking then the limit $z\to 0$
we get the Hadamard function
\begin{equation}\label{Heven}
H^{\mu}_m(x)=\Delta^+_m(x)+\frac{(-1)^{\frac{d}{2}}}{2(2\pi)^{\frac{d}{2}}}\,
\log{\frac{\mu^2}{m^2}}\, m^{d/2-1}|x^2|^{\frac{2-d}{4}}I_{d/2-1}(\sqrt{m^2|x^2|})
\end{equation}
for $ x^2<0$, which is unique up to the choice of the parameter $\mu$.

The values of $H$ (any dimension) for arbitrary $x$ and $m^2\in\R$ are obtained by replacing $|x^2|$ by $-(x^2-ix^0 0)$ and then by symmetrizing w.r.t. $x$.

The corresponding Feynman propagator $H^\mu_F$ is defined by
\[
H^\mu_F(x)=\theta(x^0)H^\mu(x)+\theta(-x^0)H^\mu(-x)\ .
\]
We conclude that the explicit exression for $H^\mu_F$ is obtained from (\ref{Hodd}) and (\ref{Heven}), respectively,
by replacing $|x^2|$ by $-(x^2-i 0)$: e.g.~in even dimensions it results
\[
H^\mu_F(x)=\frac{m^{d-2}}{(2\pi)^{\frac{d}{2}}\, y^{\frac{d}{2}-1}}\,\left(K_{\frac{d}{2}-1}(y)+(-1)^{\frac{d}{2}}\,
\log{\frac{\mu}{m}}\,\,I_{\frac{d}{2}-1}(y)\right)\ ,
\]
where $y\doteq\sqrt{-m^2(x^2-i0)}$. In the main text the following 
formulas are used for explicit computations in $d=4$ and $d=6$ dimensions:
\begin{eqnarray}
H_F^{\mu\,(4)}(x)&=&\frac{-1}{4\pi^2(x^2-i0)}\notag\\
&+&{\rm log}(-\mu^2(x^2-i0))\, m^2 f(m^2x^2)+m^2 F(m^2x^2)\ ,\label{Feynmanprop-4}\\
H_F^{\mu\,(6)}(x)&=&\frac{1}{4\pi^3(x^2-i0)^2}+\frac{m^2 f(m^2x^2)}{\pi\,(x^2-i0)}\notag\\
&+&\frac{1}{\pi}\left({\rm log}(-\mu^2(x^2-i0))\, m^4 f'(m^2x^2)+m^4 F'(m^2x^2)\right)\ ,\label{Feynmanprop-6}
\end{eqnarray}
where $f$ and $F$ are realvalued analytic functions.
$f$ and $f'$ can be expressed in terms of the Bessel functions $J_1$ and $J_2$, respectively, namely
\begin{equation}
f(z)\doteq \frac{1}{8\pi^2\sqrt{z}}\>J_1(\sqrt{z})\ ,\quad f(0)=\frac{1}{2^4\,\pi^2}\ ,
\quad f'(z)=\frac{-1}{16\,\pi^2\,z}\>J_2(\sqrt{z})\ ;\label{f}
\end{equation}
and $F$ is given by a power series
\begin{equation}
F(z)\doteq -\frac{1}{4\pi}\sum_{k=0}^\infty\{\psi(k+1)+\psi(k+2)\}
\frac{(-z/4)^k}{k!(k+1)!}\ , \quad F(0)=\frac{2\,C-1}{4\pi}\ ,
\end{equation}
where $C$ is Euler's constant and the Psi-function is related to the Gamma-function by
$\psi(x)\doteq\Gamma^\prime (x)\, /\, \Gamma (x)$.
\section{Additivity of $Z$}
\label{localityZ}

In this Appendix we derive the additivity relation  (\ref{Zloc1}) of the renormalization group transformation $Z$ under the assumption that 
$Z$ satisfies the simple additivity relation $Z(A+C)=Z(A)+Z(C)$ for $\supp A\cap\supp C=\emptyset$.
We use the fact that $Z(A+\lambda B+C)$ is determined within perturbation theory by its derivatives with respect to $\lambda$ at $\lambda=0$.
We prove that the $n$th derivative on both sides coincide for all $n$. 

Let $A,B,C\in \A_{\sst{loc}}$ with $\supp A\cap\supp C=\emptyset$. From Lemma 3.2 we conclude that $B$ may be written as a sum of $N>n$ terms, 
$B=\sum_{i=1}^NB_i$ (where $B_i\in \A_{\sst{loc}}$)
such that all subsets $I\subset\{1,\ldots,N\}$ with at most $n$ elements admit a decomposition 
$I=I_1\cup I_2$, $I_1\cap I_2=\emptyset$ such that
$(\supp A\cup\bigcup_{i\in I_1}\supp B_i)\cap(\supp C\cup\bigcup_{j\in I_2}\supp B_j)=\emptyset$.

Let $B(\lambda)=\sum_{i=1}^N\lambda_iB_i$ for $\lambda=(\lambda_1,\ldots,\lambda_N)\in\R^N$.
We prove that for every multiindex $\alpha=(\alpha_1,\ldots,\alpha_N)$ with $|\alpha |=\sum_{i=1}^N\alpha_i\le n$ the derivative
\[\partial^{\alpha}_{\lambda}\left(Z(A+B(\lambda)+C)-Z(A+B(\lambda))+Z(B(\lambda))-Z(B(\lambda)+C)\right)\]
vanishes at $\lambda=0$. Let $I=\{i\in\{1,\ldots,N\}\, |\, \alpha_i\ne 0\}$. 
We choose a decomposition $I=I_1+I_2$ as described above. Let $B_{I_1}(\lambda)=\sum_{i\in I_1}\lambda_iB_i$ and  $B_{I_2}(\lambda)=\sum_{j\in I_2}\lambda_jB_j$. Then the derivative above does not change at $\lambda=0$ when we replace $B(\lambda)$ by $B_{I_1}(\lambda)+B_{I_2}(\lambda)$. Due to the assumed support properties we may now use the simple additivity relation and obtain the result that the derivative vanishes. This proves the claim. 
\section{Scaling violations of extensions of homogeneous\\ distributions}\label{Scaling}
The position space renormalization relies crucially on the Theorem on the Extension of Distributions. This theorem goes back 
to Epstein-Glaser \cite{EG} and Steinmann \cite{Stein} and was generalized to differentiable manifolds by Brunetti and 
Fredenhagen \cite{BF}. A refinement for almost homogeneous distributions was obtained by Hollands and Wald \cite{HW2}. 
(Cf. also \cite{GB,DFret}.)
\begin{teo}
Let $t_0\in\Dcal'(\R^d\setminus\{0\})$ such that $(\rho\frac{d}{d\rho})^k\rho^lt_0(\rho\,\,\cdot)=0$ 
for some $k\in\NN$, $l\in\R$. Then there exists an almost homogeneous distribution $t\in\Dcal'(\R^d)$ 
which coincides with $t_0$ outside of the origin. If $l<d$ or $l\not\in\ZZ$, $t$ is unique
and fulfills the scaling relation with the same power $k$ as $t_0$. If $l\in d+\NN_0$, $t$ satisfies the scaling condition
\[
(\rho\frac{d}{d\rho})^{k+1}\rho^lt_0(\rho\  \cdot)=0 \ .
\]
Moreover, $(\rho\frac{d}{d\rho})^k\rho^lt(\rho\  \cdot)|_{\rho=1}=c(t_0)\,\delta$, where $c(t_0)$ is a homogeneous differential operator of order $l-d$ which is independent of the choice of the extension $t$.
\end{teo}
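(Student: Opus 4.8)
The plan is to reduce everything to the classical extension theorem for distributions (\cite{EG,Stein,BF}, with the almost-homogeneous refinement of \cite{HW2}) together with an eigenvalue computation for the Euler operator. Write $E$ for the infinitesimal generator of the scaling action on distributions, $Et=\sum_i x_i\partial_i t$, so that $\rho\frac{d}{d\rho}(\rho^l t(\rho\,\cdot))=\rho^l[(E+l)t](\rho\,\cdot)$ and hence $(\rho\frac{d}{d\rho})^k\rho^l t(\rho\,\cdot)|_{\rho=1}=(E+l)^k t$. In this language the hypothesis on $t_0$ reads $(E+l)^k t_0=0$ on $\R^d\setminus\{0\}$, which in particular forces $t_0$ to have finite scaling degree, namely $\mathrm{sd}(t_0)=l$. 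First I would invoke the classical theorem to produce \emph{some} extension $t\in\Dcal'(\R^d)$ with $\mathrm{sd}(t)=l$; for $l<d$ this extension is already unique, while for $l\ge d$ it is unique up to addition of a distribution $\sum_{|\alpha|\le l-d}b_\alpha\partial^\alpha\delta$ supported at the origin.

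The next step is to pin down the scaling defect $u\doteq(E+l)^k t$. Because $(E+l)^k t_0=0$ away from the origin, $u$ is supported at $\{0\}$; because $E$ does not increase the scaling degree (multiplication by $x_i$ lowers it, differentiation raises it, and the two effects cancel), $\mathrm{sd}(u)\le\mathrm{sd}(t)=l$. Hence $u=\sum_{|\alpha|\le l-d}c_\alpha\partial^\alpha\delta$, an empty sum — so $u=0$ — whenever $l<d$. The elementary identity $\partial^\alpha\delta(\rho x)=\rho^{-d-|\alpha|}\partial^\alpha\delta(x)$ shows that $\partial^\alpha\delta$ is homogeneous of degree $-d-|\alpha|$, so that $(E+l)\partial^\alpha\delta=(l-d-|\alpha|)\partial^\alpha\delta$; the $\partial^\alpha\delta$ are eigenvectors of $E+l$ with eigenvalue $l-d-|\alpha|$. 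Consequently, replacing $t$ by $t-\sum_{|\alpha|<l-d}(l-d-|\alpha|)^{-k}c_\alpha\partial^\alpha\delta$ is legitimate (the eigenvalues that occur are nonzero) and removes from $u$ precisely the terms with $|\alpha|\ne l-d$. After this normalization $u=\sum_{|\alpha|=l-d}c_\alpha\partial^\alpha\delta\doteq c(t_0)\,\delta$, where $c(t_0)\doteq\sum_{|\alpha|=l-d}c_\alpha\partial^\alpha$ is a homogeneous differential operator of order $l-d$ (vacuously zero when $l<d$).

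The three claims then drop out. If $l<d$ or $l\notin\ZZ$ there is no multi-index with $|\alpha|=l-d$, so $u=0$, i.e.\ $t$ is almost homogeneous with the \emph{same} power $k$, and $c(t_0)=0$; uniqueness follows since the difference of two almost homogeneous extensions is a finite sum $\sum b_\alpha\partial^\alpha\delta$ killed by some power of $E+l$, forcing all $b_\alpha=0$ by the eigenvalue computation. If $l\in d+\NN_0$ then $c(t_0)\delta$ is genuinely homogeneous, $(E+l)c(t_0)\delta=0$, whence $(E+l)^{k+1}t=(E+l)u=0$, i.e.\ $t$ scales almost homogeneously with power $k+1$. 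Finally, to see that $c(t_0)$ is independent of the chosen extension, let $t_1,t_2$ be two extensions normalized as above; then $t_1-t_2=\sum b_\alpha\partial^\alpha\delta$ and $(E+l)^k(t_1-t_2)=(c_1-c_2)(t_0)\delta$, and comparing eigen-components forces $b_\alpha=0$ for $|\alpha|\ne l-d$ and $c_1=c_2$ on the remaining terms (which incidentally also shows that two such extensions differ only by a homogeneous distribution).

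I expect the genuine work to sit in the middle paragraph: checking carefully that $E$ does not worsen the scaling degree, that the extension ambiguity is exactly the displayed span of the $\partial^\alpha\delta$, and that the three regimes $l<d$, $l\notin\ZZ$, $l\in d+\NN_0$ are handled without gaps (for instance the borderline case $l\in\ZZ$ with $l<d$). The remainder is the eigenvalue bookkeeping for $E+l$ acting on derivatives of $\delta$, which is purely algebraic.
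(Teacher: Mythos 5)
Your argument is correct, but it is organized quite differently from what the paper does. The paper's own proof is essentially two sentences: the whole first part (existence of an almost homogeneous extension, uniqueness for $l<d$ or $l\notin\ZZ$, and the increase of the power by one for $l\in d+\NN_0$) is delegated to the cited literature, in particular to the Hollands--Wald refinement of the Epstein--Glaser/Steinmann extension theorem, and only the last claim is argued, by exactly the observation you make at the end: two admissible extensions differ by derivatives of $\delta$ of order $l-d$, which are homogeneous of degree $-l$ and hence annihilated by $E+l$, so they cannot contribute to $c(t_0)$. You instead import only the classical scaling-degree extension theorem and \emph{derive} the almost homogeneous refinement yourself, by diagonalizing $E+l$ on the span of the $\partial^\alpha\delta$: the defect $(E+l)^k t$ is point-supported with scaling degree $\le l$, the subleading terms $|\alpha|<l-d$ sit in nonzero eigenspaces and can be removed by an explicit finite renormalization of $t$, and only the zero-eigenvalue terms $|\alpha|=l-d$ (present precisely when $l\in d+\NN_0$) survive, which both forces the extra power $k+1$ and identifies $(E+l)^k t=c(t_0)\delta$; your uniqueness and independence arguments then follow from the same eigenvalue bookkeeping and coincide in substance with the paper's one-line argument. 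What your route buys is a self-contained proof of the refinement (including an explanation of exactly why the logarithmic power increases only at integer $l-d\ge 0$) at the cost of having to invoke, as you note, the standard analytic inputs: existence of an extension with the same scaling degree, the structure theorem for distributions supported at a point, and the fact that $x^\mu$ and $\partial_\mu$ shift the scaling degree by $\mp 1$ so that $E$ does not increase it. Two small reading notes, neither a gap: uniqueness for $l\notin\ZZ$, $l>d$ is uniqueness \emph{among almost homogeneous extensions} (general scaling-degree-$l$ extensions remain ambiguous), which is indeed what your eigenvalue argument delivers; and in your final comparison the conclusion is $b_\alpha=0$ for $|\alpha|<l-d$, with the coefficients at $|\alpha|=l-d$ left free, exactly as in your parenthetical remark and in the paper's formulation.
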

\begin{proof} The proof of the first part of the theorem may be found in the mentioned literature, e.g. in \cite{HW2}. The second statement follows from the fact that different extensions differ by a $(l-d)$th derivative of the $\delta$-function which is homogeneous of degree $l$ and thus does not contribute to $c(t_0)$. 
\end{proof}
We now want to compute the scaling violations of extensions of homogeneous distributions for some typical examples (cf.~e.g.~\cite{GS}).
We first recall fundamental solutions of the Laplacian on the pseudo Riemannian spaces $\R^{d-s,s}$, $s<d$, 
$d$ even and $d>2$, with the metric $g=\mathrm{diagonal}(\underbrace{+1,\ldots,+1}_{d-s},\underbrace{-1,\ldots,-1}_{s})$. 
Let $x^2=\sum_{i=1}^{d-s}(x^i)^2-\sum_{i=d-s+1}^d (x^i)^2$ and 
$\square=\sum_{i=1}^{d-s}\frac{\partial^2}{\partial (x^i)^2}-\sum_{i=d-s+1}^d \frac{\partial^2}{\partial (x^i)^2}$. Then
\begin{lemma}
\[\square \frac{1}{(x^2-i\varepsilon)^{\frac{d}{2}-1}}=i^s(2-d)\,|\mathrm{S}^{d-1}|\,\delta(x)\]
where $|\mathrm{S}^{d-1}|$ is the volume of the unit sphere in $d$ dimensions.
\end{lemma}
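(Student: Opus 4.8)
The plan is to regularise with a parameter $\eps>0$, differentiate the resulting smooth function explicitly, observe that a ``bulk'' contribution cancels precisely at the exponent $1-\tfrac d2$, and then identify the remainder as a multiple of $\delta$ whose coefficient is pinned down by a one--dimensional integral after a Wick rotation.

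First I would use that for $\eps>0$ the function $x\mapsto(x^2-i\eps)^{1-d/2}$ is smooth on $\R^d$, that it converges to $(x^2-i0)^{1-d/2}$ in $\Dcal'(\R^d)$ as $\eps\downarrow0$, and that $\square$ is sequentially continuous on $\Dcal'$; hence it suffices to compute $\lim_{\eps\downarrow0}\square(x^2-i\eps)^{1-d/2}$. A direct computation with the chain and product rules (using $\square(x^2)=2d$ and $\partial_k(x^2)\,\partial^k(x^2)=4x^2$) gives, for an arbitrary exponent $\lambda$,
\[
\square(x^2-i\eps)^{\lambda}=2\lambda\bigl[(2\lambda+d-2)(x^2-i\eps)^{\lambda-1}+2(\lambda-1)\,i\eps\,(x^2-i\eps)^{\lambda-2}\bigr].
\]
At $\lambda=1-\tfrac d2$ the coefficient $2\lambda+d-2$ of the leading term vanishes identically, which leaves
\[
\square(x^2-i\eps)^{1-d/2}=d(d-2)\,i\eps\,(x^2-i\eps)^{-d/2-1},
\qquad\text{so}\qquad
\square(x^2-i0)^{1-d/2}=d(d-2)\lim_{\eps\downarrow0} i\eps\,(x^2-i\eps)^{-d/2-1}.
\]
This cancellation is exactly why $1-\tfrac d2$ is the distinguished exponent, and it is the step where the hypothesis $d>2$ is used.

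The limit $L:=\lim_{\eps\downarrow0}i\eps\,(x^2-i\eps)^{-d/2-1}$ is supported at $\{0\}$, since on $\R^d\setminus\{0\}$ one has $i\eps\,(x^2-i\eps)^{-d/2-1}\to0$; a rescaling $x\mapsto\lambda x$ shows $L$ is homogeneous of degree $-d$, and $L$ is manifestly $\mathrm{O}(d-s,s)$--invariant, so $L=c\,\delta$ for some $c\in\C$. To evaluate $c$ I would reinterpret the computation through meromorphic continuation in $\lambda$: the displayed identity holds as an equality of meromorphic $\Dcal'$--valued families, and at $\lambda=1-\tfrac d2$ the right--hand side is the product of a simple zero of the prefactor $2\lambda(2\lambda+d-2)$ with the simple pole of $(x^2-i0)^{\lambda-1}$. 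Evaluating this removable singularity, $\square(x^2-i0)^{1-d/2}$ equals the derivative of the prefactor at $\lambda=1-\tfrac d2$, namely $2(2-d)$, times the residue of $(x^2-i0)^{\lambda-1}$ there:
\[
\square(x^2-i0)^{1-d/2}=2(2-d)\,\mathrm{Res}_{\mu=-d/2}(x^2-i0)^{\mu}.
\]

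It remains to compute $\mathrm{Res}_{\mu=-d/2}(x^2-i0)^{\mu}$. The only poles of the meromorphic family $\mu\mapsto(x^2-i0)^{\mu}$ lie at $\mu\in\{-\tfrac d2-k:k\ge0\}$ and their residues are supported at the origin (standard; cf.~\cite{GS}), so the residue is computed from $\int_{|x|<1}(x^2-i0)^{\mu}\,d^dx$. I would evaluate this by Wick-rotating the $s$ time-like coordinates, $x^j\mapsto e^{i\vartheta}x^j$ for $j>d-s$ with $\vartheta$ increasing from $0$ to $\tfrac\pi2$: the $-i0$ prescription keeps the argument $|x_+|^2-e^{2i\vartheta}|x_-|^2-i0$ in the open lower half--plane throughout the rotation, so no branch cut of the principal power is crossed, while the Jacobian of the deformed contour contributes a factor $i^s$. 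The integral thereby turns into $i^s\int_{|y|<1}(|y|^2-i0)^{\mu}\,d^dy$, and polar coordinates together with $\int_0^1 r^{d-1+2\mu}\,dr=(d+2\mu)^{-1}$, whose residue at $\mu=-\tfrac d2$ is $\tfrac12$, give $\mathrm{Res}_{\mu=-d/2}(x^2-i0)^{\mu}=\tfrac12\,i^s\,|\mathrm{S}^{d-1}|\,\delta$. Substituting, $\square(x^2-i0)^{1-d/2}=2(2-d)\cdot\tfrac12\,i^s|\mathrm{S}^{d-1}|\,\delta=i^s(2-d)|\mathrm{S}^{d-1}|\,\delta$, which is the assertion. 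The routine parts are the explicit differentiation, the support-plus-homogeneity identification of $L$, and the standard residue structure of $(x^2-i0)^{\mu}$; the main obstacle is the contour deformation in the last step --- verifying that for each of the $s$ time-like variables no singularity is crossed and that one ends on the correct sheet, and tracking the accumulated Jacobian phase $i^s$ --- which is precisely the mechanism producing the $-i0$ dependence and the factor $i^s$ in the final formula.
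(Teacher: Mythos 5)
Your argument is correct and reaches the paper's coefficient $i^s(2-d)|\mathrm{S}^{d-1}|$, but it follows a genuinely different route. The paper's proof is a continuation in the \emph{metric}: it starts from the classical Euclidean identity $\partial_\mu\sqrt{\det g}\,g^{\mu\nu}\partial_\nu (x\cdot_g x)^{1-d/2}=(2-d)|\mathrm{S}^{d-1}|\delta$ for a constant Riemannian $g$, deforms the $s$ eigenvalues from $+1$ to $-1$ through the lower half-plane (which is exactly what matches the $-i\varepsilon$ prescription), and reads off the whole factor from $\sqrt{\det g}\to(-i)^s$ — two lines, no residue calculus. You instead continue in the \emph{exponent}: the explicit computation $\square(x^2-i\eps)^{\lambda}=2\lambda(2\lambda+d-2)(x^2-i\eps)^{\lambda-1}+4\lambda(\lambda-1)i\eps(x^2-i\eps)^{\lambda-2}$ is right, the cancellation at $\lambda=1-\tfrac d2$ and the evaluation ``simple zero of $2\lambda(2\lambda+d-2)$ times simple pole of $(x^2-i0)^{\lambda-1}$'' correctly give $2(2-d)\,\mathrm{Res}_{\mu=-d/2}(x^2-i0)^{\mu}$, and your Wick rotation (counterclockwise, as forced by the $-i0$ prescription) yields the residue $\tfrac12 i^s|\mathrm{S}^{d-1}|\delta$, consistent with the Gelfand--Shilov factor $e^{i\pi s/2}$ for $(P-i0)^{\lambda}$. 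What your route buys is transparency about the mechanism producing $\delta$ and, at no extra cost, the residues at all poles $\mu=-\tfrac d2-k$, which is close to what Appendix C actually needs for $(x^2-i0)^{-d/2-k}$ (the paper instead reduces those to the $k=0$ case by multiplying with $(x^2)^k$); what the paper's route buys is brevity and the fact that the only analytic input is the Newtonian potential. One repair is needed in your last step, which you yourself flag: rotating the contour under the sharp cutoff $|x|<1$ is not literally legitimate (the indicator is not analytic, and the boundary ``arc'' terms graze the light cone), so you should either pair the meromorphic family with an analytic test function of Gaussian type, which is analytic and decaying along the rotated rays, and only then use that the residue at the first pole is a multiple of $\delta$ so any $\phi$ with $\phi(0)=1$ suffices, or simply quote the residue formula for $(P\mp i0)^{\lambda}$ from \cite{GS}; with either fix the proof is complete, and the earlier $\eps$-regularization step (identifying $\lim_{\eps\downarrow0}i\eps(x^2-i\eps)^{-d/2-1}$ as a multiple of $\delta$) becomes redundant, since the meromorphic-family identity already determines the left-hand side.
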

\begin{proof}
Let $g$ be any constant Riemannian metric on $\R^d$. Then
\[
\partial_{\mu}\sqrt{\det g}g^{\mu\nu}\partial_{\nu}\frac{1}{(x^\mu x^\nu g_{\mu\nu})^{\frac{d}{2}-1}}=(2-d)\,|\mathrm{S}^{d-1}|\,\delta(x) \ .
\]
Let 
$g_{\mu\nu}=\mathrm{diagonal}(\underbrace{1,\ldots,1}_{d-s},\underbrace{z,\ldots,z}_s)$. The left hand side is an analytic function of $z$ outside of the negative real axis. We then take the limit $z\to -1$ in the lower halfplane and obtain $\sqrt{\det g}=(-i)^s$ and thus
\[
(-i)^s\square \frac{1}{(x^2-i\varepsilon)^{\frac{d}{2}-1}}=(2-d)\,|\mathrm{S}^{d-1}|\,\delta(x) \ .
\] 
\end{proof}
The first example we want to treat is $\frac{1}{(x^2-i\varepsilon)^{\frac{d}{2}}}$. This is a well defined distribution outside of the origin. Let $t$ be an almost homogeneous extension. Then
\[
\rho\frac{d}{d\rho}\rho^d t(\rho x)|_{\rho=1}=\partial_{\mu}(x^{\mu}t) \ .
\]
But $x^{\mu}t$ is an almost homogeneous extension of the distribution $\frac{x^{\mu}}{(x^2-i\varepsilon)^{\frac{d}{2}}}$ which has degree $d-1$, thus  $x^{\mu}t$ is unique. Moreover
\[
\frac{x^{\mu}}{(x^2-i\varepsilon)^{\frac{d}{2}}}=\frac{1}{2-d}\partial^{\mu}\frac{1}{(x^2-i\varepsilon)^{\frac{d}{2}-1}} \ .
\] 
Thus 
\begin{equation}
\rho\frac{d}{d\rho}\rho^d t(\rho x)|_{\rho=1}=\frac{1}{2-d}\square\frac{1}{(x^2-i\varepsilon)^{\frac{d}{2}-1}}=i^s|\mathrm{S}^{d-1}|\delta(x) \ ,\label{example1}
\end{equation}
and we find
\[c\left(\frac{1}{(x^2-i\varepsilon)^{\frac{d}{2}}}\right)=i^s|\mathrm{S}^{d-1}|\]

To derive an explicit expression for an extension $t$ we set $u\doteq (x^2-i\epsilon)$. We have to find a distribution $F(u)$ such that
$\square_x\,F(u)=u^{-\frac{d}{2}}$. Due to
\[
 \square_x\,F(u)=2d\,F'(u)+4u\,F''(u)=\frac{4}{u^{\frac{d}{2}-1}}\,\frac{d}{du}(u^\frac{d}{2}\,F'(u))
\]
the differential equation can easily be integrated:
\[
F'(u)=\frac{\mathrm{log}(-\kappa^2u)-\frac{1}{\frac{d}{2}-1}}{4\,u^\frac{d}{2}}
\]
where $\kappa^2>0$ is an integration constant. A further integration yields $F(u)$ and we get the extension
\[
t(x)=\square_x\,F(u)=\frac{1}{4-2d}\,\square \frac{\log(-\kappa^2(x^2-i\eps))}{(x^2-i\eps)^{\frac{d}{2}-1}}\ .
\]

We now consider the distributions $\frac{1}{(x^2-i\eps)^{\frac{d}{2}+k}}$, $k\in\NN$. Since these distributions are invariant under the pseudo-orthogonal group $\mathrm{O}(d-s,s)$, the differential operator characterizing the scaling violations of extensions must have the form
\[
c\left(\frac{1}{(x^2-i\eps)^{\frac{d}{2}+k}}\right)=c_k \square^k, \ c_k\in\C \ .
\]
Let $t_k$ be an extension of $\frac{1}{(x^2-i\eps)^{\frac{d}{2}+k}}$. We multiply both sides of the equation
\[
\rho\frac{d}{d\rho}\rho^{d+2k}t_k(\rho x)|_{\rho=1}=c_k\square^k\delta(x)
\]
by $(x^2)^k$. We have $(x^2)^k\square^k\delta(x)=\left(\square^k(x^2)^k\right)\delta$.
A straightforward calculation shows 
\[
\square^k (x^2)^k=2^{2k}\frac{k!(\frac{d}{2}+k-1)!}{(\frac{d}{2}-1)!} \ .
\]
On the left hand side the computation can be traced back to (\ref{example1}),
\[
(x^2)^k\rho\frac{d}{d\rho}\rho^{d+2k}t_k(\rho x)|_{\rho=1}=\rho\frac{d}{d\rho}\rho^{d}t(\rho x)|_{\rho=1} =i^s\,|\mathrm{S}^{d-1}|\,\delta(x) \ ,
\]
thus $c_k=i^s\,|\mathrm{S}^{d-1}|\,\frac{(\frac{d}{2}-1)!}{2^{2k}k!(\frac{d}{2}+k-1)!}$.\\
\\
\\
\end{appendix}


{\bf Acknowledgments.}  Part of this work was done at the Erwin-Schr\"odinger Institute, Vienna, during an acticity on new developments in perturbative quantum field theory. In particular we gratefully acknowledge enlightening discussions with Hanno Gottschalk, Christoph Kopper and Manfred Salmhofer on the method of Flow Equations. M. D. also thanks Eberhard Zeidler and the Max Planck Institute for Mathematics in the Sciences, Leipzig, for support.
M. D. was also supported by the German Research Foundation (Deutsche Forschungsgemeinschaft (DFG)) through the Institutional 
Strategy of the University of G\"ottingen.

\bibliographystyle{my-h-elsevier}

\end{document}